\documentclass{article}
\usepackage{arxiv}

\usepackage[utf8]{inputenc}
\usepackage[T1]{fontenc}
\usepackage[authoryear]{natbib}
\usepackage[colorlinks,citecolor=blue,urlcolor=blue]{hyperref}
\usepackage{amsmath}
\usepackage{amssymb}
\usepackage{amsthm}
\usepackage{mathtools}
\usepackage{mathrsfs}
\usepackage{graphicx}
\usepackage{booktabs}
\usepackage{enumitem}
\usepackage[capitalise,noabbrev]{cleveref}

\newcommand{\RR}{\mathbb{R}}
\newcommand{\EE}{\mathbb{E}}
\newcommand{\so}{\mathfrak{so}}

\newcommand{\hol}{\mathfrak{hol}}
\DeclareMathOperator{\col}{col}
\DeclareMathOperator{\diag}{diag}
\DeclareMathOperator{\rank}{rank}
\DeclareMathOperator{\tr}{tr}
\DeclareMathOperator{\Sym}{Sym}
\DeclareMathOperator{\sym}{sym}
\DeclareMathOperator{\skewop}{skew}
\DeclareMathOperator{\im}{im}
\DeclareMathOperator{\Hol}{Hol}
\DeclareMathOperator{\Lie}{Lie}
\DeclareMathOperator{\Ad}{Ad}
\DeclareMathOperator{\spn}{span}

\crefname{theorem}{Theorem}{Theorems}
\crefname{proposition}{Proposition}{Propositions}
\crefname{corollary}{Corollary}{Corollaries}
\crefname{conjecture}{Conjecture}{Conjectures}
\crefname{definition}{Definition}{Definitions}
\crefname{equation}{}{}
\crefformat{equation}{(#2#1#3)}
\crefrangeformat{equation}{(#3#1#4) to~(#5#2#6)}
\crefname{figure}{Figure}{Figures}
\crefname{table}{Table}{Tables}
\crefname{section}{Section}{Sections}

\theoremstyle{plain}
\newtheorem{theorem}{Theorem}
\newtheorem{proposition}{Proposition}
\newtheorem{corollary}{Corollary}
\newtheorem{conjecture}{Conjecture}

\theoremstyle{definition}
\newtheorem{definition}{Definition}

\theoremstyle{remark}
\newtheorem*{remark}{Remark}

\title{Random Dot Product Graphs as Dynamical Systems: \\
  Limitations and Opportunities}

\author{Giulio Valentino Dalla Riva \\
  Baffelan.com \\
  \texttt{me@gvdallariva.net}}

\renewcommand{\shorttitle}{RDPG as Dynamical Systems}
\renewcommand{\headeright}{}
\renewcommand{\undertitle}{}

\begin{document}
\maketitle

\begin{abstract}
Across ecology, history, economics, social behavior, cultural dynamics, and more, many phenomena can be described in terms of entities establishing and disestablishing interactions with each other. These scenarios are commonly represented mathematically as temporal networks, and the time evolution of these objects is studied as a time series with the goal of predicting future network states. Here, instead, we take a dynamical systems perspective: when the goal is to understand \emph{why} networks evolve as they do, can we learn the differential equations that govern them?
We investigate this question within the framework of Random Dot Product Graphs (RDPGs), where each network snapshot is generated from latent positions evolving under unknown dynamics. We identify three fundamental obstructions to recovering these dynamics: gauge freedom from the rotational ambiguity in latent positions, realizability constraints from the manifold structure of the probability matrix, and the trajectory recovery problem arising from spectral embedding artifacts.
We develop a geometric framework based on principal fiber bundles that formalizes these obstructions and reveals their interplay. For horizontal families, polynomial dynamics have trivial holonomy, while Laplacian dynamics satisfy an explicit non-commutativity criterion for nontrivial holonomy; in $d = 2$ this yields full restricted holonomy $\mathrm{SO}(2)$, in general $d$ we provide a conditional full-holonomy criterion, and for $d \ge 3$ the generic full $\mathrm{SO}(d)$ statement remains conjectural. Cram\'er--Rao lower bounds show that the spectral gap controlling geometric difficulty simultaneously controls statistical difficulty, an inextricable duality.
We establish an identifiability principle stating that symmetric dynamics cannot absorb skew-symmetric gauge contamination. Yet, we show that significant practical obstacles remain in finite samples. We frame the gap between identifiability and practical recovery as an open challenge and discuss directions for progress.
\end{abstract}

\keywords{random dot product graphs \and spectral embedding \and dynamical systems \and principal fiber bundles \and gauge freedom}

\section{Introduction}\label{sec:intro}

Temporal networks are networks whose edges and nodes change over time. They appear throughout science, from ecological food webs that vary across space and time~\citep{poisot2015species} to neural connectomes that rewire during development.
RDPGs have proven particularly useful for such networks: \citet{dallariva2016exploring} first applied RDPG to ecological networks, showing that latent positions capture evolutionary signatures in food webs; subsequent work has used RDPG embeddings to predict trophic interactions~\citep{strydom2022food}, while the survey by \citet{athreya2017statistical} demonstrates applications to connectome analysis and social network inference.
A fundamental question is: \textbf{what differential equations govern the evolution of network structure?}

This paper investigates that question within the framework of Random Dot Product Graphs (RDPGs)~\citep{athreya2017statistical,young2007random}.
In an RDPG, each node $i$ has a latent position $x_i \in \RR^d$, and the probability of an edge between nodes $i$ and $j$ is $P_{ij} = x_i^\top x_j$.
When these positions evolve according to some dynamics $\dot{X} = f(X)$, the network structure changes accordingly.
Our goal is to recover $f$ from observations.

This approach is appealing for several reasons.
First, the latent space provides a continuous representation where dynamics can be modeled with standard tools from dynamical systems theory.
Second, interpretable dynamics in $X$-space (e.g., attraction, repulsion, diffusion) translate to interpretable changes in network structure.
Third, the RDPG framework has well-developed statistical theory connecting latent positions to spectral properties of observed adjacency matrices.

However, we identify \textbf{three fundamental obstructions} to learning dynamics from RDPG observations:

\begin{enumerate}
\item \textbf{Gauge freedom} (\cref{sec:gauge-freedom}): The latent positions are determined only up to orthogonal transformation (an orthogonal non-identifiability). Indeed, $X$ and $XQ$ produce identical networks for any $Q \in O(d)$. This means some dynamics are \emph{invisible}: they change $X$ without changing observable network structure.

\item \textbf{Realizability constraints} (\cref{sec:realizable}): The probability matrix $P = XX^\top$ lives on a low-dimensional manifold. Not every symmetric perturbation $\dot{P}$ is achievable; those that would increase the rank of $P$ are forbidden.

\item \textbf{Recovering trajectories from embeddings} (\cref{sec:trajectory-problem}): In the RDPG pipeline, we do not observe $X(t)$ or $P(t)$. We observe adjacency matrices and then estimate $X(t)$ via ASE. This estimation step necessarily introduces a time-dependent gauge choice. Even when the underlying $X(t)$ is smooth, the estimated embeddings can jump erratically because the eigendecomposition picks a representative of the $O(d)$ fiber at each time step. This is not something one can fix by taking smaller time steps. It is a structural artifact of eigenvector indeterminacy coupled with estimation from noisy graphs.
\end{enumerate}

We show that existing approaches fail to address these obstructions.
Joint embedding methods like UASE~\citep{gallagher2021spectral} and Omnibus~\citep{levin2017central} assume generative models incompatible with ODE dynamics on latent positions. In fact, they model time-varying ``activity'' against fixed ``identity,'' not genuinely evolving positions.
Bayesian approaches with hierarchical smoothness priors~\citep{loyal2025} produce smooth trajectories but lack \emph{dynamical consistency}: they interpolate well but don't enforce that velocity depends on state according to $\dot{X} = f(X)$.
Pairwise Procrustes alignment is local and cannot ensure global consistency.

A related strategy, pursued by \citet{athreya2024euclidean}, avoids gauge freedom by defining a Procrustes-aligned distance $d_{MV}$ between network states at different times and using classical multidimensional scaling to extract a low-dimensional curve (the ``mirror'') that summarizes network evolution.
Their $d_{MV}$ is gauge-invariant and can be consistently estimated from adjacency spectral embeddings. It is well suited to change-point detection and visualization, but captures the \emph{distance} between network states rather than the \emph{differential equation} governing their evolution, which is the target of the present paper.

Our main contributions are theoretical.
We develop a geometric framework based on principal fiber bundles that formalizes gauge freedom, realizability, and trajectory recovery as distinct but interrelated obstructions.
We prove that random gauge artifacts introduce skew-symmetric contamination that cannot be absorbed by symmetric dynamics (\cref{thm:gauge-contamination}), establishing an identifiability principle.
We establish a concrete contrast among horizontal dynamics families: polynomial dynamics have commuting generators and trivial holonomy, so gauge alignment is purely a statistical problem; for Laplacian dynamics, we prove a local commutator criterion for nontrivial holonomy and a full restricted $\mathrm{SO}(2)$ consequence in dimension $d = 2$, provide a conditional full-holonomy criterion in general $d$, and leave generic full $\mathrm{SO}(d)$ for $d \ge 3$ as a conjecture.
We derive Cram\'er--Rao lower bounds showing that the spectral gap controlling curvature simultaneously controls Fisher information, so geometric and statistical difficulty are inextricable.
However, we show that exploiting these structures in practice faces fundamental difficulties: the bias induced by finite samples and the expressiveness of natural dynamics families make the constructive problem substantially harder than the identifiability theory suggests.
We frame the gap between identifiability and practical recovery as an open problem and discuss directions for progress.

\textbf{Paper outline.}
\Cref{sec:rdpg} reviews RDPG fundamentals.
\Cref{sec:obstructions} develops the geometric framework: gauge freedom, realizability, and the fiber bundle perspective with connections, curvature, and holonomy.
\Cref{sec:trajectory-problem} addresses the practical problem of recovering trajectories from spectral embeddings, showing why existing methods fail and motivating the need for dynamics-aware approaches.
\Cref{sec:dynamics} catalogs concrete dynamics families, analyzes their observable consequences through the Lyapunov equation, and formalizes the polynomial-vs-Laplacian holonomy contrast with explicit proved/conditional/conjectural status.
\Cref{sec:info-theoretic} derives information-theoretic lower bounds revealing the statistical-geometric duality.
\Cref{sec:constructive} discusses the constructive problem: the identifiability principle, its algorithmic implications, the practical obstacles that remain, and a tractable special case using anchor nodes, illustrated with numerical experiments.
\Cref{sec:discussion} reflects on achievements and open problems.

\textbf{Notation.}
$X \in \RR^{n \times d}$ denotes the matrix of latent positions with rows $x_i^\top$.
$\RR_*^{n \times d} = \{X \in \RR^{n \times d} : \rank(X) = d\}$ denotes the set of full-rank $n \times d$ matrices.
$P = XX^\top$ is the probability matrix.
$O(d)$ is the orthogonal group; $\so(d)$ is its Lie algebra of skew-symmetric matrices.
$\hat{X}$ denotes an estimate (e.g., from spectral embedding).
$\|\cdot\|_F$ is the Frobenius norm; $\|M\|_{2 \to \infty} = \max_i \|e_i^\top M\|_2$ is the two-to-infinity norm, measuring the largest row norm of $M$.
$O_p(\cdot)$ denotes stochastic order: $Y_n = O_p(a_n)$ means $Y_n / a_n$ is bounded in probability.
$\Sym(d)$ denotes $d \times d$ symmetric matrices.
Indices $i, j$ refer to nodes; indices $\iota, \gamma$ refer to eigenvector directions when working in a spectral basis.

\textbf{Standing assumptions (default throughout).}
Unless a result states otherwise, we work under:
\begin{enumerate}
\item[\textbf{(S1)}] fixed dimensions with $n \ge d \ge 2$;
\item[\textbf{(S2)}] trajectories on a time interval where $X(t)$ remains in the interior of $\mathcal{E}$ (hence $\rank(X(t)) = d$ and $0 < P_{ij}(t) < 1$);
\item[\textbf{(S3)}] sufficient smoothness of trajectories/fields for the stated derivatives and Lie brackets;
\item[\textbf{(S4)}] for statistical bounds, the regularity assumptions are those listed in \cref{prop:cramer-rao}.
\end{enumerate}

\begin{table}[ht]
\centering
\begin{tabular}{lll}
\toprule
\textbf{Result} & \textbf{Status} & \textbf{Scope} \\
\midrule
\Cref{thm:invisible} & Proved & Full-rank latent states \\
\Cref{prop:curvature-criterion} & Proved & Horizontal symmetric families \\
\Cref{prop:laplacian-holonomy} & Proved & Local nonzero commutator criterion \\
\Cref{cor:linear-fisher} & Proved & Linear dynamics \\
\Cref{prop:cramer-rao} & Proved & Polynomial dynamics + regularity assumptions \\
\Cref{thm:gauge-contamination} & Proved & Symmetric-generator identifiability \\
Full restricted holonomy via curvature span & Conditional theorem & Requires curvature-span hypothesis \\
Finite-time rank criterion at base point & Conditional proposition & Sufficient condition for full restricted holonomy \\
\Cref{conj:full-holonomy} & Conjecture & $d \ge 3$ generic full holonomy \\
\bottomrule
\end{tabular}
\caption{Result status map: proved results vs conditional criteria vs conjecture.}
\label{tab:result-status}
\end{table}

We use \cref{tab:result-status} as the authoritative status legend for theorem claims throughout the manuscript.

\section{Random Dot Product Graphs}\label{sec:rdpg}

We begin with the RDPG framework for undirected networks.
The extension to directed graphs appears in \cref{app:directed}.

\subsection{Latent positions and connection probabilities}

In a Random Dot Product Graph, each node $i \in \{1, \ldots, n\}$ is associated with a latent position $x_i \in \RR^d$.
The probability of an edge between nodes $i$ and $j$ is:
\begin{equation}
P_{ij} = x_i^\top x_j
\end{equation}

For this to be a valid probability, we need $P_{ij} \in [0, 1]$ for all pairs.
A sufficient condition is that all positions lie in the positive orthant of the unit ball:
\begin{equation}
B_+^d = \{x \in \RR^d : x_k \ge 0 \text{ for all } k, \quad \|x\| \le 1\}
\end{equation}
The non-negativity ensures $P_{ij} \ge 0$, and Cauchy--Schwarz gives $P_{ij} \le \|x_i\| \|x_j\| \le 1$.

Collecting positions into a matrix $X \in \RR^{n \times d}$ with rows $x_i^\top$, the probability matrix is:
\begin{equation}
P = XX^\top
\end{equation}
This is symmetric and positive semidefinite with rank at most $d$.
Throughout this paper, we restrict attention to configurations where all edge probabilities lie in the open interval: $0 < P_{ij} < 1$ for all $i, j$.
This excludes boundary cases (deterministic edges or guaranteed non-edges) that would create singularities in the Fisher information and complications in the differential geometry; see \cref{sec:fiber-bundle} for details.

Given latent positions, edges are drawn independently:
\begin{equation}
A_{ij} \sim \text{Bernoulli}(P_{ij}) \quad \text{for } i \le j
\end{equation}
with $A_{ji} = A_{ij}$ (undirected) and $A_{ii} \sim \text{Bernoulli}(P_{ii})$, so $\EE[A] = P$ exactly.
In some applications it is common to disregard self-links by setting $A_{ii} = 0$, yielding $\EE[A] = P - \diag(P)$.
This hollows the diagonal and introduces a systematic off-manifold bias that complicates the spectral theory (the perturbation $\diag(P)$ has spectral norm at most 1, comparable to sampling noise for moderate $n$).
Our geometric and statistical arguments extend to this setting by treating the missing/zeroed diagonal as an additional structured perturbation term; we avoid it in the main text to keep focus and keep $\EE[A]$ exactly on the rank-$d$ manifold $\{XX^\top\}$.
(If desired, the hollow-diagonal extension can be carried out in an appendix by explicitly tracking the extra diagonal perturbation through the embedding and alignment steps.)

\subsection{Adjacency spectral embedding}\label{sec:ase}

Given an observed adjacency matrix $A$, we estimate latent positions via \textbf{adjacency spectral embedding} (ASE)~\citep{athreya2017statistical}.

Since $A$ is symmetric, it has an eigendecomposition $A = U \Lambda U^\top$ with real eigenvalues $\lambda_1 \ge \lambda_2 \ge \cdots \ge \lambda_n$ and orthonormal eigenvectors.
The rank-$d$ ASE is:
\begin{equation}
\hat{X} = U_d |\Lambda_d|^{1/2}
\end{equation}
where $U_d \in \RR^{n \times d}$ contains the $d$ leading eigenvectors (by eigenvalue magnitude) and $\Lambda_d = \diag(\lambda_1, \ldots, \lambda_d)$.
We take absolute values because $A$ can have negative eigenvalues due to sampling noise, even though the true $P$ is positive semidefinite.
In the dense regime where $\lambda_d(P) = \Omega(\sqrt{n})$, Weyl's inequality guarantees that the top $d$ eigenvalues of $A$ are positive with high probability, so the absolute value has no effect asymptotically.
We retain it for definiteness at finite $n$.

\begin{remark}
For symmetric matrices, eigendecomposition and SVD are closely related: if $A = U \Lambda U^\top$ (eigen) and $A = \tilde{U} \Sigma \tilde{V}^\top$ (SVD), then $\tilde{U} = \tilde{V} = U$ (up to signs) and $\sigma_i = |\lambda_i|$.
We use eigendecomposition for ASE (the standard in RDPG literature) but SVD for Procrustes alignment (the standard for that problem).
\end{remark}

\textbf{Statistical properties.}
The statistical properties of RDPG are well studied, and here we remind some fundamental results from the literature. Under mild conditions, ASE is consistent: as $n \to \infty$, the rows of $\hat{X}$ converge to the true latent positions (up to orthogonal transformation) at rate $O(1/\sqrt{n})$~\citep{athreya2016limit,athreya2017statistical}.
Specifically, there exists $Q \in O(d)$ such that $\max_i \|\hat{x}_i - x_i Q\| = O_p(1/\sqrt{n})$, and conditionally on each latent position, $\sqrt{n}(\hat{x}_i - Qx_i)$ converges to a Gaussian whose covariance depends on the edge variance profile~\citep{athreya2016limit,rubindelanchy2022statistical}.
The perturbation expansion $\hat{X} = XW + (A - P)X(X^\top X)^{-1}W + R$, where $W \in O(d)$ and $\|R\|_{2 \to \infty} = o(n^{-1/2})$, shows that the leading error term is linear in $A - P$ and has conditional mean zero~\citep{cape2019two}.
ASE is asymptotically unbiased but not fully efficient for individual latent positions; the one-step procedure of \citet{xie2021efficient} achieves what they term \emph{local efficiency}: for each vertex, the estimator's asymptotic covariance matches that of the oracle MLE that knows all other latent positions, up to an orthogonal transformation.
For stochastic blockmodel parameters, the spectral estimator is itself asymptotically efficient~\citep{tang2022efficient}.
The eigenvalues of $A$ are also asymptotically normal about those of $P$, with an $O(1)$ bias from the Bernoulli variance~\citep{tang2018eigenvalues}.
This bias is negligible relative to the leading eigenvalue ($\lambda_1 \sim n$), but can be comparable to the spectral gap $\delta = \lambda_d$ when the gap is small, and is of the same order as the eigenvalue fluctuations ($O(\sqrt{n})$); for dynamics estimation, what matters is the bias relative to the \emph{changes} in eigenvalues across time steps, a point we revisit in \cref{sec:info-theoretic}.

The orthogonal matrix $Q$ reflects the fundamental gauge ambiguity: eigenvectors are determined only up to sign, and rotations within repeated eigenspaces are arbitrary.
This ambiguity is unavoidable and is the source of the trajectory estimation problem discussed in \cref{sec:trajectory-problem}.


\section{The Gauge Obstruction}\label{sec:obstructions}

We analyze the fundamental obstructions to learning dynamics from RDPG observations, characterizing what is and is not learnable from a theoretical perspective.

\subsection{Gauge freedom and observability}\label{sec:gauge-freedom}

The latent positions $X$ are not uniquely determined by the probability matrix $P$.
For any orthogonal matrix $Q \in O(d)$:
\begin{equation}
(XQ)(XQ)^\top = XQQ^\top X^\top = XX^\top = P
\end{equation}

Thus $X$ and $XQ$ produce identical connection probabilities.
This $O(d)$ symmetry is the \textbf{gauge freedom} of RDPG: the equivalence class $[X] = \{XQ : Q \in O(d)\}$ corresponds to a single observable network structure.

From a geometric point of view, a global rotation of all positions in latent space leaves all pairwise angles and magnitudes unchanged, hence all dot products are preserved.

This gauge freedom has profound implications for learning dynamics.
If $X$ and $XQ$ are observationally indistinguishable, then any dynamics moving along the equivalence class, that is, rotating all positions by a common time-varying orthogonal transformation, produces \textbf{no observable change} in the network.

Consider dynamics $\dot{X} = f(X)$ on the latent positions.
The induced dynamics on the probability matrix $P = XX^\top$ follow from the product rule:
\begin{equation}
\dot{P} = \dot{X} X^\top + X \dot{X}^\top = f(X) X^\top + X f(X)^\top
\end{equation}

\begin{definition}[Observable and Invisible Dynamics]
A vector field $f: \RR^{n \times d} \to \RR^{n \times d}$ produces \textbf{observable dynamics} if $\dot{P} \neq 0$.
Otherwise, the dynamics are \textbf{invisible}: the latent positions change but the network structure remains static.
\end{definition}

The definition calls for a natural question: which dynamics are invisible?

\begin{theorem}[Characterization of Invisible Dynamics]\label{thm:invisible}
For $X \in \RR_*^{n \times d}$ (i.e., $\rank(X) = d$), a vector field $f$ produces invisible dynamics ($\dot{P} = 0$) if and only if $f(X) = XA$ for some skew-symmetric matrix $A \in \so(d)$.
\end{theorem}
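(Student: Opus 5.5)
The plan is to prove the two directions separately, writing $V = f(X) \in \RR^{n \times d}$ so that invisibility becomes the linear condition $VX^\top + XV^\top = 0$.

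\emph{Sufficiency.} This is a one-line check. If $V = XA$ with $A^\top = -A$, then
\[
VX^\top + XV^\top = XAX^\top + XA^\top X^\top = X(A + A^\top)X^\top = 0 .
\]

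\emph{Necessity.} Here the full-rank hypothesis is used to convert $V$ into $d\times d$ coordinates relative to $X$. Since $\rank(X) = d$, the Gram matrix $G = X^\top X$ is invertible, and $X$ has a left inverse $X^+ = G^{-1}X^\top$ with $X^+X = I_d$. Decompose $V$ along $\col(X)$ and its orthogonal complement:
\[
V = XA + V_\perp, \qquad A = X^+V, \qquad X^\top V_\perp = 0 .
\]
The argument then has two steps.
\begin{enumerate}
\item Show $V_\perp = 0$. Multiply $VX^\top + XV^\top = 0$ on the right by $X$. This gives $XAG + V_\perp G + XV^\top X = 0$. Project onto the orthogonal complement of $\col(X)$ with $I - XG^{-1}X^\top$. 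Every term lying in $\col(X)$ is killed, leaving $V_\perp G = 0$. Since $G$ is invertible, $V_\perp = 0$.
\item Show $A$ is skew. With $V = XA$, the condition reads $X(A + A^\top)X^\top = 0$. Apply $X^+$ on the left and $(X^+)^\top$ on the right to get $A + A^\top = 0$, so $A \in \so(d)$.
\end{enumerate}

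The only step that needs care is the first one, ruling out a component of $f(X)$ orthogonal to $\col(X)$. This is exactly where full rank is essential, through the invertibility of $G$. If $\rank X < d$, the decomposition is not unique, and the statement would need to be phrased modulo $\ker X$. Everything else is routine linear algebra. I would also note that $A$ may depend on $X$, i.e.\ $f(X) = XA(X)$ pointwise, which matches the pointwise reading of the statement.
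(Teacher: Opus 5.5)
Your proof is correct and follows the paper's argument. You project the identity onto $\col(X)^\perp$ with $I - XG^{-1}X^\top$ to force $f(X) \in \col(X)$, then substitute $f(X) = XA$ and use full column rank to conclude $A + A^\top = 0$. Right-multiplying by $X$ and cancelling the invertible $G$ is just an explicit version of the paper's appeal to the full row rank of $X^\top$.
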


\begin{proof}
($\Leftarrow$) If $f(X) = XA$ with $A^\top = -A$, then:
\begin{equation}
\dot{P} = f(X) X^\top + X f(X)^\top = XAX^\top + XA^\top X^\top = XAX^\top - XAX^\top = 0
\end{equation}

($\Rightarrow$) Suppose $\dot{P} = f(X) X^\top + X f(X)^\top = 0$.
Let $G = X^\top X$, which is invertible since $\rank(X)=d$.
Left-multiply the identity by $(I - XG^{-1}X^\top)$ to project onto $\col(X)^\perp$:
\begin{equation}
0 = (I - XG^{-1}X^\top) f(X) X^\top
\end{equation}
Since $X^\top$ has full row rank, this implies
\begin{equation}
(I - XG^{-1}X^\top) f(X) = 0
\end{equation}
hence $f(X) \in \col(X)$ and there exists $B \in \RR^{d \times d}$ such that $f(X) = XB$.

Substituting back into $\dot{P}=0$ gives
\begin{equation}
0 = X(B + B^\top) X^\top
\end{equation}
and since $X$ has full column rank, this forces $B + B^\top = 0$, i.e.\ $B \in \so(d)$.
Taking $A = B$ completes the proof.
\end{proof}

We can readily interpret the result: invisible dynamics are exactly uniform rotations around the origin in latent space; all other dynamics (attraction, repulsion, non-uniform rotation, drift, \ldots) produce observable changes in network structure.

This implies that the class of invisible dynamics is small (dimension $d(d-1)/2$, the dimension of $\so(d)$), while observable dynamics span a much larger space.

\subsection{Realizable dynamics}\label{sec:realizable}

Beyond gauge freedom, RDPG dynamics face a geometric constraint: the probability matrix $P = XX^\top$ lives on a low-dimensional manifold, so most (symmetric) perturbations $\dot{P}$ are not achievable by dynamics of $X$.

\begin{proposition}[Tangent Space Constraint]\label{prop:tangent}
Let $V \in \RR^{n \times d}$ be an orthonormal basis for $\col(P)$, and $V_\perp \in \RR^{n \times (n-d)}$ span its orthogonal complement.
Any realizable $\dot{P}$ (i.e., $\dot{P} = \dot{X} X^\top + X \dot{X}^\top$ for some $\dot{X}$) satisfies:
\begin{equation}
V_\perp^\top \dot{P} V_\perp = 0
\end{equation}
The realizable tangent space has dimension $nd - d(d-1)/2$.
\end{proposition}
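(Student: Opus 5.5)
The plan has two parts: verify the constraint directly, then compute the dimension of the image of the linear map $L_X:\dot X\mapsto \dot X X^\top + X\dot X^\top$ by rank--nullity. Throughout, $X\in\RR_*^{n\times d}$, so $\col(P)=\col(X)$; this holds because $P=XX^\top$ and $X^\top$ has full row rank. It follows that $V_\perp^\top X=0$.

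\textbf{The constraint.} Conjugating a realizable $\dot P$ gives
\[
V_\perp^\top \dot P V_\perp = (V_\perp^\top \dot X)(X^\top V_\perp) + (V_\perp^\top X)(\dot X^\top V_\perp) = 0,
\]
since both terms contain the factor $V_\perp^\top X=0$ or its transpose. This part is immediate.

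\textbf{The dimension.} The map $L_X:\RR^{n\times d}\to\Sym(n)$ is linear. By \cref{thm:invisible}, applied to constant vector fields $f\equiv\dot X$, its kernel is exactly $\{XA: A\in\so(d)\}$. The map $A\mapsto XA$ is injective because $X$ has full column rank, so the kernel has dimension $d(d-1)/2$. Rank--nullity then gives
\[
\dim \im L_X = nd - d(d-1)/2 .
\]
That image is precisely the realizable tangent space.

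\textbf{Consistency check and the main subtlety.} I would cross-check this count by characterizing the image fully. Write symmetric matrices in the block basis $(V,V_\perp)$ as $\begin{pmatrix} S_{11} & S_{12}\\ S_{12}^\top & S_{22}\end{pmatrix}$. The claim is that $\im L_X=\{S\in\Sym(n): S_{22}=0\}$. This set has dimension $d(d+1)/2 + d(n-d)$, which equals $nd - d(d-1)/2$, matching rank--nullity.

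For surjectivity onto this set, write $X=VM$ with $M$ invertible and decompose $\dot X = VB + V_\perp C$. Then
\[
L_X(\dot X)=V(BM^\top+MB^\top)V^\top + V_\perp C M^\top V^\top + V M C^\top V_\perp^\top .
\]
The off-diagonal block $CM^\top$ is arbitrary because $M$ is invertible. The diagonal block $BM^\top+MB^\top$ covers all of $\Sym(d)$: taking $B=\tfrac12 S_{11}M^{-\top}$ yields any prescribed $S_{11}$.

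The only real care point is the kernel identification, that is, checking that \cref{thm:invisible} is stated pointwise and so applies to an arbitrary tangent vector $\dot X$. Everything else is bookkeeping.
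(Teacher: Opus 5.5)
Your proof is correct. The constraint half matches the paper's argument. The paper writes $X = VR$ with $R$ invertible and kills both terms using $V^\top V_\perp = 0$, which is exactly your observation $V_\perp^\top X = 0$.

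The difference is in the dimension claim. The paper's proof stops after the constraint and never proves the count $nd - d(d-1)/2$. It is only justified afterwards and informally, by the Dimension Count corollary (``$nd$ minus the gauge freedom $d(d-1)/2$'') and by the unproved remark that the range--range and cross blocks ``can be arbitrary.'' The block count $d(d+1)/2 + d(n-d)$ reappears in \cref{prop:lyapunov-invert}.

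Your rank--nullity argument is the rigorous version of the first heuristic. The kernel of $L_X$ is exactly the vertical space $\{XA : A \in \so(d)\}$ by \cref{thm:invisible}. That proof is purely algebraic at a fixed $X$, so your care point is settled. Your explicit surjectivity computation proves the second, unproved assertion.

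The surjectivity computation gives you more than the dimension. It shows that $V_\perp^\top \dot P V_\perp = 0$ is not only necessary but also sufficient for realizability. So the realizable tangent space is exactly $\{S \in \Sym(n) : V_\perp^\top S V_\perp = 0\}$. This set depends only on $P$, not on the chosen lift $X$, consistent with $L_{XQ}(Z) = L_X(ZQ^\top)$.

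Either half of your dimension argument suffices on its own; together they form a genuine consistency check.
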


\begin{proof}
Any realizable $\dot{P} = FX^\top + XF^\top$ for some $F \in \RR^{n \times d}$.
Since $\col(X) = \col(V)$, we have $X = VR$ for invertible $R$.
Then:
\begin{equation}
V_\perp^\top \dot{P} V_\perp = V_\perp^\top F R^\top V^\top V_\perp + V_\perp^\top V R F^\top V_\perp = 0
\end{equation}
using $V^\top V_\perp = 0$.
\end{proof}

We can better understand the dynamics by considering a \textbf{block decomposition} of the generating matrices. Observe that any symmetric matrix $M$ decomposes into blocks relative to $(V, V_\perp)$:
\begin{equation}
M = \underbrace{VAV^\top}_{\text{range-range}} + \underbrace{VBV_\perp^\top + V_\perp B^\top V^\top}_{\text{cross terms}} + \underbrace{V_\perp C V_\perp^\top}_{\text{null-null}}
\end{equation}

For realizable $\dot{P}$: the $A$ and $B$ blocks can be arbitrary, but $C = 0$ always.
The null-null block represents directions supported entirely on $\col(P)^\perp$, hence directions that cannot arise from $\dot{P} = FX^\top + XF^\top$ and would generically increase the rank of $P$.
Movements in these directions are forbidden for \emph{fixed} latent dimension $d$ (but they might be achievable for networks that start rank deficient and grow in dimension).

\begin{corollary}[Dimension Count]
For each probability matrix $P$ in the base space $\mathcal{B}$, the set of infinitesimal realizable perturbations $\dot{P}$ forms the tangent space at $P$, and its dimension is:
\begin{equation}
\dim(T_P \mathcal{B}) = nd - d(d-1)/2
\end{equation}
This equals the dimension of $X$-space ($nd$) minus the gauge freedom ($d(d-1)/2$).
\end{corollary}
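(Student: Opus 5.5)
The Dimension Count corollary reduces to computing the rank of the linear map $L_X:\RR^{n\times d}\to\Sym(n)$, $L_X(F)=FX^\top+XF^\top$. By definition, the realizable perturbations at $P=XX^\top$ are exactly $\im L_X$. Rank--nullity gives $\dim \im L_X = nd-\dim\ker L_X$. \Cref{thm:invisible}, applied with $f(X)=F$, says $\ker L_X=\{XA: A\in\so(d)\}$. Since $X$ has full column rank, the map $A\mapsto XA$ is injective, so $\dim\ker L_X=d(d-1)/2$. Hence $\dim\im L_X=nd-d(d-1)/2$, which also settles the dimension claim left unproved in \cref{prop:tangent}.

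To check this independently, I will use the block decomposition relative to $(V,V_\perp)$. Write $X=VR$ with $R$ invertible, and split $F=VF_1+V_\perp F_2$. Then
\[
L_X(F)=V(F_1R^\top+RF_1^\top)V^\top+V_\perp F_2R^\top V^\top+VRF_2^\top V_\perp^\top .
\]
The cross block $F_2R^\top$ ranges over all of $\RR^{(n-d)\times d}$, contributing $(n-d)d$ dimensions. The range-range block $F_1R^\top+RF_1^\top=M+M^\top$ with $M=F_1R^\top$ ranges over all of $\Sym(d)$, contributing $d(d+1)/2$ dimensions. The total is $(n-d)d+d(d+1)/2=nd-d(d-1)/2$. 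It also shows that $\im L_X$ is exactly the set of symmetric matrices with vanishing null-null block. This matches the statement that $C=0$ is the only constraint.

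The remaining step is to identify $\im L_X$ with $T_P\mathcal{B}$, where $\mathcal{B}=\{XX^\top: X\in\RR_*^{n\times d}\}$. The plan is to show that $\pi:X\mapsto XX^\top$ is a smooth map of constant rank $nd-d(d-1)/2$ on the open set $\RR_*^{n\times d}$. The constant-rank claim is exactly the computation above, done at every $X$.

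I expect this step to be the main subtlety, because one must ensure that $\mathcal{B}$ is an embedded submanifold and not merely an immersed image. I would handle it in one of two ways:
\begin{itemize}
\item Note that $\pi$ is the quotient by the free, proper right action of $O(d)$, and invoke the quotient manifold theorem.
\item Cite the standard fact that the fixed-rank PSD matrices form an embedded submanifold of $\Sym(n)$.
\end{itemize}
Either way, the constant rank theorem then gives $T_P\mathcal{B}=\im d\pi_X=\im L_X$ for any $X\in\pi^{-1}(P)$. Finally, the decomposition $nd = \dim T_P\mathcal{B} + \dim\so(d)$ is precisely the fiber-dimension interpretation stated in the corollary.
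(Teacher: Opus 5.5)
Your proof is correct, but your main route differs from the paper's. The paper gives no separate proof of this corollary. The count comes from the block decomposition after \cref{prop:tangent}: it asserts that the range-range block ($d(d+1)/2$ entries) and the cross block ($d(n-d)$ entries) are free while the null-null block vanishes. The phrase ``$nd$ minus the gauge freedom'' is offered only as an interpretation, and $T_P\mathcal{B}$ is simply \emph{defined}, in the tangent-space notation remark, as the set of realizable $\dot P$.

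Your primary argument turns that interpretation into the proof. You apply rank--nullity to $L_X$ and take $\ker L_X = X\,\so(d)$ from \cref{thm:invisible}. This ties the count directly to $\mathcal{V}_X=\ker d\pi_X$ and would transfer unchanged to any free group action.

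Your secondary block computation is essentially the paper's argument, done more carefully. The paper's proof of \cref{prop:tangent} only establishes the containment (null-null block $=0$); surjectivity onto the two free blocks appears only implicitly later, in the horizontal lift construction. Your observation that $F_2R^\top$ and $F_1R^\top+RF_1^\top$ independently sweep out $\RR^{(n-d)\times d}$ and $\Sym(d)$ supplies that surjectivity. It also yields the exact description of the image, which is what the horizontal lift formula and \cref{prop:lyapunov-invert} actually use.

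Your final manifold step goes beyond what the paper needs, since the paper takes $T_P\mathcal{B}=\im L_X$ by definition. If you include it, note two things.
\begin{itemize}
\item Embeddedness is not needed for the tangent-space identification. The quotient manifold theorem already gives an injective immersion $\RR_*^{n\times d}/O(d)\to\Sym(n)$ whose differential has image $\im L_X$.
\item The quotient manifold theorem does not by itself give embeddedness. For that you need a properness check: preimages of compact subsets of the rank-$d$ PSD set are compact, because $\|X\|_F^2=\tr(XX^\top)$. Alternatively, use your second option of citing the fixed-rank PSD manifold.
\end{itemize}
Finally, the paper's $\mathcal{B}$ is the valid interior, an open subset of that manifold since $0<P_{ij}<1$ are open conditions, so the tangent spaces coincide.
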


The block decomposition suggest an immediate diagnostic for the model: if observed dynamics have nonzero structure in the null-null block $V_\perp^\top \dot{P} V_\perp$, this indicates either:
\begin{enumerate}
\item \textbf{Model misspecification}: The true dynamics don't preserve low-rank structure
\item \textbf{Dimensional emergence}: The latent dimension $d$ is increasing, suggesting that new interaction dimensions are emerging
\end{enumerate}

\subsection{The fiber bundle perspective}\label{sec:fiber-bundle}

The gauge freedom in RDPGs has a natural geometric structure that helps us understand what can and cannot be learned from network observations.
The quotient geometry of $\RR_*^{n \times d} / O(d)$ has been developed extensively in the optimization literature, particularly by \citet{absil2008optimization}, \citet{journee2010low}, and \citet{massart2020quotient,massart2019curvature}.
We recall this geometry here, adapting it to the RDPG setting where the connections, curvature, and holonomy structures acquire concrete statistical meaning as obstructions to dynamics estimation.

\begin{remark}
\textbf{Notation (tangent spaces).} For a manifold $\mathcal{M}$ and a point $p \in \mathcal{M}$, we write $T_p \mathcal{M}$ for the tangent space at $p$, and $T \mathcal{M}$ for the tangent bundle (the union of all tangent spaces).
In this paper, at interior points of $\mathcal{E} \subset \RR^{n \times d}$, we identify $T_X \mathcal{E}$ with $\RR^{n \times d}$ (allowable infinitesimal perturbations of $X$). Likewise, $T_P \mathcal{B}$ denotes the space of realizable infinitesimal perturbations $\dot{P}$ at $P$.
\end{remark}

\subsubsection{Why fiber bundles?}

Consider for a moment this scene. While writing this paper, I often worked facing the beach, envying the swimmers out there. Every now and then, a curious shark would move toward the short and be spotted by someone. Then, the local police would go out in a dinghy to look for it. In the (hypothetical) version of that scene we use as intuition here, their radar shows a moving dot: the shark's position projected onto a flat coordinate space centred around the boat. The curious shark moves, very smoothly, in a 3D space: it can go closer or further away from the boat, change it's angle of approach, and diving deeper in the sea. Yet, from the surface trace alone, you cannot tell whether the shark is surfacing or plunging in the depth. Many distinct 3D trajectories can therefore produce the same 2D trace when only this surface projection is observed.

Similarly, in a dynamic RDPG setting we have a latent structure $X$ that evolves in time following a certain regime. Yet, we observe only a function of $X$ which leaks information: we sample graphs from the probability of interaction matrix $P(t) = X(t) X(t)^\top$, and we know $P(t)$ is invariant under $O(d)$ rotations.

In our case, the observable is $P = XX^\top$ (or, in practice, noisy graphs drawn from it), and the latent configuration is $X$ itself.
The latent configuration contains the information we need to talk about velocities and accelerations in latent space, but it cannot be directly measured.
Replacing $X$ by a rotated version $XQ$ does not change $P$. It is a different representative of the same ``surface trace.'' (This is only an analogy. The gauge direction is not literally a physical depth, but the point is the same: there is a hidden direction you cannot see from the base space.)

A \textbf{fiber bundle} formalizes this structure and allows us to study the relationship between the base space and the total (latent) space.
We call the space of all possible observables, e.g.\ $P(t)$, the \textbf{base space} $\mathcal{B}$.
Above each observable $P$, there is a ``fiber'' of equivalent latent configurations $\{XQ : Q \in O(d)\}$, all producing the same $P$.
We call the full space of latent configurations $X$ the \textbf{total space} $\mathcal{E}$. We are endowed with a projection $\pi(X) = XX^\top$ dropping from the total space to the base space.

In a fiber bundle we can consider \textbf{lifts} that take trajectories from the base to the total space: given a path $P(t)$ of evolving observables in the base space, we ``lift'' it to a path $X(t)$ in the total space.
Alas, there are infinitely many possible choices of what $X(t)$ to pick for any $P(t)$, differing by time-varying gauge choices $Q(t)$.
The \emph{connection} and \emph{curvature} of the bundle tell us which lifts are ``natural'' (no spurious rotation) and whether consistent lifting is possible at all.
When it is not, and the bundle is inherently curved, we get \textbf{holonomy}. This is gauge drift that accumulates even along the most careful trajectory.

We now make this precise.

\subsubsection{The probability constraint}

Before defining the bundle, we must address a constraint glossed over earlier: for $P$ to represent connection probabilities, we need $P_{ij} \in [0, 1]$ for all pairs $i, j$.

Not every $X \in \RR^{n \times d}$ satisfies this.
The constraint $(XX^\top)_{ij} = x_i^\top x_j \in [0, 1]$ defines a semi-algebraic subset of $\RR^{n \times d}$.

\begin{definition}[Valid Latent Positions]
The \textbf{valid configuration space} is:
\begin{equation}
\mathcal{E} = \{X \in \RR^{n \times d} : \rank(X) = d, \quad 0 \le x_i^\top x_j \le 1 \text{ for all } i, j\}
\end{equation}

The \textbf{valid probability space} is:
\begin{equation}
\mathcal{B} = \{P \in \RR^{n \times n} : P = XX^\top \text{ for some } X \in \mathcal{E}\}
\end{equation}
\end{definition}

A sufficient condition for $X \in \mathcal{E}$ is that all rows lie in the \textbf{positive orthant of the unit ball}:
\begin{equation}
x_i \in B_+^d = \{x \in \RR^d : x_k \ge 0 \text{ for all } k, \quad \|x\| \le 1\}
\end{equation}
This ensures $x_i^\top x_j \ge 0$ (non-negative entries) and $x_i^\top x_j \le \|x_i\| \|x_j\| \le 1$ (Cauchy--Schwarz).
However, $X \in B_+^d$ is sufficient but not necessary: many valid configurations exist with some entries outside the positive orthant, as long as all pairwise dot products remain in $[0,1]$.

\textbf{Interior and boundary.}
The interior of $\mathcal{E}$ consists of configurations where all constraints are strict: $0 < x_i^\top x_j < 1$.
The boundary includes configurations where some $P_{ij} = 0$ (nodes with orthogonal positions, hence no connection probability) or $P_{ij} = 1$ (nodes with aligned unit-length positions, hence certain connection).

The fiber bundle structure requires a smooth total space, which holds on the interior: there, $\mathcal{E}$ is an open subset of the smooth manifold $\{X \in \RR^{n \times d} : \rank(X) = d\}$, and all the machinery of connections, curvature, and holonomy applies without caveat.

At the boundary, three issues arise.
First, the feasible set $\{X : 0 \le x_i^\top x_j \le 1 \text{ for all } i, j\}$ is defined by polynomial inequalities, and has \emph{corners} where multiple constraints are active simultaneously. Hence, it is a manifold with corners, not a smooth manifold.
Second, at boundary points the tangent space is replaced by a \emph{tangent cone}: not all directions are available, because some would push $P_{ij}$ outside $[0, 1]$.
The horizontal-vertical decomposition of \cref{prop:horizontal} may not respect feasibility: the horizontal component of a feasible velocity can point outside the constraint set.
Third, if a trajectory reaches $P_{ij} = 0$ or $P_{ij} = 1$, the unconstrained ODE $\dot{X} = f(X)$ may attempt to leave the feasible region, requiring constraint handling (projection, reflection, or barrier terms as discussed in \cref{sec:constructive}).

In practice, the boundary issue is rarely binding for temporal networks of interest: edge probabilities that are exactly 0 or 1 correspond to nodes that never interact, which are typically excluded from temporal network analysis (as they are ``undetectable''), or always interact (which mean they don't have stochastic variability in the network structure).
For the remainder of this paper, we work on a connected component of the interior of $\mathcal{E}$, where the geometry is smooth and the bundle structure is clean.
This ensures that trajectories $X(t)$ do not cross rank-deficient strata where the fiber bundle structure degenerates.

\subsubsection{Principal bundle structure}

\begin{definition}[RDPG Principal Bundle]
On the interior of the valid spaces, $(\mathcal{E}, \mathcal{B}, \pi, O(d))$ forms a principal fiber bundle:
\begin{itemize}
\item \textbf{Total space} $\mathcal{E}$: valid latent configurations (interior)
\item \textbf{Base space} $\mathcal{B}$: valid probability matrices (interior)
\item \textbf{Projection} $\pi: \mathcal{E} \to \mathcal{B}$: sends $X \mapsto XX^\top$
\item \textbf{Structure group} $O(d)$: acts on $\mathcal{E}$ by $X \cdot Q = XQ$
\end{itemize}

The \textbf{fiber} over $P$ is $\pi^{-1}(P) = \{XQ : Q \in O(d)\} \simeq O(d)$.
\end{definition}

This is a \textbf{principal bundle} because $O(d)$ acts freely (no $X$ is fixed by any non-identity $Q$) and transitively on each fiber (any two lifts of $P$ differ by some $Q$).

\subsubsection{Decomposing motion: vertical and horizontal}

At each $X \in \mathcal{E}$, we can ask: which directions of motion change $P$, and which don't?

The \textbf{vertical subspace} $\mathcal{V}_X$ consists of directions along the fiber. Motion in these directions changes $X$ but not $P = XX^\top$:
\begin{equation}
\mathcal{V}_X = \ker(d\pi_X) = \{X\Omega : \Omega \in \so(d)\}
\end{equation}
These are exactly the \textbf{invisible dynamics} from \cref{thm:invisible}: infinitesimal rotations $\dot{X} = X\Omega$ with skew-symmetric $\Omega$.

The \textbf{horizontal subspace} $\mathcal{H}_X$ consists of directions transverse to the fiber. Motion in these directions do change $P$ (and $X$):
\begin{equation}
\mathcal{H}_X = \{Z \in T_X \mathcal{E} : X^\top Z \in \Sym(d)\}
\end{equation}

Every tangent vector decomposes uniquely into vertical and horizontal parts:
\begin{equation}
T_X \mathcal{E} = \mathcal{V}_X \oplus \mathcal{H}_X
\end{equation}

\begin{proposition}[Horizontal Characterization]\label{prop:horizontal}
A tangent vector $\dot{X} \in T_X \mathcal{E}$ is horizontal if and only if $X^\top \dot{X}$ is symmetric.
\end{proposition}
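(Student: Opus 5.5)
Since $\mathcal{H}_X$ was defined as $\{Z \in T_X\mathcal{E} : X^\top Z \in \Sym(d)\}$, the real content of \cref{prop:horizontal} is that this set is a genuine complement to the vertical space. So the plan is to show that it is the orthogonal complement of $\mathcal{V}_X$ with respect to the Frobenius (Euclidean) metric on $T_X\mathcal{E} \cong \RR^{n\times d}$, which also justifies the decomposition $T_X\mathcal{E} = \mathcal{V}_X \oplus \mathcal{H}_X$. I take \emph{horizontal} to mean orthogonal to the fiber, i.e.\ to $\mathcal{V}_X = \{X\Omega : \Omega \in \so(d)\}$, which \cref{thm:invisible} identified as $\ker(d\pi_X)$.

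The first step is a one-line computation. For $\Omega \in \so(d)$ and $Z \in \RR^{n\times d}$,
\[
\langle X\Omega, Z\rangle_F = \tr(\Omega^\top X^\top Z) = -\tr(\Omega X^\top Z).
\]
Next, split $M = X^\top Z$ into symmetric and skew parts, $M = \sym(M) + \skewop(M)$. The trace pairing between $\so(d)$ and $\Sym(d)$ vanishes, so the inner product reduces to $-\tr(\Omega\,\skewop(M))$. As $\Omega$ ranges over $\so(d)$, this is zero for all $\Omega$ exactly when $\skewop(M) = 0$; indeed, taking $\Omega = \skewop(M)$ gives $\|\skewop(M)\|_F^2$. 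Hence $Z \perp \mathcal{V}_X$ if and only if $X^\top Z \in \Sym(d)$, which proves both directions of the proposition at once.

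To confirm the direct sum, I will count dimensions and exhibit the splitting explicitly. The vertical space has dimension $d(d-1)/2$, since $\Omega \mapsto X\Omega$ is injective when $X$ has full column rank. For a given $\dot X$, I look for $\Omega$ with $X^\top(\dot X - X\Omega)$ symmetric. This is the Lyapunov-type equation
\[
G\Omega + \Omega G = 2\,\skewop(X^\top\dot X), \qquad G = X^\top X \succ 0.
\]
It has a unique solution because the eigenvalues $g_a + g_b$ of the operator $\Omega \mapsto G\Omega + \Omega G$ are all positive. The solution is skew, since transposing the equation shows $-\Omega^\top$ also solves it.

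The only delicate point is interpretive rather than computational: the proof must fix which metric defines horizontality, since the definition in the text is by fiat. I will state that with the Euclidean metric, which is $O(d)$-invariant, the orthogonal complement agrees with the given set. Invertibility of $G$ holds by (S2).
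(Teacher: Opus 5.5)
Your proof is correct, but it takes a different route from the paper.

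The paper assumes the splitting $T_X\mathcal{E} = \mathcal{V}_X \oplus \mathcal{H}_X$ and writes $\dot X = X\Omega + H$. It then observes that $X^\top \dot X = G\Omega + X^\top H$ with $X^\top H$ symmetric. Hence $X^\top \dot X$ is symmetric iff $G\Omega$ is, and this forces $\Omega = 0$ because a positive definite matrix times a nonzero skew matrix is never symmetric. Since $\mathcal{H}_X$ is defined by the symmetry condition, the real content of that argument is that $\mathcal{V}_X \cap \mathcal{H}_X = \{0\}$. Existence of the decomposition is left to the Lyapunov computation in the connection 1-form subsection.

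You instead identify $\mathcal{H}_X$ as the Frobenius-orthogonal complement of $\mathcal{V}_X$, using the trace duality between $\Sym(d)$ and $\so(d)$. This buys two things:
\begin{itemize}
\item It gives horizontality a metric meaning, verifying the paper's later unproved remark that $\mathcal{H}_X$ is the connection induced by the Frobenius inner product.
\item It delivers the direct sum for free, since a subspace and its orthogonal complement always span in finite dimensions. Your Lyapunov step is therefore not needed for the decomposition, though it usefully reproduces the explicit vertical projection of the appendix.
\end{itemize}
Your orthogonality computation also never uses invertibility of $G$. Full rank enters only through the count $\dim \mathcal{V}_X = d(d-1)/2$ and the uniqueness of the Lyapunov solution.

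The paper's route, in turn, isolates the positive-definite-times-skew lemma, which it reuses in \cref{thm:gauge-contamination}.
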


\begin{proof}
Write $\dot{X} = X\Omega + H$ with $\Omega \in \so(d)$ (vertical part) and $H \in \mathcal{H}_X$ (horizontal part).
Then $X^\top \dot{X} = X^\top X \Omega + X^\top H$.
The term $X^\top H$ is symmetric by definition of $\mathcal{H}_X$.
The term $(X^\top X) \Omega$ is symmetric only if $\Omega = 0$, since the product of a positive definite symmetric matrix with a nonzero skew-symmetric matrix is never symmetric.
Thus $X^\top \dot{X}$ is symmetric iff $\Omega = 0$ iff $\dot{X}$ is purely horizontal.
\end{proof}

The horizontal condition $X^\top \dot{X} \in \Sym(d)$ is a \textbf{gauge-fixing condition}: it picks out, among all ways to move in $\mathcal{E}$, the ones with no ``wasted motion'' along the fiber.


\subsubsection{The connection 1-form: extracting gauge components}

The choice of horizontal subspaces $\mathcal{H}_X$ varying smoothly over $\mathcal{E}$ is called an \textbf{Ehresmann connection}.
It provides a consistent way to separate ``observable'' from ``gauge'' directions throughout the bundle.
Once we have a connection, we can define parallel transport (moving along the base while staying horizontal) and curvature (measuring how the connection twists).

The Ehresmann connection can be encoded by a \textbf{connection 1-form} $\omega$ that extracts the vertical (gauge) component of any motion:

\begin{definition}[Connection 1-Form]
For each configuration $X \in \mathcal{E}$, the tangent space $T_X \mathcal{E}$ is the vector space of allowable infinitesimal perturbations $Z$ of $X$ (at interior points, it coincides with $\RR^{n \times d}$).
The connection 1-form is the map that assigns to each tangent vector $Z \in T_X \mathcal{E}$ its vertical (gauge) component:
\begin{equation}
\omega_X: T_X \mathcal{E} \to \so(d)
\end{equation}
defined by
\begin{equation}
\omega_X(Z) = \Omega \quad \text{where} \quad (X^\top X) \Omega + \Omega (X^\top X) = X^\top Z - Z^\top X
\end{equation}
That is, $\omega_X(Z)$ is the unique skew-symmetric matrix $\Omega$ solving this Lyapunov equation.
\end{definition}

The horizontal space $\mathcal{H}_X = \{Z : X^\top Z \text{ symmetric}\}$ is not an arbitrary object: it is the \emph{metric connection} induced by the Frobenius inner product.
Equivalently, $\omega_X(Z)$ minimizes $\|Z - X\Omega\|_F^2$ over $\Omega \in \so(d)$: it extracts the gauge component with minimal kinetic energy.
This is the standard Riemannian connection on quotient manifolds~\citep{absil2008optimization,massart2020quotient}, ensuring that horizontal lifts are geodesics when projected appropriately.

To derive its expression, remember that any tangent vector decomposes as $Z = X\Omega + H$ with $\Omega \in \so(d)$ and $X^\top H$ symmetric.
Hence, computing $X^\top Z - Z^\top X$: since $\Omega^\top = -\Omega$, we have $Z^\top X = -\Omega (X^\top X) + H^\top X$.
Thus $X^\top Z - Z^\top X = (X^\top X) \Omega + \Omega (X^\top X) + (X^\top H - H^\top X)$.
Since $X^\top H$ is symmetric, $(X^\top H - H^\top X) = 0$, giving the Lyapunov equation.
Uniqueness holds because the Lyapunov operator $\Omega \mapsto G\Omega + \Omega G$ is invertible when $G = X^\top X$ is positive definite: in the eigenbasis of $G$ (with eigenvalues $\lambda_1, \ldots, \lambda_d$), this operator maps each entry $\Omega_{\iota\gamma}$ to $(\lambda_\iota + \lambda_\gamma) \Omega_{\iota\gamma}$ independently, so it acts as a diagonal operator on the vector space of skew-symmetric matrices, with all eigenvalues $\lambda_\iota + \lambda_\gamma > 0$.
At rank-deficient points ($\rank(X) < d$), $G$ has a zero eigenvalue, the Lyapunov operator acquires a kernel, and the horizontal-vertical decomposition ceases to be unique.
This is the algebraic counterpart of the geometric fact that the fiber bundle structure degenerates at the boundary of $\mathcal{E}$, and foreshadows the role of the spectral gap in \cref{prop:curv-spectral-gap}.

The connection 1-form $\omega_X(\dot{X})$ is the ``instantaneous rotation rate'' of the motion $\dot{X}$: if $\omega_X(\dot{X}) = 0$, the motion is purely horizontal (no gauge component); if $\omega_X(\dot{X}) = \Omega$, then $\dot{X}$ includes a rotation at rate $\Omega$.

The connection satisfies three key properties:
\begin{enumerate}
\item $\ker(\omega_X) = \mathcal{H}_X$: horizontal vectors have zero gauge component
\item $\omega_X(X\Omega) = \Omega$: vertical vectors are identified with their rotation rate
\item Equivariance: $\omega$ transforms appropriately under gauge changes
\end{enumerate}

\subsubsection{Horizontal lifts and parallel transport}

Suppose we observe a trajectory $P(t)$ in the base space (observable dynamics).
We want to ``lift'' this to a trajectory $X(t)$ in the total space.
But there are infinitely many lifts, so which one should we choose?

The \textbf{horizontal lift} is the unique lift with no gauge drift:
\begin{equation}
\pi(X(t)) = P(t) \quad \text{and} \quad \omega_{X(t)}(\dot{X}(t)) = 0
\end{equation}

The second condition says the velocity $\dot{X}(t)$ has no vertical component at any time: the trajectory moves purely horizontally.

\textbf{Existence and uniqueness:}
Given $P(t)$ and an initial lift $X(0)$ with $\pi(X(0)) = P(0)$, the horizontal lift exists and is unique~\citep{kobayashi1963foundations}.
The horizontal condition $\omega = 0$ defines an ODE on $\mathcal{E}$ whose solutions project to $P(t)$.

\textbf{Why horizontal lifts matter:}
Among all trajectories $X(t)$ projecting to $P(t)$, the horizontal lift is the ``gauge-canonical'' one that tracks the observable dynamics without introducing spurious rotation.
If we could compute horizontal lifts from data, we would solve the gauge problem.
The difficulty is that in practice we observe noisy adjacency matrices, not $P(t)$ directly (see \cref{sec:p-dynamics} for a detailed treatment of this estimation problem).

\textbf{Computing horizontal lifts.}
The horizontal lift formula connects directly to the block decomposition of \cref{prop:tangent}.
Let $P = V \Lambda V^\top$ where $V \in \RR^{n \times d}$ has orthonormal columns and $\Lambda = \diag(\lambda_1, \ldots, \lambda_d)$ with $\lambda_\iota > 0$.
A natural lift is $X = V \Lambda^{1/2}$, giving $X^\top X = \Lambda$.

Given realizable $\dot{P}$, the block decomposition yields:
\begin{equation}
\dot{P} = V A V^\top + V B V_\perp^\top + V_\perp B^\top V^\top
\end{equation}
where $A = V^\top \dot{P} V$ (symmetric, $d \times d$) and $B = V^\top \dot{P} V_\perp$ (the cross term).

We seek the horizontal lift $\dot{X} = V S + V_\perp T$ satisfying:
\begin{enumerate}
\item \emph{Projection}: $\dot{P} = \dot{X} X^\top + X \dot{X}^\top$
\item \emph{Horizontality}: $X^\top \dot{X} = \Lambda^{1/2} S$ is symmetric
\end{enumerate}

From the projection condition, matching blocks:
\begin{itemize}
\item Range-range: $A = S \Lambda^{1/2} + \Lambda^{1/2} S^\top$
\item Cross terms: $B = T^\top \Lambda^{1/2}$, giving $T = B^\top \Lambda^{-1/2}$
\end{itemize}

The horizontality condition requires $S = \Lambda^{-1/2} \Sigma$ for some symmetric $\Sigma$.
Substituting into the range-range equation and setting $\tilde{\Sigma} = \Lambda^{-1/2} \Sigma \Lambda^{-1/2}$:
\begin{equation}\label{eq:horizontal-lift-lyapunov}
\Lambda \tilde{\Sigma} + \tilde{\Sigma} \Lambda = A
\end{equation}

A small notational point: $A = V^\top \dot{P} V$ is already the range-range block expressed in the eigenbasis of $P = V \Lambda V^\top$. That is why the right-hand side is simply $A$, with no additional conjugation by powers of $\Lambda$.

This is the same Lyapunov structure as in the connection 1-form above. And not coincidentally: both of them involve separating gauge from observable components. The following formula is essentially the horizontal projection of \citet{absil2008optimization} and \citet{massart2020quotient}, specialized to the spectral decomposition of $P$.

\begin{proposition}[Horizontal Lift Formula]
Given $P = V \Lambda V^\top$ and realizable $\dot{P}$ with blocks $A = V^\top \dot{P} V$ and $B = V^\top \dot{P} V_\perp$, the horizontal lift at $X = V \Lambda^{1/2}$ is:
\begin{equation}
\dot{X} = V \tilde{\Sigma} \Lambda^{1/2} + V_\perp B^\top \Lambda^{-1/2}
\end{equation}
where $\tilde{\Sigma}$ is the symmetric solution to \cref{eq:horizontal-lift-lyapunov}, given elementwise by:
\begin{equation}
\tilde{\Sigma}_{\iota\gamma} = \frac{A_{\iota\gamma}}{\lambda_\iota + \lambda_\gamma}
\end{equation}
\end{proposition}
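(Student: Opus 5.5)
The plan is to verify directly that the proposed $\dot{X}$ satisfies the two defining conditions of the horizontal lift, projection and horizontality, and then to invoke uniqueness. Write $\dot{X} = VS + V_\perp T$ with $S = \tilde{\Sigma}\Lambda^{1/2}$ and $T = B^\top \Lambda^{-1/2}$. Using $X = V\Lambda^{1/2}$, $V^\top V = I$ and $V^\top V_\perp = 0$, we compute
\[
X^\top \dot{X} = \Lambda^{1/2} V^\top (V\tilde{\Sigma}\Lambda^{1/2} + V_\perp B^\top\Lambda^{-1/2}) = \Lambda^{1/2}\tilde{\Sigma}\Lambda^{1/2}.
\]
This is symmetric because $\tilde{\Sigma}$ is. By \cref{prop:horizontal}, $\dot{X}$ is therefore horizontal.

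Next we check the projection condition. We expand $\dot{X}X^\top + X\dot{X}^\top$ blockwise relative to $(V, V_\perp)$. The range--range block is $V(\tilde{\Sigma}\Lambda + \Lambda\tilde{\Sigma})V^\top$, which equals $VAV^\top$ exactly when $\tilde{\Sigma}$ solves \cref{eq:horizontal-lift-lyapunov}. The cross block is $V_\perp B^\top \Lambda^{-1/2}\Lambda^{1/2}V^\top = V_\perp B^\top V^\top$, together with its transpose $VBV_\perp^\top$. There is no null--null contribution. Comparing with the block decomposition of $\dot{P}$, and using that realizability forces $V_\perp^\top \dot{P} V_\perp = 0$ by \cref{prop:tangent}, we obtain $\dot{X}X^\top + X\dot{X}^\top = \dot{P}$.

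It remains to solve the Lyapunov equation. Since $\Lambda$ is diagonal, the $(\iota,\gamma)$ entry of \cref{eq:horizontal-lift-lyapunov} reads $(\lambda_\iota+\lambda_\gamma)\tilde{\Sigma}_{\iota\gamma} = A_{\iota\gamma}$. Because all $\lambda_\iota>0$, this gives the stated elementwise formula, and it is the unique solution. It is symmetric because $A = V^\top\dot{P}V$ is symmetric.

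Finally, uniqueness of the horizontal lift velocity follows because $d\pi_X$ restricted to $\mathcal{H}_X$ is injective. Indeed, if two horizontal vectors project to the same $\dot{P}$, their difference lies in $\ker d\pi_X = \mathcal{V}_X$ by \cref{thm:invisible}, and $\mathcal{V}_X\cap\mathcal{H}_X = 0$. Hence the constructed $\dot{X}$ is \emph{the} horizontal lift.

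There is no deep obstacle here. The step needing most care is the blockwise bookkeeping in the projection check, in particular making sure the cross block matches with the correct transposes and the $\Lambda^{\pm1/2}$ factors cancel.
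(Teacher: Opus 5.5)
Your proposal is correct and is essentially the paper's argument. It uses the same $(V, V_\perp)$ block decomposition and the same Lyapunov equation, but runs them as a direct verification of the formula rather than as the paper's ansatz-and-solve derivation from $\dot{X} = VS + V_\perp T$. Your closing uniqueness step ($\ker d\pi_X = \mathcal{V}_X$ and $\mathcal{V}_X \cap \mathcal{H}_X = 0$) makes explicit what the paper's derivation leaves implicit. Your cross-block computation also avoids the paper's transposition slip, $B = T^\top \Lambda^{1/2}$, which should read $B = \Lambda^{1/2} T^\top$.
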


\begin{proof}
The derivation above (\cref{eq:horizontal-lift-lyapunov} and the preceding block algebra) constructs $\dot{X}$ satisfying both the projection condition $\dot{P} = \dot{X} X^\top + X \dot{X}^\top$ and the horizontality condition $X^\top \dot{X}$ symmetric.
The Lyapunov equation has a unique symmetric solution because $\lambda_\iota + \lambda_\gamma > 0$ for all $\iota, \gamma$.
\end{proof}

The formula reveals two contributions to horizontal motion:
\begin{itemize}
\item The first term $V \tilde{\Sigma} \Lambda^{1/2}$ handles motion within the current column space of $X$, determined by the range-range block $A$ via a Lyapunov equation
\item The second term $V_\perp B^\top \Lambda^{-1/2}$ handles motion expanding into new directions, determined directly by the cross block $B$
\end{itemize}

In practice, computing horizontal lifts from discrete noisy observations requires interpolation and ODE integration.
We discuss in \cref{sec:constructive} how discrete Procrustes alignment approximates horizontal transport.

\subsubsection{Curvature and holonomy}

A subtle issue: even horizontal lifts can accumulate gauge drift over closed loops.

The \textbf{curvature} of the connection measures how horizontal directions fail to commute:

\begin{definition}[Curvature 2-Form]
For horizontal vector fields $H_1, H_2$ on $\mathcal{E}$:
\begin{equation}
\Omega(H_1, H_2) = -\omega([H_1, H_2])
\end{equation}
where $[\cdot, \cdot]$ is the Lie bracket.
\end{definition}

When curvature is nonzero, traveling horizontally in direction $H_1$ then $H_2$ doesn't end up at the same point as $H_2$ then $H_1$. So, there's a vertical (gauge) discrepancy.

This has a striking consequence for closed loops:

\begin{definition}[Holonomy]
Let $\gamma: [0,1] \to \mathcal{B}$ be a closed curve with $\gamma(0) = \gamma(1) = P$.
The horizontal lift starting at $X$ ends at $X Q_\gamma$ for some $Q_\gamma \in O(d)$.
The \textbf{holonomy} of $\gamma$ is this accumulated rotation $Q_\gamma$.
\end{definition}

A classic result provides a precise qualitative characterization of the curvature/holonomy link. Concretely, the Ambrose--Singer theorem~\citep{kobayashi1963foundations} identifies the Lie algebra of the restricted holonomy group (holonomy of contractible loops) as the one generated by curvature values encountered along horizontal transport. (We will later return to concrete holonomy behavior for specific RDPG dynamics families.)

\begin{theorem}[Holonomy Obstruction (Ambrose--Singer)]
If the curvature is nonzero, then the restricted holonomy group is nontrivial. In particular, there exist closed paths in $\mathcal{B}$ (contractible loops) such that no globally consistent gauge exists: any lift satisfies $X(1) = X(0) Q$ for some $Q \neq I$.
\end{theorem}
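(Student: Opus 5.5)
The argument runs through the Ambrose--Singer theorem applied to the RDPG principal bundle. I first fix a base point $X_0$ in the interior of $\mathcal{E}$ and denote by $\Hol^0(X_0)$ the restricted holonomy group, i.e.\ the holonomy of contractible loops at $P_0 = \pi(X_0)$. All the hypotheses of Ambrose--Singer hold on the interior. The bundle is a smooth principal $O(d)$-bundle, since $O(d)$ acts freely and properly and $\pi$ is a submersion onto $\mathcal{B}$; the submersion property follows from \cref{prop:tangent} together with the dimension count. Moreover, $\omega$ is a genuine connection 1-form: it is smooth because the Lyapunov operator $\Omega \mapsto G\Omega + \Omega G$ is invertible with smooth inverse whenever $G = X^\top X \succ 0$, and it satisfies $\omega_X(X\Omega) = \Omega$ together with $\Ad$-equivariance. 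For equivariance I will check directly that $\omega_{XQ}(ZQ) = Q^\top \omega_X(Z) Q$, by substituting $G \mapsto Q^\top G Q$ into the defining Lyapunov equation.

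With this in place, Ambrose--Singer states that the Lie algebra $\hol(X_0)$ of $\Hol^0(X_0)$ is spanned by the values $\Omega_Y(H_1,H_2)$, where $Y$ ranges over points reachable from $X_0$ by horizontal curves and $H_1, H_2 \in \mathcal{H}_Y$. The nonzero-curvature hypothesis supplies some $Y$ and horizontal vectors $H_1,H_2$ with $\Omega_Y(H_1,H_2) \neq 0$. I choose $X_0$ in the same horizontal holonomy class as $Y$; since the interior is connected and horizontal curves reach every fiber of the connected base, the reachable set meets every fiber, so this costs nothing. Hence $\hol(X_0) \neq 0$, the identity component $\Hol^0(X_0)$ is a nontrivial connected Lie group, and it therefore contains elements $Q \neq I$.

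The second sentence of the statement is obtained by unwinding the definition. Let $\gamma$ be a contractible loop with holonomy $Q_\gamma \neq I$. The horizontal lift of $\gamma$ from $X_0$ ends at $X_0 Q_\gamma \neq X_0$. Any other lift has the form $X(t) = X^h(t) R(t)$ with $R(t) \in O(d)$. If we demand $R(0) = I$, so that the lift starts at $X_0$, then the endpoint of the lift depends on the chosen gauge path $R(t)$. The intended claim is the following: along the lift singled out by the connection, i.e.\ the drift-free horizontal gauge, no choice closes up. More generally, no smooth gauge section $s$ over a neighborhood containing $\gamma$ can have its pullback $s^*\omega$ vanish identically, because a flat section of that kind would force trivial holonomy. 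I will phrase the conclusion in exactly this sense, namely ``any horizontal lift,'' and note that an arbitrary lift can close only at the price of nonzero $\omega(\dot X)$, i.e.\ injected gauge rotation.

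I expect the main obstacle to be the careful bookkeeping needed to apply Ambrose--Singer, rather than any computation. Two points need care. First, the curvature must be evaluated at points horizontally reachable from the base point, which I handle by varying the base point within the connected interior as described above. Second, the curvature 2-form must be extended from horizontal fields to a tensor; I will use the standard identity $\Omega(H_1,H_2) = -\omega([\tilde H_1,\tilde H_2])$ for arbitrary horizontal extensions $\tilde H_1, \tilde H_2$, which is tensorial because $\omega$ annihilates horizontal fields. As an optional sanity check that the hypothesis is not vacuous, one can compute the bracket of two horizontal lifts using the horizontal lift formula and observe a nonzero skew part whenever $d \ge 2$.
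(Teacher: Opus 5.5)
Your proposal is correct and follows the same route as the paper, which gives no separate proof and simply invokes Ambrose--Singer; your checks of the bundle and connection hypotheses, and of the base-point issue, spell out what the paper leaves implicit. Your reading of the second sentence as a statement about horizontal lifts is the right correction: the literal ``any lift'' is false, because for a contractible loop $Q_\gamma \in \Hol^0 \subset \mathrm{SO}(d)$, and choosing a path $R(t)$ in $\mathrm{SO}(d)$ from $I$ to $Q_\gamma^{-1}$ makes the lift $X^h(t)R(t)$ close up.
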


The result implies that, if the true dynamics $P(t)$ trace a closed loop (periodic network behavior), the underlying $X(t)$ may not close on itself. Instead, it returns rotated by the holonomy.
This is a fundamental obstruction: even perfect local alignment accumulates global gauge drift over cycles.

We can connect the holonomy obstruction to spectral properties of the (observed) graphs: the curvature of $\mathcal{B}$ depends on the eigenvalues of $P = X X^\top$. Since the nonzero eigenvalues of $P$ are exactly those of $X^\top X$ (the Gram matrix of the latent positions), small $\lambda_d$ arises in several common scenarios.

\begin{remark}
\textbf{When the spectral gap is small (examples).} Small $\lambda_d$ occurs, for example, when:
\begin{itemize}
\item \emph{Latent positions cluster in a lower-dimensional subspace:} if nodes' positions are nearly coplanar in $\RR^d$, the columns of $X$ become nearly linearly dependent.
\item \emph{Stochastic block models with weak community structure:} for SBM with $X = Z B^{1/2}$ (membership $Z$, block matrix $B$), the spectral gap of $P$ reflects that of $B$. When communities are hard to distinguish, $\lambda_d$ is small.
\item \emph{Sparse networks:} if all entries of $X$ scale as $\rho_n \to 0$ (sparsity parameter), then $\lambda_d \sim \rho_n^2 \to 0$.
\end{itemize}
\end{remark}

This matters doubly: not only do geometric obstructions worsen as $\lambda_d \to 0$ (the injectivity radius vanishes as $\sqrt{\lambda_d}$, and curvature blows up when $\lambda_{d-1}$ is also small), but ASE convergence rates also deteriorate (they scale as $O(\sqrt{\log n / \lambda_d})$~\citep{cape2019two}).
Configurations with small spectral gap are thus problematic both statistically (harder to estimate) and geometrically (harder to track gauges). We emphasize this expectation-setting point here because it will reappear later: in \cref{sec:dynamics} and \cref{sec:info-theoretic}, the same spectral quantities controlling curvature and injectivity also control estimation error and Fisher information.

\begin{proposition}[Curvature and Spectral Gap]\label{prop:curv-spectral-gap}
The quotient manifold $\mathcal{B} = \RR_*^{n \times d} / O(d)$ with the Procrustes metric has \emph{sectional curvature} $K$ (a scalar measuring the Gaussian curvature of 2-dimensional sections, distinct from the Lie algebra-valued curvature 2-form $\Omega$ above) given by O'Neill's formula for Riemannian submersions~\citep{oneill1966fundamental}: for orthonormal horizontal vectors $\bar{\xi}, \bar{\eta} \in \mathcal{H}_X$,
\begin{equation}
K(\xi, \eta) = 3 \|\mathcal{A}_{\bar{\xi}} \bar{\eta}\|^2 = \frac{3}{4} \|[\bar{\xi}, \bar{\eta}]^{\mathcal{V}}\|^2
\end{equation}
where $\mathcal{A}_{\bar{\xi}} \bar{\eta} = \frac{1}{2} [\bar{\xi}, \bar{\eta}]^{\mathcal{V}}$ is the O'Neill $A$-tensor and $[\bar{\xi}, \bar{\eta}]^{\mathcal{V}}$ denotes the vertical component of the Lie bracket. Since the total space $\RR_*^{n \times d}$ is flat, all base curvature arises from the $A$-tensor.

\citet{massart2019curvature} compute the sectional curvature explicitly for this quotient:
\begin{itemize}
\item If $d = 1$, the sectional curvature is identically zero (the base space is flat).
\item If $d \ge 2$, the minimum sectional curvature is zero, achieved when the horizontal fields commute.
\item The maximum sectional curvature at $[X]$ diverges when the two smallest eigenvalues $\lambda_{d-1}, \lambda_d$ of $P = XX^\top$ approach zero simultaneously.
\end{itemize}
\end{proposition}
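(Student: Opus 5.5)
The plan is to realize $\mathcal{B}$ as the base of a Riemannian submersion and read off its curvature from O'Neill's formula, reducing everything to a $d\times d$ Lyapunov computation in the eigenbasis of $G = X^\top X$. The cited work \citet{massart2019curvature} gives the exact constants; the aim here is to recover the qualitative claims.

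\textbf{Step 1: the submersion structure.} Equip $\RR_*^{n\times d}$ with the Frobenius metric. It is an open subset of Euclidean space, so its curvature is identically zero. The right action $X\mapsto XQ$ of $O(d)$ is free and isometric, since $\|ZQ\|_F=\|Z\|_F$. Hence the quotient inherits a unique metric, the Procrustes metric, making $\pi$ a Riemannian submersion. Its vertical spaces are $\mathcal{V}_X$ and its horizontal spaces are the orthogonal complements $\mathcal{H}_X$ of \cref{prop:horizontal}. For $X\Omega\in\mathcal{V}_X$ and $H\in\mathcal{H}_X$ we have $\langle X\Omega,H\rangle=\tr(\Omega^\top X^\top H)=0$, because $X^\top H$ is symmetric and $\Omega$ is skew; so $\mathcal{H}_X$ is indeed the orthogonal complement. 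O'Neill's formula $K_{\mathcal{B}}(\xi,\eta)=K_{\mathcal{E}}(\bar\xi,\bar\eta)+3\|\mathcal{A}_{\bar\xi}\bar\eta\|^2$ then gives the displayed identity with $K_{\mathcal{E}}=0$. The identity $\mathcal{A}_{\bar\xi}\bar\eta=\tfrac12[\bar\xi,\bar\eta]^{\mathcal{V}}$ for horizontal fields is standard.

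\textbf{Step 2: computing the vertical bracket.} Extend $\bar\xi,\bar\eta$ to horizontal vector fields. The connection 1-form $\omega$ vanishes on horizontal fields, so the structure equation gives $\omega([\bar\xi,\bar\eta])=-d\omega(\bar\xi,\bar\eta)$.
\begin{itemize}
\item Differentiate the defining Lyapunov relation $G\,\omega_X(Z)+\omega_X(Z)\,G=X^\top Z-Z^\top X$ along $\bar\xi$ and $\bar\eta$.
\item At the point $X$ where $\omega(\bar\xi)=\omega(\bar\eta)=0$, the terms involving derivatives of $G$ drop out.
\item The result should be $[\bar\xi,\bar\eta]^{\mathcal V}=X\Omega$, where $G\Omega+\Omega G=c\,(\bar\xi^\top\bar\eta-\bar\eta^\top\bar\xi)$ for a universal constant $c$ (expected to be $2$).
\end{itemize}
Write $M=\bar\xi^\top\bar\eta-\bar\eta^\top\bar\xi$. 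In the eigenbasis of $G$ (eigenvalues $\lambda_\iota$, which are the nonzero eigenvalues of $P$), this gives $\Omega_{\iota\gamma}=cM_{\iota\gamma}/(\lambda_\iota+\lambda_\gamma)$ and
\begin{equation}
\|X\Omega\|_F^2=\tr(\Omega^\top G\Omega)=c^2\sum_{\iota<\gamma}\frac{M_{\iota\gamma}^2}{\lambda_\iota+\lambda_\gamma}.
\end{equation}

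\textbf{Step 3: the three bullet points.}
\begin{itemize}
\item For $d=1$, $\so(1)=0$, so $\Omega=0$ and $K\equiv0$.
\item For $d\ge2$, $K\ge0$ always. It vanishes exactly when $M=0$, i.e.\ when $\bar\xi^\top\bar\eta$ is symmetric, which is the commuting case. Such orthonormal horizontal pairs exist, for example two orthonormal vectors $V_\perp B^\top\Lambda^{-1/2}$ supported on a single column. Hence the minimum is $0$.
\item For divergence, construct orthonormal horizontal $\bar\xi,\bar\eta$ using the cross-block part of the horizontal lift formula. Choose them so that $M$ is concentrated in the $(d-1,d)$ entry with $|M_{d-1,d}|$ bounded below by an absolute constant. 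Then $K\gtrsim 1/(\lambda_{d-1}+\lambda_d)$, which diverges as both eigenvalues tend to zero. The weights $1/(\lambda_\iota+\lambda_\gamma)$ show that one small eigenvalue alone does not force blow-up, provided the entries of $M$ involving it are paired with a large eigenvalue.
\end{itemize}

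\textbf{Expected obstacle.} The delicate step is Step 2: getting the bracket constant and signs right, and checking that the derivative-of-$G$ terms really vanish at the base point. The other delicate point is the explicit maximizing pair in Step 3, since horizontality and orthonormality must hold simultaneously while $M_{d-1,d}$ stays of order one. I would first attempt the two-column ansatz $\bar\xi=u e_{d-1}^\top$, $\bar\eta=w e_d^\top$ with $u,w\perp\col(X)$. This is horizontal because $X^\top\bar\xi=0$, but it gives $M=0$. So the construction must mix in range components $V\tilde\Sigma\Lambda^{1/2}$, and for the exact maximal value I would defer to \citet{massart2019curvature}.
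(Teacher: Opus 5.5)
Your approach is correct, and it is more self-contained than the paper's. The paper gives no derivation: it invokes O'Neill's formula for the flat total space and quotes the three bullets from \citet{massart2019curvature}.

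You instead compute the vertical bracket for \emph{arbitrary} horizontal fields by differentiating $X^\top\bar\eta-\bar\eta^\top X\equiv 0$ along $\bar\xi$, and vice versa. This gives $X^\top Z-Z^\top X=-2M$ for $Z=[\bar\xi,\bar\eta]$. So your constant is $c=-2$, and for orthonormal horizontal $\bar\xi,\bar\eta$ (with $M$ written in the gauge $X=V\Lambda^{1/2}$)
$K(\xi,\eta)=3\sum_{\iota<\gamma}M_{\iota\gamma}^2/(\lambda_\iota+\lambda_\gamma)$.

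This shows directly that the vertical bracket depends only on the values of $\bar\xi,\bar\eta$ at $X$. It also recovers \cref{prop:curvature-criterion} as the special case $\bar\xi_i=M_iX$, where your $M$ becomes $X^\top[M_1,M_2]X$.

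Bullets 1 and 2 then follow as you say, with one caveat: your zero-curvature witness needs $n-d\ge 2$. In general, take $\bar\xi=XD_1$ and $\bar\eta=XD_2$ in the same gauge, with $D_1,D_2$ diagonal, normalized, and of disjoint support. Both are horizontal and $\bar\xi^\top\bar\eta=D_1\Lambda D_2=0$.

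\textbf{The error in your ``expected obstacle''.} The two-column ansatz does \emph{not} give $M=0$. For $\bar\xi=ue_{d-1}^\top$ and $\bar\eta=we_d^\top$ one has $\bar\xi^\top\bar\eta=(u^\top w)\,e_{d-1}e_d^\top$, hence $M=(u^\top w)(e_{d-1}e_d^\top-e_de_{d-1}^\top)$. This vanishes only when $u\perp w$.

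Orthonormality does not require $u\perp w$. The two matrices have disjoint column supports, so $\langle\bar\xi,\bar\eta\rangle=\tr(\bar\xi^\top\bar\eta)=0$ automatically. Taking $u=w$ a unit vector in $\col(X)^\perp$ gives $M_{d-1,d}=1$, all other entries zero, and $K=3/(\lambda_{d-1}+\lambda_d)$. That is exactly the maximal value reported by \citet{massart2019curvature}. So the divergence bullet follows from your own formula, with no need to mix in range components or defer to the citation.

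\textbf{The case $n=d$.} The witness above needs $n>d$, but the standing assumptions allow $n=d$, and there range components really are required. At $X=V\Lambda^{1/2}$ take $\bar\xi=Ve_de_d^\top$ and $\bar\eta=\beta V(\lambda_{d-1}^{-1/2}e_{d-1}e_d^\top+\lambda_d^{-1/2}e_de_{d-1}^\top)$, with $\beta^2=\lambda_{d-1}\lambda_d/(\lambda_{d-1}+\lambda_d)$. Both are horizontal and they are orthonormal. They give $M_{d-1,d}^2=\lambda_{d-1}/(\lambda_{d-1}+\lambda_d)\ge 1/2$, so $K\ge 3/(2(\lambda_{d-1}+\lambda_d))$.

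\textbf{Your remark on a single small eigenvalue.} The claim that one small eigenvalue alone does not force blow-up is not needed for the statement. It can be made rigorous via the entrywise bound $|M_{\iota\gamma}|\le 1$, which follows from Cauchy--Schwarz on the column norms of $\bar\xi,\bar\eta$. This yields $K\le 3\sum_{\iota<\gamma}(\lambda_\iota+\lambda_\gamma)^{-1}\le 3d(d-1)/(2\lambda_{d-1})$.
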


Two related but distinct geometric obstructions emerge from the spectral gap.
The \emph{curvature} depends on the $A$-tensor, which involves the connection coefficients $1/(\lambda_\iota + \lambda_\gamma)$ for $\iota \neq \gamma$.
Since the connection is skew-symmetric, the worst case is $1/(\lambda_{d-1} + \lambda_d)$: curvature blows up only when \emph{both} of the two smallest eigenvalues approach zero.
If $\lambda_d \to 0$ but $\lambda_{d-1}$ remains bounded, the curvature stays finite (the manifold looks locally like a cylinder in the collapsing direction).
By contrast, the \emph{injectivity radius} at $[X]$ equals $\sqrt{\lambda_d}$~\citep{massart2020quotient} and vanishes whenever $\lambda_d \to 0$ alone: beyond this radius, geodesics cease to be length-minimizing and multiple paths can connect the same points.

For dynamics estimation, both phenomena matter.
Large curvature means that even short paths produce substantial holonomy, making local Procrustes alignment inconsistent over moderate distances.
Small injectivity radius means that interpolation between time steps is non-unique, even without curvature.
The $d = 1$ case is degenerate in a useful way: $O(1) = \{\pm 1\}$ is discrete, so the only gauge ambiguity is a global sign flip, and the curvature vanishes because there is no continuous rotation to accumulate.

\subsubsection{Riemannian structure}

The quotient $\mathcal{B} \simeq \mathcal{E} / O(d)$ inherits a natural metric:

\begin{definition}[Quotient Metric]
The Riemannian distance between $[X], [Y] \in \mathcal{B}$ is the \textbf{Procrustes distance}:
\begin{equation}
d_{\mathcal{B}}([X], [Y]) = \min_{Q \in O(d)} \|X - YQ\|_F
\end{equation}
\end{definition}

This connects the geometry to computation: Procrustes alignment computes geodesic distance on the base space~\citep{massart2020quotient}.
The global injectivity radius of $\mathcal{B}$ is zero (since $\lambda_d$ can be arbitrarily small), reflecting the nontrivial topology of the quotient.


\section{Recovering trajectories from spectral embeddings}\label{sec:trajectory-problem}

The obstructions above concern what dynamics are \emph{theoretically} possible.
We now turn to the \emph{practical} problem: even when dynamics are observable and realizable, recovering trajectories from data is hard because spectral embedding introduces arbitrary gauge transformations at each time step.

\subsection{ASE introduces arbitrary gauge transformations}

At each time $t$, ASE computes the eigendecomposition of $A^{(t)}$.
The eigenvectors are determined only up to sign (and rotation within repeated eigenspaces).
This means:
\begin{equation}
\hat{X}^{(t)} = X^{(t)} R^{(t)} + E^{(t)}
\end{equation}
where $E^{(t)}$ is statistical noise and $R^{(t)} \in O(d)$ is the gauge factor.
Although $R^{(t)}$ is deterministic (fixed by the eigensolver's conventions, e.g.\ sign of the first nonzero entry), it is \emph{discontinuous} in $t$: standard eigensolvers enforce conventions that jump whenever an entry crosses zero or eigenvalues change ordering.
These jumps are uncorrelated with the dynamics, so $R^{(t)}$ behaves effectively like an arbitrary selection from the fiber $O(d)$ at each time step.

\textbf{The key point}: Even if the true positions $X^{(t)}$ evolve smoothly, the estimates $\hat{X}^{(t)}$ jump erratically because the $R^{(t)}$ are unrelated across time.
In principle, a smooth choice of eigenvectors exists when eigenvalues are simple and vary analytically, but standard numerical eigensolvers do not enforce this choice.

\subsection{Finite differences fail}

Consider estimating velocity via finite differences:
\begin{equation}
\hat{\dot{X}}^{(t)} = \frac{\hat{X}^{(t + \delta t)} - \hat{X}^{(t)}}{\delta t}
\end{equation}

Substituting the gauge-contaminated estimates:
\begin{equation}
\hat{\dot{X}}^{(t)} = \frac{X^{(t+\delta t)} R^{(t+\delta t)} - X^{(t)} R^{(t)}}{\delta t} + O(E)
\end{equation}

Even ignoring noise $E$, this is dominated by the gauge jump $R^{(t+\delta t)} - R^{(t)}$, which is $O(1)$ regardless of $\delta t$.
As the ``velocity'' diverges as $\delta t \to 0$, we're measuring gauge jumps, not dynamics.

\subsection{Pairwise alignment and error accumulation}\label{sec:alignment-accumulation}

One might try aligning consecutive embeddings via Procrustes:
\begin{equation}
Q^{(t)} = \arg \min_{Q \in O(d)} \|\hat{X}^{(t+1)} - \hat{X}^{(t)} Q\|_F
\end{equation}

This has a closed-form solution via SVD and finds the best rotation to match adjacent frames.
However, sequential pairwise alignment suffers from fundamental limitations:

\begin{enumerate}
\item \emph{Local, not global:} Each alignment minimizes error between adjacent frames but doesn't ensure consistency across the full trajectory.

\item \emph{Error accumulation:} The ASE gauge $R^{(t)}$ behaves like an arbitrary $O(1)$ rotation at each time step, unrelated across time. Pairwise Procrustes targets the relative gauge $(R^{(t)})^{-1} R^{(t+1)}$ between consecutive frames, with error controlled by the ASE noise $E^{(t)}$. When these noise-induced errors are roughly independent across steps, the accumulated rotation error after aligning $T$ frames behaves diffusively, scaling like $O(\sqrt{T} \sigma)$ where $\sigma$ is the per-step noise level.

\item \emph{Holonomy is a separate obstruction:} The $O(\sqrt{T} \sigma)$ diffusive error is the \emph{statistical} difficulty of alignment. The holonomy obstruction (\cref{sec:fiber-bundle}) is \emph{geometric/topological} and exists even with perfect data ($\sigma = 0$): for dynamics with nontrivial holonomy, no globally consistent gauge exists over closed loops, regardless of how accurately each pairwise alignment is performed. The two obstructions interact but should not be conflated.

\item \emph{No dynamical constraint:} Even a perfectly aligned sequence need not correspond to a dynamically meaningful lift: Procrustes alignment is purely geometric and does not enforce that $\dot{X} = f(X)$ for any structured family.
\end{enumerate}

\textbf{Noise in spectral embeddings:}
RDPG spectral embeddings have estimation error $\|\hat{X} - XQ\|_F = O_p(\sqrt{n})$ for some $Q \in O(d)$, giving per-node error $O_p(1/\sqrt{n})$~\citep{athreya2017statistical,cape2019two}.
For pairwise Procrustes between consecutive frames, this translates to rotation estimation error that depends on both $n$ and the separation between frames.

When the true trajectory moves slowly (small $\|X^{(t+1)} - X^{(t)}\|$), the signal-to-noise ratio for alignment degrades. In these cases, we're trying to detect small true rotations against a background of estimation noise.

\subsection{Why existing joint embedding methods don't help}\label{sec:why-not-uase}

Joint embedding methods like UASE~\citep{gallagher2021spectral} embed all time points simultaneously, producing gauge-consistent estimates.
However, they assume a different generative model.

\textbf{UASE model (Multilayer RDPG):}
\begin{equation}
P^{(t)}_{ij} = x_i \Lambda^{(t)} y_j^{(t)\top}
\end{equation}
where $x_i$ is a \emph{fixed} ``identity'' and $y_j^{(t)}$ is a time-varying ``activity.''

\textbf{Our model (temporal RDPG with ODE dynamics):}
\begin{equation}
P^{(t)}_{ij} = x_i^{(t)\top} x_j^{(t)}
\end{equation}
where the \emph{same} position $x_i^{(t)}$ evolves according to $\dot{x}_i = f(x_i, \ldots)$.

We can express the mismatch through the shared-subspace assumption.
UASE assumes that the concatenated matrix $[A^{(1)}; A^{(2)}; \ldots; A^{(T)}]$ has a \emph{common invariant column space} of dimension $d$: all time slices share the same left-singular vectors.
Under ODE dynamics on latent positions, the column space of $P^{(t)} = X^{(t)} X^{(t)\top}$ can evolve over time; for non-commuting families (notably Laplacian dynamics), eigenvectors rotate and a single invariant subspace is typically absent.
UASE therefore fits a static average subspace to what is inherently an evolving one, compressing the dynamical signal into a deformation of ``activity scores'' within a fixed basis.
In short: joint embeddings enforce a shared subspace; ODE-driven temporal RDPGs need not have one (except in special commuting/symmetric regimes).
Applying UASE to data from our model distorts the recovered trajectory, attributing genuine positional evolution to time-varying activities against fixed identities.

Similar issues affect Omnibus embedding~\citep{levin2017central} and COSIE~\citep{arroyo2021inference}: all assume a shared subspace structure that ODE dynamics do not preserve.

\subsection{Why Bayesian smoothing approaches are insufficient}\label{sec:bayesian-smoothing}

Recent work~\citep{loyal2025} proposes Bayesian inference for dynamic RDPGs using hierarchical priors on latent positions.
By placing priors on successive differences (or equivalently, using Gaussian processes with smooth kernels), the resulting trajectories are smooth: velocities and accelerations are well-defined and continuous.

In this section, we make one conceptual point: smoothness is necessary but not sufficient for dynamical consistency. We then formalize this gap (via a measure-zero result) and conclude with practical diagnostics.

However, smoothness is necessary but not sufficient for dynamical consistency.
An ODE $\dot{X} = f(X)$ constrains the velocity to be a \emph{function of the current state}; a smoothness prior constrains velocities to vary continuously, but does not require that $\dot{X}(t)$ is determined by $X(t)$.

To be clear about scope: the critique in this section is aimed at \emph{smoothing-only} models that place priors directly on trajectories (or their increments) without including $f$ and the ODE constraint in the generative model. Bayesian inference with an explicit ODE/SDE structure is a different object entirely (see the comparison table below).

\begin{definition}[Dynamical Consistency]
A trajectory $X(t)$ is \textbf{dynamically consistent} with respect to a function class $\mathcal{F}$ if there exists $f \in \mathcal{F}$ such that $\dot{X}(t) = f(X(t))$ for all $t$.
\end{definition}

Hierarchical Bayesian priors ensure that $X(t)$, $\dot{X}(t)$, and higher derivatives are all smooth, but they do not enforce the state-dependence constraint $\dot{X}(t) = f(X(t))$ for any $f$.
This distinction is the gap between kinematic regularity and dynamical structure.

\textbf{Prior support and the ODE solution manifold.}
For a given initial condition $X(0) = X_0$ and dynamics $\dot{X} = f(X)$, the solution $X(t)$ traces a \emph{unique} curve: there is no uncertainty in the trajectory given $(f, X_0)$.
The space of all ODE solutions (over all $f \in \mathcal{F}$ and $X_0$) forms a \emph{finite-dimensional manifold} in the infinite-dimensional space of smooth paths.
A hierarchical smoothness prior assigns positive probability to all smooth paths, a much larger set than just those satisfying some ODE. So, the ODE solution manifold has measure zero under such priors.

\begin{proposition}[Measure Zero for Finite-Dimensional ODE Families]\label{prop:measure-zero}
Let $\mathcal{F} = \{f_\theta : \theta \in \Theta \subset \RR^p\}$ be a finite-dimensional $C^1$ family of vector fields on $\RR^{n \times d}$, and assume each ODE $\dot{X} = f_\theta(X)$ has a unique solution on $[0,T]$ for every initial condition.
Let $\mu$ be a non-degenerate Gaussian measure on $C^k([0,T], \RR^{n \times d})$ with full support.
Define
\begin{equation}
\mathcal{M}_\mathcal{F} = \{X(\cdot) : \dot{X} = f_\theta(X) \text{ for some } \theta \in \Theta\}.
\end{equation}
Then $\mu(\mathcal{M}_\mathcal{F}) = 0$.
\end{proposition}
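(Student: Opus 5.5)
The plan is to realize $\mathcal{M}_\mathcal{F}$ as the image of a finite-dimensional parameter space under a locally Lipschitz map. I would then combine a covering-number bound for such images with the super-polynomial decay of small-ball probabilities of an infinite-dimensional Gaussian measure.

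\textbf{Step 1 (parametrization).} Set $m = p + nd$ and define $\Phi : \Theta \times \RR^{n \times d} \to C^k([0,T],\RR^{n\times d})$ by $\Phi(\theta, X_0) = $ the unique solution of $\dot X = f_\theta(X)$, $X(0)=X_0$. Then $\mathcal{M}_\mathcal{F} = \Phi(\Theta \times \RR^{n\times d})$. By smooth dependence of ODE solutions on parameters and initial data, $\Phi$ is locally Lipschitz into $C^k$, using the identities $X^{(j)} = $ polynomial expressions in derivatives of $f_\theta$ along $X$. This step needs $f_\theta$ jointly $C^k$ in $(\theta,X)$. If only $C^1$ is available, I would either strengthen the hypothesis accordingly or run the argument in the $C^1$ topology, whose Borel sets are contained in those of $C^k$. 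Exhausting $\Theta\times\RR^{n\times d}$ by countably many compact cubes $K_\ell$, it suffices to show $\mu(\Phi(K_\ell)) = 0$ for each $\ell$. Each $\Phi(K_\ell)$ is compact, hence Borel, so measurability is not an issue.

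\textbf{Step 2 (covering).} Since $\Phi$ is $L$-Lipschitz on $K_\ell$, covering $K_\ell$ by $O(\varepsilon^{-m})$ cubes of diameter $\varepsilon/L$ shows that $\Phi(K_\ell)$ is covered by $N(\varepsilon) \le C\varepsilon^{-m}$ balls $B(x_r,\varepsilon)$ in the $C^k$ norm. Hence
\begin{equation}
\mu(\Phi(K_\ell)) \le C \varepsilon^{-m} \sup_{x} \mu(B(x,\varepsilon)).
\end{equation}

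\textbf{Step 3 (small balls, the main obstacle).} I must show that $\sup_x \mu(B(x,\varepsilon)) = o(\varepsilon^{M})$ for every $M$. After recentering at the mean of $\mu$, Anderson's inequality gives $\mu(B(x,\varepsilon)) \le \mu(B(0,\varepsilon))$, since the ball is symmetric and convex. Because $\mu$ has full support on an infinite-dimensional space, its covariance has infinite rank. Hence for every $M$ there are continuous linear functionals $\ell_1,\dots,\ell_M$ (of norm $\le 1$) under which $(\ell_1,\dots,\ell_M)_\#\mu$ is a non-degenerate Gaussian on $\RR^M$. The ball $B(0,\varepsilon)$ maps into the cube $[-\varepsilon,\varepsilon]^M$, whose mass under a non-degenerate Gaussian with bounded density is at most $C_M \varepsilon^M$. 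Taking $M = m+1$ and letting $\varepsilon\to0$ in Step 2 gives $\mu(\Phi(K_\ell)) = 0$. Countable subadditivity then yields $\mu(\mathcal{M}_\mathcal{F}) = 0$.

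The delicate points are (i) the regularity needed for $\Phi$ to be Lipschitz in the $C^k$ norm, handled as in Step 1, and (ii) the interpretation of ``non-degenerate with full support''. For (ii), what the proof really uses is that the covariance of $\mu$ has infinite rank. This is automatic for full support on the infinite-dimensional space $C^k$, and I would state it explicitly.
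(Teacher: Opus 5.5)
Your argument is correct and is essentially the paper's proof. Both parametrize $\mathcal{M}_\mathcal{F}$ by the $m = p+nd$ dimensional space of pairs $(\theta, X_0)$ and push $\mu$ forward through $m+1$ continuous linear functionals to a non-degenerate Gaussian with bounded density. Both then use that a Lipschitz image of an $m$-dimensional set is Lebesgue-null in $\RR^{m+1}$: the paper invokes Sard/the area formula for $L\circ\Phi$, and your Step 2 covering proves that lemma by hand. Two small remarks. Projecting first, as the paper does, only needs $L\circ\Phi$ to be locally Lipschitz, which holds for point-evaluation functionals under merely $C^1$ vector fields, so your $C^k$-regularity worry disappears. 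Anderson's inequality is also unnecessary, since the bounded density already controls the mass of every translated cube.
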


\begin{proof}
See \cref{app:deferred-proofs}.
\end{proof}

\begin{proposition}[Zero Posterior Mass Without an ODE-Constrained Prior]
Let $\Pi_0 = \mu$ be a prior on path space with $\mu(\mathcal{M}_\mathcal{F}) = 0$.
For observed data $y$, define a posterior via
\begin{equation}
\Pi(A \mid y) = \frac{\int_A L(y \mid X) \, d\mu(X)}{\int L(y \mid X) \, d\mu(X)}
\end{equation}
with measurable likelihood $L$ and finite, positive normalizer.
Then $\Pi(\mathcal{M}_\mathcal{F} \mid y) = 0$ for every such dataset $y$.
\end{proposition}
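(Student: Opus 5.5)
The proof is essentially a one-line consequence of absolute continuity of the posterior with respect to the prior, so I will structure it as a direct measure-theoretic computation.

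\emph{Setup.} First I will record that $\mathcal{M}_\mathcal{F}$ is measurable, or at least $\mu$-measurable, so that $\Pi(\mathcal{M}_\mathcal{F}\mid y)$ is defined. If measurability of $\mathcal{M}_\mathcal{F}$ itself is delicate (it is the image of $\Theta\times\RR^{n\times d}$ under the solution map), I will work with the completion of $\mu$. Alternatively I will replace $\mathcal{M}_\mathcal{F}$ by a measurable superset $N$ with $\mu(N)=0$, which exists by \cref{prop:measure-zero} and the definition of a null set. Since $\Pi(\cdot\mid y)$ is monotone, it then suffices to show $\Pi(N\mid y)=0$.

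\emph{Main step.} Write $Z(y)=\int L(y\mid X)\,d\mu(X)\in(0,\infty)$, which holds by hypothesis. Then
\begin{equation}
\Pi(N\mid y)=\frac{1}{Z(y)}\int_N L(y\mid X)\,d\mu(X)=\frac{1}{Z(y)}\int L(y\mid X)\,\mathbf{1}_N(X)\,d\mu(X).
\end{equation}
Because $\mu(N)=0$, the integrand $L(y\mid\cdot)\mathbf{1}_N$ vanishes $\mu$-almost everywhere. Its integral is therefore zero, whether $L\ge 0$ or $L$ is merely measurable and integrable, since integrals over null sets vanish for any measurable function. Dividing by the finite positive $Z(y)$ gives $0$. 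Equivalently, $\Pi(\cdot\mid y)\ll\mu$ with density $L/Z$, and absolutely continuous measures inherit null sets.

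\emph{Obstacle and remarks.} There is no analytic difficulty; the only subtle point is the measurability and completion issue above, which I will dispatch as described. I will also remark that the conclusion holds for every $y$, not almost every $y$, precisely because the normalizer is assumed finite and positive for the given $y$. The interpretive consequence is that no amount of data can move posterior mass onto $\mathcal{M}_\mathcal{F}$ unless the prior already charges it, which requires an ODE-constrained prior on $(\theta,X_0)$.
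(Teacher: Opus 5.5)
Your proof is correct and follows the same route as the paper: the posterior is absolutely continuous with respect to $\mu$ with density proportional to $L(y\mid\cdot)$, so every $\mu$-null set is posterior-null. Your measurability care (passing to a measurable null superset or to the completion) is a reasonable refinement the paper leaves implicit. You do not need \cref{prop:measure-zero} for that step, since $\mu(\mathcal{M}_\mathcal{F})=0$ is already assumed here.
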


\begin{proof}
The posterior is absolutely continuous with respect to $\mu$, with Radon--Nikodym derivative proportional to $L(y \mid X)$.
Therefore every $\mu$-null set is posterior-null, in particular $\mathcal{M}_\mathcal{F}$.
\end{proof}

Thus, unless the ODE constraint is built into the prior/model, Bayesian updating cannot assign positive posterior mass to exactly dynamically consistent trajectories.
Asymptotic concentration near (rather than on) $\mathcal{M}_\mathcal{F}$ depends on model misspecification and is a separate question.
This is the distinction between \emph{interpolation} and \emph{dynamics learning}: both produce smooth curves through the data, but only the latter respects the constraint that velocity is state-dependent.

\textbf{Diagnostics for detecting dynamical inconsistency.}
Several tests can distinguish smooth-but-dynamically-inconsistent trajectories from genuine ODE solutions.
First, a \emph{state-dependence test}: for autonomous dynamics $\dot{X} = f(X)$, if the trajectory passes through similar positions at two different times ($X(t_1) \approx X(t_2)$), then the inferred velocities must also be similar ($\dot{X}(t_1) \approx \dot{X}(t_2)$); smoothing priors do not enforce this, and large velocity discrepancies at revisited states indicate dynamical inconsistency.
Second, an \emph{ODE residual test}: fit candidate dynamics $\hat{f}$ to the smoothed trajectory and compute residuals $r(t) = \dot{\hat{X}}(t) - \hat{f}(\hat{X}(t))$; for dynamically consistent trajectories $\|r(t)\|$ should be small, while for smooth-but-wrong trajectories residuals will be systematically large.
Third, \emph{flow consistency}: integrate the fitted $\hat{f}$ forward from various initial conditions along the trajectory; dynamically consistent trajectories will track the integrated flow, while interpolated trajectories will diverge.
Finally, \emph{medium-to-long-term forecasting} discriminates the two approaches: kinematic extrapolation assumes velocity and acceleration persist, while dynamic extrapolation adapts velocity to the current state via $f(X)$, and only the latter remains accurate when the trajectory enters new regions of state space.

\textbf{Alternative approaches with dynamical consistency.}

\begin{table}[ht]
\centering
\begin{tabular}{lll}
\toprule
\textbf{Approach} & \textbf{Smoothness} & \textbf{Dynamical Consistency} \\
\midrule
Hierarchical GP prior & Yes ($C^k$) & No: interpolation only \\
SDE $dX = f(X)\,dt + \sigma\,dW$ & No (H\"older $< 1/2$) & Yes, as $\sigma \to 0$ \\
Neural ODE & Yes ($C^k$) & Yes (by construction) \\
GP-ODE (NPODE)~\citep{heinonen2018learning} & Yes ($C^1$) & Yes: learns $f$ as GP \\
Structure-constrained (ours) & Yes & Yes: family $\mathcal{F}$ enforced \\
\bottomrule
\end{tabular}
\caption{Comparison of approaches: smoothness vs dynamical consistency.}
\end{table}

The SDE formulation $dX = f(X)\,dt + \sigma\,dW$ provides a principled bridge: Freidlin--Wentzell theory~\citep{freidlin1998random} shows that as $\sigma \to 0$, solutions concentrate around ODE solutions with rate function $J_T(\phi) = \frac{1}{2} \int_0^T \|\dot{\phi}(t) - f(\phi(t))\|^2 \, dt$.
This rate function is precisely the \emph{dynamical consistency penalty}, measuring how far a path deviates from being an ODE solution.

\begin{remark}
\textbf{Limitation (trajectory recovery).} Aligning spectral embeddings to recover continuous-time trajectories remains a difficult open problem.
The approaches discussed above address related tasks, but they do not resolve trajectory recovery for ODE-driven latent positions in RDPGs in a provable way.
To our knowledge, there is no existing method that provably recovers trajectories from ODE dynamics on RDPG latent positions.

In particular, sequential alignment suffers from error accumulation over long trajectories (\cref{sec:alignment-accumulation}).
Moreover, smoothness alone does not guarantee dynamical consistency, so interpolation can be mistaken for genuine dynamics learning (\cref{sec:bayesian-smoothing}).
Finally, holonomy implies that even in the absence of statistical noise, globally consistent gauge choices may fail to exist over loops in the base space (\cref{sec:fiber-bundle}).

These considerations motivate our focus on dynamics structure as a source of additional constraints for alignment; we return to this constructive question in \cref{sec:constructive} after analyzing concrete dynamics families in \cref{sec:dynamics}.
\end{remark}



\section{Dynamics on RDPGs}\label{sec:dynamics}

The geometric framework of \cref{sec:obstructions} characterizes the abstract structure of gauge freedom, curvature, and holonomy, and \cref{sec:trajectory-problem} demonstrates the practical difficulties of recovering trajectories from spectral embeddings.
We analyze concrete dynamics families, their observable consequences, and the interplay between geometric and statistical obstructions to estimation.

\subsection{Families of RDPG dynamics}\label{sec:dynamics-families}

We catalog concrete families of dynamics on RDPG latent positions.
These families will later serve as inductive bias for the alignment problem (\cref{sec:constructive}); the holonomy and statistical analyses that follow depend on their specific structure.

\textbf{Linear dynamics.}
The simplest family is $\dot{X} = NX$ with $N \in \RR^{n \times n}$, where $N_{ij}$ determines how node $j$'s position affects node $i$'s velocity.
By \cref{prop:horizontal}, this is horizontal if and only if $X^\top NX$ is symmetric; a sufficient condition is $N = N^\top$, which gives $X^\top NX$ symmetric for all $X$.
The induced $P$-dynamics are $\dot{P} = NP + PN^\top$, which reduces to the Lyapunov equation $\dot{P} = NP + PN$ when $N$ is symmetric.
Symmetric linear dynamics have $n(n+1)/2$ free parameters.

\textbf{Polynomial dynamics in $P$.}
A far more parsimonious family replaces the constant $N$ with a polynomial in the probability matrix:
\begin{equation}
\dot{X} = N(P) X, \quad N(P) = \sum_{k=0}^{K} \alpha_k P^k
\end{equation}
Since $P$ is symmetric, $N(P)$ is automatically symmetric and these dynamics are \emph{always horizontal}, regardless of the coefficients $\alpha_k$.
The terms have a clean interpretation: $P^0 = I$ gives self-dynamics (decay or growth), $P^1 = P$ encodes direct neighbor influence, and $P^k$ captures $k$-hop effects through weighted path counts $(P^k)_{ij}$.
Crucially, $P = XX^\top$ is gauge-invariant, so $N(P)$ can be computed from observations without solving the alignment problem.
The family has only $K + 1$ parameters independent of $n$, a dramatic reduction from $n(n+1)/2$ for general symmetric $N$ that is key for identifiability.

\textbf{Graph Laplacian dynamics.}
Setting $\dot{X} = -LX$ with $L = D - P$ and $D = \diag(P\mathbf{1})$ gives diffusion on the graph: entry-wise, $\dot{x}_i = \sum_j P_{ij}(x_j - x_i)$, so each node moves toward the weighted average of its neighbors.
Since $L$ is symmetric, this is horizontal.
Unlike polynomial dynamics, the Laplacian is \emph{not} a polynomial in $P$: the degree matrix $D$ depends on row sums, which mix eigenvector information that polynomials in $P$ cannot access.
This distinction will prove crucial for holonomy (\cref{sec:holonomy-dynamics}).

\textbf{Message-passing dynamics.}
More generally, $\dot{x}_i = \sum_j P_{ij} g(x_i, x_j)$ for some interaction function $g: \RR^d \times \RR^d \to \RR^d$.
Special cases include neighbor attraction ($g(x_i, x_j) = x_j$, giving $\dot{X} = PX$) and Laplacian diffusion ($g(x_i, x_j) = x_j - x_i$).
Horizontality depends on $g$; the locality constraint (velocity depends only on neighbors weighted by $P$) is itself a useful structural assumption.

\textbf{Observable non-horizontal dynamics.}
Not all observable dynamics are horizontal.
Centroid circulation $\dot{x}_i = (x_i - \bar{x}) A$ with $A \in \so(d)$ decomposes as $\dot{x}_i = x_i A - \bar{x} A$: the first term is pure gauge (\cref{thm:invisible}), while the shared drift $-\bar{x} A$ generally shifts dot products, so $\dot{P}$ is typically nonzero except in special symmetric configurations.
However, $X^\top \dot{X} = CA$ where $C$ is the sample covariance of positions; this is symmetric only under additional algebraic constraints (e.g., commutation-compatible structure), so the dynamics are typically non-horizontal and the trajectory spirals through the fibers.
Similarly, differential rotation $\dot{x}_i = x_i A_i$ with node-specific $A_i \in \so(d)$ gives $\dot{P}_{ij} = x_i (A_i - A_j) x_j^\top$, which is observable whenever rotation rates differ.

\textbf{Summary.}

\begin{table}[ht]
\centering
\begin{tabular}{llcl}
\toprule
\textbf{Family} & \textbf{Form} & \textbf{Horizontal?} & \textbf{Parameters} \\
\midrule
Linear symmetric & $\dot{X} = NX$, $N = N^\top$ & Always & $n(n+1)/2$ \\
Polynomial & $\dot{X} = (\sum_k \alpha_k P^k) X$ & Always & $K + 1$ \\
Laplacian & $\dot{X} = -LX$ & Always & 0 (fixed) \\
Message-passing & $\dot{x}_i = \sum_j P_{ij} g(x_i, x_j)$ & If $g$ symmetric & $|\theta|$ \\
Centroid circulation & $\dot{x}_i = (x_i - \bar{x}) A$ & No & $d(d-1)/2$ \\
General linear & $\dot{X} = NX$ & If $X^\top NX$ sym. & $n^2$ \\
\bottomrule
\end{tabular}
\caption{Families of RDPG dynamics. The polynomial family is always horizontal, has few parameters ($K + 1$ independent of $n$), and is gauge-invariant ($N(P)$ computable without alignment). Centroid circulation is observable but not horizontal.}
\label{tab:dynamics-families}
\end{table}

\subsection{Observable dynamics and the Lyapunov equation}\label{sec:p-dynamics}

\Cref{thm:invisible} characterizes what is \emph{lost} to gauge freedom.
We characterize what is \emph{preserved}: the gauge-invariant footprint of latent-space dynamics, i.e.\ what survives after projection to the base space $\mathcal{B}$ of probability matrices.

The key observation uses two ingredients together.
First, the RDPG model gives the factorization $P = XX^\top$, so $P$ inherits smoothness from $X$ and the product rule applies.
Second, a dynamics model (start with the clean case $\dot{X} = NX$) specifies how positions evolve.
Together, these imply a closed equation for $\dot{P}$:
\begin{equation}\label{eq:p-dynamics}
\dot{P} = \dot{X} X^\top + X \dot{X}^\top = NXX^\top + XX^\top N^\top = NP + PN^\top
\end{equation}
When $N$ is symmetric, this becomes the Lyapunov equation $\dot{P} = NP + PN$.

This is \emph{not} a generic statement about temporal networks.
Without the low-rank factorization, $P(t)$ is just an arbitrary time-varying probability matrix whose entries could evolve independently, follow a different matrix ODE, or fail to be differentiable at all.
\Cref{eq:p-dynamics} is specific to the RDPG-as-dynamical-system viewpoint: it is what the latent ODE looks like once you throw away gauge.

The Lyapunov operator $\mathcal{L}_P: \Sym(n) \to \Sym(n)$ defined by $\mathcal{L}_P(N) = NP + PN$ is a linear map on symmetric matrices.
To analyze it, we pass to the eigenbasis of $P$.
Let $P = U \Lambda U^\top$ be the eigendecomposition, and write $\tilde{N} = U^\top N U$ and $\tilde{\dot{P}} = U^\top \dot{P} U$ for the representations in this basis, with indices $\iota, \gamma \in \{1, \ldots, n\}$ labeling eigenvector directions (not nodes).
In this basis, $\mathcal{L}_P$ acts diagonally: $(\mathcal{L}_P(N))_{\iota\gamma} = (\lambda_\iota + \lambda_\gamma) \tilde{N}_{\iota\gamma}$.

For the RDPG setting, where $P = XX^\top$ with $X \in \RR^{n \times d}$, the matrix $P$ has rank $d$ with eigenvalues $\lambda_1 \ge \cdots \ge \lambda_d > 0$ and $\lambda_{d+1} = \cdots = \lambda_n = 0$.
The Lyapunov operator is therefore \emph{not invertible} on all of $\Sym(n)$: when both $\iota, \gamma > d$, the equation $(\lambda_\iota + \lambda_\gamma) \tilde{N}_{\iota\gamma} = \tilde{\dot{P}}_{\iota\gamma}$ reduces to $0 \cdot \tilde{N}_{\iota\gamma} = \tilde{\dot{P}}_{\iota\gamma}$.
The realizability constraint (\cref{prop:tangent}) forces $\tilde{\dot{P}}_{\iota\gamma} = 0$ for this block, so the equation says nothing about $\tilde{N}_{\iota\gamma}$.
For $n = 100$ and $d = 3$, this leaves $(n - d)(n - d + 1)/2 = 4753$ out of $n(n+1)/2 = 5050$ entries of $\tilde{N}$ completely unconstrained by the data.

However, the resolution is clean: the undetermined part of $N$ is exactly the part that does not affect $\dot{X} = NX$ for the given $X$.

\begin{proposition}[Partial Lyapunov Identifiability]\label{prop:lyapunov-invert}
Let $P = XX^\top$ with $\rank(P) = d < n$, and let $V \in \RR^{n \times d}$ and $V_\perp \in \RR^{n \times (n-d)}$ span the range and null space of $P$ respectively.
In the eigenbasis of $P$ (indices $\iota, \gamma$), the Lyapunov equation $\dot{P} = NP + PN$ uniquely determines the following blocks of $\tilde{N} = U^\top N U$:

\begin{itemize}
\item \emph{Range-range block} ($\iota, \gamma \le d$): $\tilde{N}_{\iota\gamma} = \tilde{\dot{P}}_{\iota\gamma} / (\lambda_\iota + \lambda_\gamma)$, giving $d(d+1)/2$ entries.
\item \emph{Cross block} ($\iota \le d, \gamma > d$ or vice versa): $\tilde{N}_{\iota\gamma} = \tilde{\dot{P}}_{\iota\gamma} / \lambda_\iota$, giving $d(n - d)$ entries.
\end{itemize}

The \emph{null-null block} ($\iota, \gamma > d$): $(n-d)(n-d+1)/2$ entries remain unconstrained.

The total number of determined entries is $nd - d(d-1)/2$, matching the dimension of the realizable tangent space.
\end{proposition}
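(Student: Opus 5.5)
The plan is to diagonalize the Lyapunov operator in the eigenbasis of $P$ and read the statement off entry by entry. Let $P = U\Lambda U^\top$ with $U = [V \; V_\perp]$ orthogonal and $\Lambda = \diag(\lambda_1,\ldots,\lambda_d,0,\ldots,0)$, $\lambda_\iota>0$ for $\iota\le d$. Conjugating $\dot P = NP + PN$ by $U$ gives $\tilde{\dot P} = \tilde N \Lambda + \Lambda \tilde N$. Since $\Lambda$ is diagonal, this reads entrywise
\begin{equation*}
\tilde{\dot P}_{\iota\gamma} = (\lambda_\iota + \lambda_\gamma)\,\tilde N_{\iota\gamma},
\end{equation*}
as already noted before the statement. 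Symmetry of $N$ gives symmetry of $\tilde N$, so I only need to treat $\iota\le\gamma$.

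I will then split into the three index blocks. For $\iota,\gamma\le d$, the coefficient $\lambda_\iota+\lambda_\gamma>0$, so $\tilde N_{\iota\gamma}$ is uniquely determined; these are the upper-triangular entries of a symmetric $d\times d$ block, which gives $d(d+1)/2$ of them. For $\iota\le d<\gamma$, we have $\lambda_\gamma=0$, so the coefficient is $\lambda_\iota>0$ and $\tilde N_{\iota\gamma}=\tilde{\dot P}_{\iota\gamma}/\lambda_\iota$. The mirror entries $\gamma\le d<\iota$ follow by symmetry, not independently, so this block contributes $d(n-d)$ free entries. For $\iota,\gamma>d$, the coefficient vanishes. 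The equation then reads $0=\tilde{\dot P}_{\iota\gamma}$, which is exactly the realizability condition $V_\perp^\top\dot P V_\perp=0$ of \cref{prop:tangent}. It is therefore consistent but places no constraint on $\tilde N_{\iota\gamma}$, leaving $(n-d)(n-d+1)/2$ free entries.

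The count is routine: $d(d+1)/2 + d(n-d) = nd - d^2 + d(d+1)/2 = nd - d(d-1)/2$, which matches the dimension of $T_P\mathcal{B}$ in the Dimension Count corollary. As a sanity check, the three blocks together account for all $n(n+1)/2$ entries of $\Sym(n)$.

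The main obstacle is one of interpretation rather than computation. The statement says ``uniquely determines'', and I must make sure this is meant as injectivity of $\mathcal{L}_P$ restricted to the range-range and cross entries. That injectivity is immediate from the diagonal form, provided the eigenbasis is fixed. If $P$ has repeated eigenvalues, $U$ is not unique, so I would state the result for a chosen eigenbasis. The determined subspace itself, $\{N: V_\perp^\top N V_\perp = 0\}$, is basis-independent, and it is this subspace that matters. I would close by remarking, in line with the sentence preceding the proposition, that the undetermined block does not affect $NX$, since $NX = U\tilde N U^\top V\Lambda_d^{1/2}$ only involves the columns $\gamma\le d$ of $\tilde N$.
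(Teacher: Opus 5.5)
Your proposal is correct and follows the paper's proof essentially step for step. You conjugate into the eigenbasis to get $(\lambda_\iota+\lambda_\gamma)\tilde N_{\iota\gamma}=\tilde{\dot P}_{\iota\gamma}$, split into the three index blocks, invoke realizability for the null-null block, and count. Your additional remarks are correct refinements rather than a different route: the mirrored cross entries follow by symmetry, $U$ depends on the choice of eigenbasis, and the null-null block does not affect $NX$.
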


\begin{proof}
In the eigenbasis, the Lyapunov equation gives $(\lambda_\iota + \lambda_\gamma) \tilde{N}_{\iota\gamma} = \tilde{\dot{P}}_{\iota\gamma}$ entry-wise.
For $\iota, \gamma \le d$: $\lambda_\iota + \lambda_\gamma > 0$, so $\tilde{N}_{\iota\gamma}$ is uniquely determined.
For $\iota \le d, \gamma > d$: $\lambda_\iota + \lambda_\gamma = \lambda_\iota > 0$, so $\tilde{N}_{\iota\gamma}$ is uniquely determined.
For $\iota, \gamma > d$: $\lambda_\iota + \lambda_\gamma = 0$, and realizability forces $\tilde{\dot{P}}_{\iota\gamma} = 0$, leaving $\tilde{N}_{\iota\gamma}$ free.
Counting: $d(d+1)/2 + d(n-d) = nd - d(d-1)/2$.
\end{proof}

The undetermined null-null block governs how $N$ acts on vectors orthogonal to $\col(X)$.
Since $X$ has no component in this subspace, $NX$ does not depend on this block: the dynamics $\dot{X} = NX$ are insensitive to it.
The Lyapunov equation determines $N$ on exactly the subspace that matters for the dynamics, and leaves unconstrained exactly the subspace that is dynamically irrelevant.

In summary, symmetric linear dynamics are identifiable from $(P, \dot{P})$ up to the dynamically irrelevant null-null block, without ever choosing a gauge.
This perspective extends naturally to polynomial dynamics: if $\dot{X} = N(P) X$ with $N(P) = \sum_k \alpha_k P^k$, then
\begin{equation}
\dot{P} = N(P) P + P N(P)
\end{equation}
and the coefficients $\alpha_k$ can, in principle, be recovered by fitting $\dot{P}$ in the span of the symmetric matrices $\{P^k P + P P^k\}_{k=0}^K$.
Here the polynomial structure is doing real work: it pins down the otherwise-underdetermined action of $N$ off $\col(P)$, eliminating the free null-null block by construction.

However, two fundamental obstacles prevent direct application.

\textbf{We do not observe $P(t)$.}
In practice, we observe adjacency matrices $A^{(t)}$ with $A_{ij}^{(t)} \sim \text{Bernoulli}(P_{ij}(t))$.
Entry-wise, $A_{ij}^{(t)}$ is an unbiased estimate of $P_{ij}(t)$, with variance $P_{ij}(1 - P_{ij})$.
If we have $m$ independent graphs per time point, averaging $\bar{A}^{(t)} = \frac{1}{m} \sum_{\ell=1}^m A_\ell^{(t)}$ reduces this per-entry variance by a factor $m$.

But Lyapunov inversion (and most geometry) depends on the \emph{spectrum} of $P$, not just its entries, and spectral accuracy depends on both $n$ and $m$.
Because eigendecomposition is nonlinear, the eigenvalues/eigenvectors of $\bar{A}$ are biased at finite sample size.
Asymptotically, the leading perturbation has conditional mean zero and the residual bias is small~\citep{athreya2016limit,cape2019two}, but the inverse map in \cref{prop:lyapunov-invert} divides by $(\lambda_\iota + \lambda_\gamma)$.
That division amplifies exactly the directions that are already ill-conditioned when the spectral gap is small. This is the same $1/\lambda_d$ sensitivity that showed up geometrically in \cref{sec:fiber-bundle}, now wearing an algebraic hat.

\textbf{The parameter count of $N$ on the relevant subspace.}
Even after throwing away the dynamically irrelevant null-null block, the remaining degrees of freedom can still be large: the range-range and cross blocks together have $nd - d(d-1)/2$ entries.

If $N$ is constant and you could somehow access a well-estimated pair $(P, \dot{P})$, the constraint count matches the unknown count.
But as soon as $N$ varies with time/state, you need enough \emph{excitation} in the trajectory to separate parameters.
Even in polynomial dynamics, where the coefficients $\alpha_k$ are constant, identifying them requires that $P(t)$ moves through enough of the base space to distinguish the different powers $P^k$.
This is why the polynomial family is attractive: $K+1$ parameters (independent of $n$) is not just parsimonious, it is structurally compatible with what the data can hope to constrain.
Richer families quickly become underdetermined without strong structural assumptions or regularization.

\begin{remark}
The $P$-dynamics perspective and the fiber bundle perspective address different aspects of the same problem.
\Cref{prop:lyapunov-invert} says that the dynamics are identifiable \emph{in principle} from the gauge-invariant observable $P$, up to the dynamically irrelevant null-null block.
The fiber bundle theory (\cref{sec:fiber-bundle}) addresses the harder question: what happens when we \emph{must} work with $X$ (e.g., because we need latent positions for interpretation or for the UDE pipeline), and how the geometry of the quotient space governs the difficulty of doing so.
The two perspectives are complementary: the algebraic result tells us what information is available; the geometric framework tells us what it costs to extract it.
\end{remark}

The Euclidean mirror of \citet{athreya2024euclidean} provides a concrete, estimable gauge-invariant trajectory summary. Their inter-time distance
\begin{equation}
\hat{d}_{MV}(\hat{X}_t, \hat{X}_{t'}) = \min_{W \in O(d)} n^{-1/2} \|\hat{X}_t - \hat{X}_{t'} W\|_2
\end{equation}
is a Procrustes distance on spectral embeddings. Like our Riemannian distance $d_{\mathcal{B}}([X],[Y]) = \min_{Q \in O(d)} \|X - YQ\|_F$, it quotients out the $O(d)$ gauge and yields a metric on the base space $\mathcal{B}$. The two differ in their choice of norm: $d_{\mathcal{B}}$ uses the Frobenius norm (yielding the Riemannian submersion metric from which our connection and curvature derive), while $d_{MV}$ uses the spectral norm (measuring the worst-case directional variation, after the $1/\sqrt{n}$ scaling). For $d = 1$ they coincide; for $d \ge 2$ the minimizing rotations and the resulting distances are generally distinct.

Despite this difference, the conceptual point is shared: both project from the total space to $\mathcal{B}$ and thereby discard gauge.
\citet{athreya2024euclidean} prove that $\hat{d}_{MV}$ consistently estimates $d_{MV}$ from adjacency spectral embeddings at rate $O(\log(n) / \sqrt{n})$ (their Theorem~6), and then prove consistency of the CMDS mirror estimate from the estimated distance matrix (their Theorem~7), providing a concrete estimation guarantee for a gauge-invariant quantity. From the dynamics-learning perspective, however, any scalar inter-time distance (whether Frobenius- or spectral-norm-based) captures \emph{how far} $P$ has moved but not \emph{how}: it does not recover the generator $N$ or the Lyapunov structure $\dot{P} = NP + PN$.
Extending the mirror's estimation guarantees from the scalar distance to the full matrix $\dot{P}$ is an interesting open problem that would connect their framework to the Lyapunov inversion of \cref{prop:lyapunov-invert}.

\subsection{Holonomy for horizontal dynamics families (proved/conditional/conjectural)}\label{sec:holonomy-dynamics}

The holonomy obstruction (\cref{sec:fiber-bundle}) raises a concrete question: for which horizontal dynamics families is holonomy trivial, and for which is it nontrivial?
This distinction has practical bite: trivial holonomy means a globally consistent gauge is achievable \emph{in principle} (put differently: alignment is ``only'' statistical); nontrivial holonomy means that even perfect local alignment can accumulate global drift over cycles.

As summarized in \cref{tab:result-status}, this section contains:
\begin{enumerate}
\item proved local criteria and the $d = 2$ full restricted holonomy consequence,
\item a conditional theorem/proposition for general $d$,
\item and a conjectural generic full-holonomy extension for $d \ge 3$.
\end{enumerate}

In this section we make the mechanism explicit for two dynamics families.
The punchline is a commuting/non-commuting contrast:
polynomial dynamics use generators that commute along the trajectory (hence no curvature along that path, and no loops to begin with), while Laplacian-type generators can fail to commute; whenever the projected-commutator criterion is nonzero, curvature is positive and nontrivial holonomy becomes possible.

\textbf{Curvature from non-commuting generators.}
Consider horizontal dynamics $\dot{X} = M(X) X$ with $M(X)$ symmetric, so that $X^\top \dot{X} = X^\top M(X) X$ is symmetric and the dynamics are horizontal.
At two distinct times $t_1, t_2$, the generators $M_1 = M(X(t_1))$ and $M_2 = M(X(t_2))$ are both symmetric.
The commutator $[M_1, M_2] = M_1 M_2 - M_2 M_1$ is skew-symmetric (since transposing reverses the order).

The Lie bracket of the corresponding horizontal vector fields $\bar{\xi}_i(X) = M_i X$ (constant-coefficient extensions, horizontal since each $M_i$ is symmetric) is:
\begin{equation}
[\bar{\xi}_1, \bar{\xi}_2](X) = (M_2 M_1 - M_1 M_2) X = -[M_1, M_2] X
\end{equation}

Writing $S = -[M_1, M_2]$ (skew-symmetric), the bracket is $[\bar{\xi}_1, \bar{\xi}_2](X) = SX$.
This is \emph{not} automatically vertical: vertical directions are right-multiplications $X\Omega$, while $SX$ is a left-multiplication. They only coincide in special cases (roughly, when $S$ preserves $\col(X)$).

But for curvature we only need the \emph{vertical component} $[\bar{\xi}_1, \bar{\xi}_2]^{\mathcal{V}}$.
Projecting $Z = SX$ onto $\mathcal{V}_X = \{X\Omega : \Omega \in \so(d)\}$ gives $X\Omega^*$ where $\Omega^*$ solves the Lyapunov equation
\begin{equation}
G \Omega^* + \Omega^* G = 2 \skewop(X^\top Z) = 2 X^\top S X
\end{equation}
with $G = X^\top X$ (see \cref{app:vertical-projection}).
Since $G$ is positive definite, this has a unique solution. In the eigenbasis of $G$:
\begin{equation}
\Omega^*_{\iota\gamma} = \frac{2(X^\top S X)_{\iota\gamma}}{\lambda_\iota + \lambda_\gamma}
\end{equation}

So the vertical component vanishes iff the \emph{projected commutator} vanishes:
$X^\top [M_1, M_2] X = 0$.

\begin{proposition}[Generic Nonvanishing of Projected Skew Action]\label{prop:generic-projected-skew}
Let $n \ge d \ge 2$ and let $S \in \so(n)$ be nonzero.
Define
\begin{equation}
\mathcal{N}_S := \{X \in \RR_*^{n \times d} : X^\top S X = 0\}.
\end{equation}
Then $\mathcal{N}_S$ is a proper real-algebraic subset of $\RR_*^{n \times d}$.
In particular, it has empty interior and Lebesgue measure zero.
\end{proposition}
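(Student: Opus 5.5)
The plan is to show that $\mathcal{N}_S$ is the zero set, inside the open set $\RR_*^{n \times d}$, of a family of polynomials that are not all identically zero. The standard fact that a nonzero real polynomial vanishes only on a closed, measure-zero set with empty interior then finishes the argument.

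First, the algebraic description. The entries of $X^\top S X$ are quadratic polynomials in the entries of $X$:
\[
(X^\top S X)_{\iota\gamma} = x^{(\iota)\top} S x^{(\gamma)},
\]
where $x^{(\iota)}$ denotes the $\iota$-th column of $X$. Because $S$ is skew, the diagonal entries vanish identically. So $\mathcal{N}_S$ is the intersection of $\RR_*^{n \times d}$ with the common zero set of the $d(d-1)/2$ polynomials $p_{\iota\gamma}(X) = x^{(\iota)\top} S x^{(\gamma)}$, $\iota<\gamma$. The set $\RR_*^{n\times d}$ is itself semi-algebraic: it is the complement of the vanishing of all $d\times d$ minors, a Zariski-open set. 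Hence $\mathcal{N}_S$ is a real-algebraic (relatively closed) subset of it.

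Second, properness, which is the only step with content. I need one full-rank $X$ with $X^\top S X \neq 0$.
\begin{itemize}
\item Since $S \neq 0$ is skew, there exist unit vectors $u,v$ with $u^\top S v \neq 0$. Take any nonzero entry $S_{ab}$ with $a\neq b$ and set $u=e_a$, $v=e_b$; these are automatically linearly independent.
\item Put $u,v$ as the first two columns of $X$. Complete to $d$ linearly independent columns using $n\ge d\ge 2$; this is possible because $u,v$ are independent standard basis vectors.
\item Then $(X^\top S X)_{12} = S_{ab}\neq 0$, while $X$ has rank $d$.
\end{itemize}
The one subtlety is ensuring $u,v$ are linearly independent. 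Skewness gives $u^\top S u=0$, so any pair with $u^\top S v \neq 0$ is automatically independent, and the standard-basis choice avoids the issue entirely.

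Third, the conclusion. The polynomial $p_{12}$ is not identically zero on $\RR^{n\times d}$. Its zero set therefore has Lebesgue measure zero, by the standard induction on variables via Fubini. It also has empty interior, since a polynomial vanishing on an open set is identically zero. Since $\mathcal{N}_S\subseteq \{p_{12}=0\}\cap\RR_*^{n\times d}$, it inherits both properties. This also gives properness directly, because $\RR_*^{n\times d}$ is a nonempty open set and so has positive measure.

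I expect no real obstacle. The only care needed is to phrase "proper real-algebraic subset" relative to the open set $\RR_*^{n\times d}$, and to exhibit the witness with full rank.
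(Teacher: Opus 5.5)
Your proposal is correct and follows essentially the same route as the paper. You use the polynomial description of $X^\top S X$ and the standard-basis witness $e_a, e_b$ with $S_{ab}\neq 0$, completed to full column rank, so that $(X^\top S X)_{12}\neq 0$. You then use the fact that a polynomial that is not identically zero has a zero set of measure zero with empty interior; reducing to the single polynomial $p_{12}$ only makes explicit the general fact about proper real-algebraic subsets that the paper cites.
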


\begin{proof}
Each entry of $X^\top S X$ is polynomial in the entries of $X$, so $\mathcal{N}_S$ is algebraic.
It is proper because $S \neq 0$ implies there exist indices $i \neq j$ with $S_{ij} \neq 0$.
Choose $x_1 = e_i$ and $x_2 = e_j$; then $x_1^\top S x_2 = S_{ij} \neq 0$, so for $X_0 = [x_1, x_2, x_3, \ldots, x_d]$ with remaining columns chosen to make full column rank, $(X_0^\top S X_0)_{12} \neq 0$.
Hence $X_0$ is outside $\mathcal{N}_S$, so $\mathcal{N}_S$ is proper.
Proper real-algebraic subsets of Euclidean space have empty interior and measure zero.
\end{proof}

Thus $X^\top [M_1, M_2] X = 0$ is strictly weaker than $[M_1, M_2]=0$, and for fixed nonzero commutator it fails only on an algebraic exceptional set in $X$.
Intersecting that exceptional set with the interior of the valid domain $\mathcal{E}$ preserves this interpretation: on the interior model domain, the failure set remains proper.

Remembering \cref{prop:curv-spectral-gap}, by O'Neill's formula~\citep{oneill1966fundamental} the sectional curvature of the 2-plane spanned by the projections of $\bar{\xi}_1, \bar{\xi}_2$ in $\mathcal{B}$ is: $K(\xi_1, \xi_2) = \frac{3}{4} \|[\bar{\xi}_1, \bar{\xi}_2]^{\mathcal{V}}\|^2 / (\|\bar{\xi}_1\|^2 \|\bar{\xi}_2\|^2 - \langle \bar{\xi}_1, \bar{\xi}_2 \rangle^2)$.
The $1 / (\lambda_\iota + \lambda_\gamma)$ factors in $\Omega^*$ connect the curvature directly to the connection coefficients of the fiber bundle: the same denominators that amplify gauge sensitivity also amplify the vertical bracket.

Since $\RR_*^{n \times d}$ is flat (it is an open subset of Euclidean space), the base curvature arises \emph{entirely} from the vertical bracket.

\begin{proposition}[Curvature Criterion for Horizontal Dynamics]\label{prop:curvature-criterion}
For horizontal dynamics $\dot{X} = M(X) X$ with $M$ symmetric, the sectional curvature along the trajectory in $\mathcal{B}$ vanishes if and only if the projected commutator vanishes: $X^\top [M(X(t_1)), M(X(t_2))] X = 0$ for all $t_1, t_2$ along the trajectory.
A sufficient condition is that the full generators commute: $[M(X(t_1)), M(X(t_2))] = 0$.
When the projected commutator is nonzero, the curvature is strictly positive, with:
\begin{equation}
\|[\bar{\xi}_1, \bar{\xi}_2]^{\mathcal{V}}\|^2 = 4 \sum_{\iota < \gamma} \frac{\lambda_\iota \lambda_\gamma}{\lambda_\iota + \lambda_\gamma} [(U^\top [M_1, M_2] U)_{\iota\gamma}]^2
\end{equation}
where $P = U \Lambda U^\top$ is the eigendecomposition. The $1 / (\lambda_\iota + \lambda_\gamma)$ factors link the curvature directly to the connection coefficients of the fiber bundle.
\end{proposition}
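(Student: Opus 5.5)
The plan is to reduce the statement to the explicit vertical-projection computation in the text preceding \cref{prop:generic-projected-skew}, and then evaluate the norm of that vertical component in the spectral basis of $P$.

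\textbf{Step 1 (tensoriality).} First I would justify using the constant-coefficient extensions $\bar{\xi}_i(X)=M_iX$ with $M_i=M(X(t_i))$ frozen. By \cref{prop:horizontal}, both fields are horizontal everywhere, since each $M_i$ is symmetric. For horizontal fields, the vertical part of the Lie bracket at a point depends only on their values at that point. This holds because $[\bar\xi_1,\bar\xi_2]^{\mathcal V}=2\mathcal{A}_{\bar\xi_1}\bar\xi_2$, and O'Neill's $A$-tensor is tensorial; equivalently, $\omega([H_1,H_2])=-d\omega(H_1,H_2)$ on horizontal fields. Hence the curvature of the plane spanned by the values $M_1X, M_2X$ is computed correctly by these extensions.

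\textbf{Step 2 (bracket and projection).} The bracket is $[\bar\xi_1,\bar\xi_2](X)=SX$ with $S=-[M_1,M_2]\in\so(n)$. Its vertical projection is $X\Omega^*$, where $G\Omega^*+\Omega^*G=2X^\top SX$ and $G=X^\top X$ (\cref{app:vertical-projection}). I would then fix the gauge $X=V\Lambda^{1/2}$, with $P=V\Lambda V^\top$ and $V$ the first $d$ columns of $U$. All quantities in the claim are gauge invariant: under $X\mapsto XQ$ we get $\Omega^*\mapsto Q^\top\Omega^*Q$, and $\|X\Omega^*\|_F$ is unchanged. In this gauge $G=\Lambda$ and $X^\top SX=\Lambda^{1/2}\tilde S\Lambda^{1/2}$, where $\tilde S=V^\top SV$ is the leading $d\times d$ block of $U^\top SU$. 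This gives
\begin{equation}
\Omega^*_{\iota\gamma}=\frac{2\sqrt{\lambda_\iota\lambda_\gamma}\,\tilde S_{\iota\gamma}}{\lambda_\iota+\lambda_\gamma}.
\end{equation}

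\textbf{Step 3 (norm).} Next I would compute $\|X\Omega^*\|_F^2=\tr(\Omega^{*\top}\Lambda\Omega^*)=\sum_{\iota,\gamma}\lambda_\iota(\Omega^*_{\iota\gamma})^2$. Grouping the terms $(\iota,\gamma)$ and $(\gamma,\iota)$ and using skew-symmetry, each unordered pair $\iota<\gamma$ contributes $(\lambda_\iota+\lambda_\gamma)(\Omega^*_{\iota\gamma})^2=4\lambda_\iota\lambda_\gamma\tilde S_{\iota\gamma}^2/(\lambda_\iota+\lambda_\gamma)$. The diagonal terms vanish because $\tilde S$ is skew. Since $\tilde S_{\iota\gamma}^2=[(U^\top[M_1,M_2]U)_{\iota\gamma}]^2$, and only indices $\iota,\gamma\le d$ contribute (the factor $\lambda_\iota\lambda_\gamma$ kills the others), this is exactly the displayed formula.

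\textbf{Step 4 (criterion).} All weights $\lambda_\iota\lambda_\gamma/(\lambda_\iota+\lambda_\gamma)$ with $\iota,\gamma\le d$ are strictly positive. So the vertical bracket vanishes if and only if $\tilde S=0$, which holds if and only if $X^\top[M_1,M_2]X=\Lambda^{1/2}\tilde S\Lambda^{1/2}=0$.

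By O'Neill's formula (\cref{prop:curv-spectral-gap}), $K=\tfrac34\|[\bar\xi_1,\bar\xi_2]^{\mathcal V}\|^2$ divided by the Gram determinant of $M_1X,M_2X$. That determinant is positive whenever the two vectors span a genuine 2-plane. Therefore the curvature is strictly positive exactly when the projected commutator is nonzero, and zero otherwise. Quantifying over all pairs $t_1,t_2$ gives the "along the trajectory" statement. The sufficient condition is immediate: $[M_1,M_2]=0$ forces $X^\top[M_1,M_2]X=0$.

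\textbf{Expected obstacle.} The main subtlety is Step 1, namely legitimately replacing the state-dependent field $M(X)X$ by frozen-generator extensions. I would argue this from tensoriality of the $A$-tensor rather than by differentiating $M(X)$. A secondary point is that the degenerate case where $M_1X$ and $M_2X$ are linearly dependent must be excluded, since sectional curvature is then undefined. I would state this exclusion explicitly.
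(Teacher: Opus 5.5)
Your proposal is correct and follows the paper's proof: you freeze the generators into constant-coefficient horizontal fields, compute the bracket $-[M_1,M_2]X$, project it vertically via the Lyapunov equation of \cref{app:vertical-projection}, and evaluate $\tr(\Omega^{*\top}G\Omega^*)$ in the canonical gauge $X=V\Lambda^{1/2}$, exactly as in \cref{app:vertical-norm}. Your Step 1 (tensoriality of the $A$-tensor, which justifies the frozen extensions) and your explicit exclusion of degenerate planes make rigorous two points the paper leaves implicit. One small correction: read the displayed sum as running over $\iota<\gamma\le d$, because when $\iota,\gamma>d$ the weight $\lambda_\iota\lambda_\gamma/(\lambda_\iota+\lambda_\gamma)$ is $0/0$, not $0$.
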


\begin{proof}
The Lie bracket of the constant-coefficient horizontal extensions $\bar{\xi}_i(Y) = M_i Y$ is $[\bar{\xi}_1, \bar{\xi}_2](X) = -[M_1, M_2] X$ (standard commutator of linear vector fields).
The vertical projection of $SX$ (with $S = -[M_1, M_2]$ skew-symmetric) is $X\Omega^*$ where $\Omega^*$ solves $G\Omega^* + \Omega^* G = 2 X^\top S X$ (\cref{app:vertical-projection}).
This vanishes iff $X^\top [M_1, M_2] X = 0$.
The norm formula follows from $\|X\Omega^*\|^2 = \tr(\Omega^{*\top} G \Omega^*)$ evaluated in the eigenbasis of $G$ (see \cref{app:vertical-norm} for the computation).
\end{proof}

\textbf{Polynomial dynamics: trivial holonomy.}
For polynomial dynamics $\dot{X} = N(P) X$ with $N(P) = \sum_{k=0}^K \alpha_k P^k$, the induced $P$-dynamics are:
\begin{equation}
\dot{P} = N(P) P + P N(P) = 2 \sum_{k=0}^K \alpha_k P^{k+1}
\end{equation}
This is a polynomial in $P$ alone. Writing $P = U \Lambda U^\top$ in its eigendecomposition, every power $P^k = U \Lambda^k U^\top$ shares the same eigenvectors $U$.
Therefore $\dot{P} = U (2 \sum \alpha_k \Lambda^{k+1}) U^\top$ also has eigenvectors $U$: the eigenvectors of $P$ are stationary under polynomial dynamics, provided the spectrum of $P(0)$ is simple (all eigenvalues distinct).

The trajectory $P(t)$ lies on a $d$-dimensional submanifold of $\mathcal{B}$ parameterized by the eigenvalues $(\lambda_1(t), \ldots, \lambda_d(t))$ alone, with each eigenvalue evolving independently:
\begin{equation}
\dot{\lambda}_\iota = 2 \sum_{k=0}^K \alpha_k \lambda_\iota^{k+1}
\end{equation}

Two consequences follow immediately.

\begin{proposition}[Polynomial Dynamics: Trivial Holonomy]\label{prop:poly-trivial-holonomy}
Polynomial dynamics $\dot{X} = N(P) X$ with $N(P) = \sum \alpha_k P^k$ and simple initial spectrum ($\lambda_1(0) > \lambda_2(0) > \cdots > \lambda_d(0) > 0$) have trivial holonomy, in the following concrete sense:
\begin{enumerate}
\item The generators $N(P(t_1))$ and $N(P(t_2))$ commute for all $t_1, t_2$ (they share eigenvectors), so the curvature \emph{along this trajectory} vanishes by \cref{prop:curvature-criterion}.
\item Each eigenvalue satisfies a 1D autonomous ODE, so non-constant eigenvalue trajectories are monotone. In particular, the induced $P(t)$ cannot trace a nontrivial closed loop in $\mathcal{B}$.
\end{enumerate}
\end{proposition}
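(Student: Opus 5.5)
The plan is to establish first that the eigenvectors of $P(t)$ are frozen along the flow, and then read off both claims. Write $P(0) = U \Lambda(0) U^\top$ with $U \in \RR^{n \times d}$ orthonormal and $\Lambda(0)$ simple and positive. I would posit the ansatz $P(t) = U \Lambda(t) U^\top$, where each $\lambda_\iota(t)$ solves the scalar ODE $\dot{\lambda}_\iota = 2\sum_k \alpha_k \lambda_\iota^{k+1}$ with initial value $\lambda_\iota(0)$. Since $P^k = U\Lambda^k U^\top$, this ansatz satisfies $\dot{P} = 2\sum_k \alpha_k P^{k+1}$, which is the induced $P$-equation. The right-hand side is polynomial, hence locally Lipschitz, so uniqueness of solutions gives that the ansatz \emph{is} the trajectory $P(t) = X(t)X(t)^\top$ on the interval where (S2) holds. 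Equivalently, one can argue at the level of $X$: $X(t) = U\Lambda(t)^{1/2}Q_0$ solves $\dot{X} = N(P)X$, and the solution is unique.

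Simplicity is preserved along the flow, because distinct solutions of the same 1D autonomous ODE cannot cross. Positivity is preserved as well, because $\lambda = 0$ is an equilibrium of that ODE. Hence the rank stays $d$, and $U$ is a well-defined, time-independent eigenbasis.

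For claim 1, $N(P(t_i)) = U\, \textstyle\sum_k \alpha_k \Lambda(t_i)^k U^\top + \alpha_0 (I - UU^\top)$ for $i=1,2$. Both matrices are diagonal in the same orthonormal basis of $\RR^n$ (the basis obtained by completing $U$), so they commute. The sufficient condition of \cref{prop:curvature-criterion} then applies: $[M_1,M_2]=0$, so the projected commutator vanishes and the vertical bracket is zero. By O'Neill's formula, the sectional curvature of every plane spanned by pairs of trajectory generators vanishes. I would state carefully that this is curvature ``along this trajectory'' in the sense of \cref{prop:curvature-criterion}, not flatness of $\mathcal{B}$.

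For claim 2, a non-constant solution of a scalar autonomous $C^1$ ODE is strictly monotone, since it cannot pass through a zero of the vector field. Suppose $P(t_a) = P(t_b)$ for some $t_a < t_b$. The ordered eigenvalues would then agree, so $\lambda_\iota(t_a) = \lambda_\iota(t_b)$ for every $\iota$, which forces each $\lambda_\iota$ to be constant and $P(t)$ to be constant. Hence no nontrivial closed loop is traced, the holonomy question along the trajectory is vacuous, and $Q_\gamma = I$ trivially.

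The main obstacle is the uniqueness step that justifies fixing $U$, as opposed to merely observing that $\dot{P}$ commutes with $P$. I handle it through the explicit ansatz plus Picard--Lindelöf rather than through eigenvector perturbation theory. A secondary subtlety is that the monotonicity argument needs the labeling of eigenvalues to be consistent; this is exactly what the non-crossing of simple eigenvalues provides.
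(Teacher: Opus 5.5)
Your argument is correct. Claim 2 follows the paper's reasoning: a non-constant solution of a scalar autonomous ODE cannot return to a previous value. You also make explicit a step the paper passes over, namely that $P(t_a)=P(t_b)$ forces $\lambda_\iota(t_a)=\lambda_\iota(t_b)$ for each label, because the simple eigenvalues stay ordered.

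The genuine difference is in how you show the eigenvectors are frozen. The paper differentiates an assumed smooth eigendecomposition $P(t)=U(t)\Lambda(t)U(t)^\top$ and derives $(U^\top\dot P U)_{\iota\gamma}=\Omega_{\iota\gamma}(\lambda_\gamma-\lambda_\iota)$ with $\Omega=U^\top\dot U$. Since $\dot P$ is diagonal in the eigenbasis and the eigenvalues do not cross, this gives $\Omega=0$. You instead write down the candidate solution $P(t)=U\Lambda(t)U^\top$ and invoke uniqueness for the closed, locally Lipschitz equation $\dot P=2\sum_k\alpha_kP^{k+1}$.

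Your route has several advantages:
\begin{itemize}
\item It does not presuppose a differentiable eigenbasis.
\item It automatically excludes motion of $U$ out of $\col(U)$. The condition $U^\top\dot U=0$ alone does not do this; the paper's version needs the extra observation $U_\perp^\top\dot P U=U_\perp^\top\dot U\Lambda=0$.
\item It would give claim 1 even without a simple spectrum. Simplicity is then needed only for consistent labeling in claim 2.
\item Your formula $N(P)=U(\sum_k\alpha_k\Lambda^k)U^\top+\alpha_0(I-UU^\top)$ keeps explicit the $\alpha_0$ contribution on $\col(U)^\perp$, which the paper's shorthand $U(\sum_k\alpha_k\Lambda^k)U^\top$ leaves implicit.
\end{itemize}

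The paper's differential identity has its own advantage: it transfers to non-polynomial generators. It is what shows that a transverse component of $U^\top\dot P U$, such as the one produced by $DP+PD$ in the Laplacian case, forces the eigenvectors to rotate. Your ansatz works only because polynomial dynamics are closed under the spectral calculus of $P$.
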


\begin{proof}
For (1): since the eigenvectors of $P$ are stationary, $N(P(t)) = U (\sum \alpha_k \Lambda(t)^k) U^\top$ for all $t$, with the same $U$. Any two diagonal matrices commute, so $[N(P(t_1)), N(P(t_2))] = U [\sum \alpha_k \Lambda(t_1)^k, \sum \alpha_k \Lambda(t_2)^k] U^\top = 0$.

For (2): each $\lambda_\iota(t)$ satisfies the autonomous ODE $\dot{\lambda}_\iota = f(\lambda_\iota)$ with $f(x) = 2 \sum \alpha_k x^{k+1}$.
By uniqueness of ODE solutions, if $\lambda_\iota(t_0) = \lambda_\iota(t_1)$ for $t_0 \neq t_1$, then $\lambda_\iota$ must be constant.
Non-constant solutions therefore cannot be periodic: they cannot return to a previous value. Hence no nontrivial closed loop in eigenvalue space is possible. As a consequence, no nontrivial closed loop in $\mathcal{B}$ is possible.
\end{proof}

\textbf{What polynomial dynamics look like for nodes.}
The statement that eigenvectors of $P$ are stationary deserves both rigorous justification and concrete interpretation.

\emph{Why eigenvectors are fixed.}
Write $P(t) = U(t) \Lambda(t) U(t)^\top$ and let $\Omega = U^\top \dot{U} \in \so(d)$ be the eigenvector rotation rate.
Differentiating $P = U \Lambda U^\top$ and projecting into the eigenbasis gives $(U^\top \dot{P} U)_{\iota\gamma} = \Omega_{\iota\gamma}(\lambda_\gamma - \lambda_\iota)$ for $\iota \neq \gamma$.
Since $\dot{P} = 2 \sum \alpha_k P^{k+1}$ is diagonal in the eigenbasis, the left side vanishes, so $\Omega_{\iota\gamma} (\lambda_\gamma - \lambda_\iota) = 0$ for all $\iota \neq \gamma$.
It remains to show that eigenvalues remain distinct.
All $d$ eigenvalues satisfy the same autonomous ODE $\dot{\lambda} = f(\lambda)$ with $f(x) = 2 \sum \alpha_k x^{k+1}$, differing only in initial conditions.
If $\lambda_\iota(0) \neq \lambda_\gamma(0)$, then by uniqueness of ODE solutions $\lambda_\iota(t) \neq \lambda_\gamma(t)$ for all $t$: if they ever coincided, they would be the same solution, contradicting their distinct initial conditions.
Therefore $\lambda_\gamma - \lambda_\iota \neq 0$, which forces $\Omega = 0$: the eigenvectors of $P$ are genuinely stationary, not merely stationary to first order.

The simple spectrum assumption is essential: if $\lambda_\iota(0) = \lambda_\gamma(0)$ for some $\iota \neq \gamma$, the eigenbasis within the repeated eigenspace is not unique, and polynomial dynamics preserve this degeneracy (the repeated eigenvalues evolve identically, maintaining the degeneracy for all time).
In this case, eigenvector ``stationarity'' is ill-defined because the basis itself is non-unique.
For block models with equal block sizes, for instance, repeated eigenvalues arise from the model symmetry.
This is a measure-zero condition in the space of initial conditions.

\emph{What this means for nodes.}
In the canonical gauge $X = U \Lambda^{1/2}$, node $i$ has latent position $x_i(t) = (U_{i1} \sqrt{\lambda_1(t)}, \ldots, U_{id} \sqrt{\lambda_d(t)})$.
The loading $U_{i\iota}$ (which can be read as the ``membership weight'' of node $i$ in eigenvector direction $\iota$) is constant; what evolves is the scale $\sqrt{\lambda_\iota(t)}$ of each direction.

Since the eigenvalue ODE $\dot{\lambda} = f(\lambda)$ is nonlinear for $K \ge 1$, eigenvalues with different magnitudes evolve at different rates, and the ratios $\lambda_\iota(t) / \lambda_\gamma(t)$ change over time.
The effect on the node positions is an \emph{anisotropic rescaling} of the latent space along fixed axes: in $d = 2$, the point cloud deforms as if inscribed in an ellipse whose axes do not rotate but whose eccentricity changes.
Node directions in $\RR^d$ do change (they are not merely rescaled in magnitude), but the change is tightly constrained: all deformation is captured by the $d$ scalar functions $\lambda_\iota(t)$.

The exception is linear dynamics ($K = 0$, $f(\lambda) = 2 \alpha_0 \lambda$), where $\lambda_\iota(t) = \lambda_\iota(0) e^{2 \alpha_0 t}$ and all eigenvalues grow or decay at the same exponential rate.
In this case the ratios are constant, the point cloud is rescaled isotropically, and node angular positions in $\RR^d$ are truly fixed.
For quadratic or higher dynamics ($K \ge 1$), the nonlinearity of $f$ causes larger eigenvalues to evolve differently from smaller ones, producing genuine reshaping of the point cloud's geometry while preserving the eigenvector axes that define community alignment.
Note, however, that preserving the axes does not guarantee preserving community \emph{separability}: if eigenvalue ratios diverge (e.g., $\lambda_d(t) \to 0$ while $\lambda_1$ grows), the point cloud collapses onto a lower-dimensional subspace, and communities that were distinguishable along the $\lambda_d$ direction may become indistinguishable.

\textbf{Laplacian dynamics: nontrivial holonomy.}
Graph Laplacian dynamics $\dot{X} = -LX$ with $L = D - P$, where $D = \diag(P\mathbf{1})$, provide a natural example of horizontal dynamics with nontrivial holonomy.
The generator $-L = P - D$ is symmetric, so the dynamics are horizontal.
But $L$ is \emph{not} a polynomial in $P$: the degree matrix $D$ depends on the row sums of $P$, which mix eigenvector information that a polynomial in $P$ cannot access.

There is a notable exception: if $D$ is a scalar multiple of the identity ($D = cI$, i.e., the graph is \emph{regular} with all node degrees equal), then $L = cI - P$ is affine in $P$, and the Laplacian dynamics reduce to the polynomial family.
Regular graphs (and more generally, configurations with constant row sums of $P$, i.e.\ $P\mathbf{1} = c\mathbf{1}$) thus have trivial holonomy under Laplacian dynamics.
The nontrivial holonomy results below require that $D$ is \emph{not} scalar: structural heterogeneity in the graph is the source of holonomy.

The $P$-dynamics under Laplacian flow are:
\begin{equation}
\dot{P} = -LP - PL = -(D - P)P - P(D - P) = 2P^2 - DP - PD
\end{equation}
The term $2P^2$ preserves eigenvectors of $P$ (it is a polynomial in $P$), but $DP + PD$ does not: in the eigenbasis $P = U \Lambda U^\top$, the degree matrix $D = \diag(U \Lambda U^\top \mathbf{1})$ is diagonal in the \emph{node} basis, not the eigenbasis.
Thus $U^\top D U$ is generally a full matrix, and $\dot{P}$ has eigenvector components transverse to those of $P$.

\emph{Consequence:} the eigenvectors of $P$ rotate under Laplacian dynamics, the trajectory sweeps out area in $\mathcal{B}$ transverse to the eigenvalue-parameterized submanifold, and \cref{prop:curvature-criterion} guarantees positive curvature along any portion of the trajectory where $[L(P(t_1)), L(P(t_2))] \neq 0$.

\textbf{What Laplacian dynamics look like for nodes.}
The contrast with polynomial dynamics is sharp at the node level.
Under Laplacian diffusion, each node's velocity $\dot{x}_i = -\sum_j L_{ij} x_j$ is a weighted average of the positions of its neighbors (with weights from $P$), minus a restoring term from the node's own degree.
This acts like heat diffusion on the latent positions: high-degree nodes experience stronger mean-reversion, while the diffusion couples all positions through the network structure.
Because the coupling depends on the \emph{row sums} of $P$, mixing contributions from all eigenvector directions, the eigenvectors of $P$ rotate over time.
In the ellipse analogy for $d = 2$: not only does the eccentricity change, but the axes themselves rotate.
The community structure is no longer static; nodes' membership weights across latent dimensions genuinely evolve, creating the richer dynamics that produce nontrivial holonomy.

\begin{proposition}[Laplacian Dynamics: Local Commutator Criterion for Nontrivial Holonomy]\label{prop:laplacian-holonomy}
Let $X(t)$ solve Laplacian dynamics $\dot{X} = -L(P) X$, with $P = XX^\top$ and $L = D - P$.
Suppose there exists $t_* \in [0,T)$ such that
\begin{equation}
C_* := X(t_*)^\top [L(t_*), \dot{L}(t_*)] X(t_*) \neq 0.
\end{equation}
Then there exists $\epsilon_0 > 0$ such that for all $0 < |h| < \epsilon_0$,
\begin{equation}
X(t_*)^\top [L(t_*), L(t_* + h)] X(t_*) \neq 0.
\end{equation}
Consequently, by \cref{prop:curvature-criterion}, the sectional curvature along the trajectory is strictly positive on that local two-time span.
In particular, there exist contractible loops in a neighborhood of that region whose restricted holonomy is nontrivial.
\end{proposition}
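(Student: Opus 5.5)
The argument is a first-order Taylor expansion in $h$ around $t_*$, followed by an appeal to \cref{prop:curvature-criterion} and the Ambrose--Singer theorem. Along a solution, $X(t)$ is smooth by (S3). Hence $P(t)=X(t)X(t)^\top$, $D(t)=\diag(P(t)\mathbf{1})$ and $L(t)=D(t)-P(t)$ are smooth in $t$, at least $C^2$ on a neighbourhood of $t_*$ inside the interior of $\mathcal{E}$ (S2).

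Fix $X_*=X(t_*)$ and $L_*=L(t_*)$, and define the matrix-valued function
\begin{equation}
F(h) := X_*^\top [L_*, L(t_*+h)] X_*.
\end{equation}
Then $F(0)=X_*^\top[L_*,L_*]X_*=0$. By Taylor's theorem with remainder, $L(t_*+h)=L_*+h\dot L(t_*)+R(h)$ with $\|R(h)\|_F\le c\,h^2$ on a compact neighbourhood, where $c$ depends on $\sup\|\ddot L\|$. Bilinearity of the commutator then gives
\begin{equation}
F(h) = h\,C_* + X_*^\top [L_*, R(h)] X_*,
\end{equation}
and the remainder is bounded by $2\|X_*\|^2\|L_*\|\,c\,h^2$. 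Put $\epsilon_0 = \min\{\delta,\ \|C_*\|_F / (2\|X_*\|^2\|L_*\|c + 1)\}$, where $\delta$ is the radius of the neighbourhood. For $0<|h|<\epsilon_0$ this yields $\|F(h)\|_F \ge |h|\|C_*\|_F - |h|\cdot|h|(\cdots) > 0$, which is the displayed nonvanishing claim. The same expansion shows that $F(h)/h\to C_*\neq0$, so the nonvanishing holds robustly and not just pointwise.

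For the curvature claim, apply \cref{prop:curvature-criterion} with $M_1=-L_*$ and $M_2=-L(t_*+h)$. Both are symmetric, and the sign cancels in the commutator. The vertical bracket of the constant-coefficient horizontal extensions $\bar\xi_i(Y)=M_iY$ at $X_*$ has squared norm $4\sum_{\iota<\gamma}\frac{\lambda_\iota\lambda_\gamma}{\lambda_\iota+\lambda_\gamma}[(U^\top[M_1,M_2]U)_{\iota\gamma}]^2$. Since all $\lambda_\iota>0$ in the interior, this vanishes only if the range–range block of the rotated commutator vanishes. That block is exactly $F(h)$ expressed in the eigenbasis, because $X_*=U\Lambda^{1/2}W$ for some $W\in O(d)$. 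It is therefore nonzero, and the sectional curvature of the plane spanned by $d\pi(\bar\xi_1),d\pi(\bar\xi_2)$ is strictly positive. One further check is needed: the two horizontal vectors must be linearly independent, so that the denominator in O'Neill's formula is nonzero. If they were proportional, $[M_1X_*,M_2X_*]$ would be a bracket of proportional fields at that point. To avoid this I would instead argue directly with the curvature 2-form $\Omega(\bar\xi_1,\bar\xi_2)=-\omega([\bar\xi_1,\bar\xi_2])$, which is nonzero by the Lyapunov computation. Nonzero $\Omega$ already forces the vectors to be independent, since the 2-form is alternating.

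For the holonomy claim, the curvature 2-form takes a nonzero value in $\so(d)$ at $X_*$. By the Ambrose--Singer theorem as stated earlier, the Lie algebra of the restricted holonomy group at $X_*$ contains this value, so the restricted holonomy group is nontrivial. Concretely, for small $s$, a small loop in $\mathcal{B}$ is traced by flowing along $d\pi(\bar\xi_1)$, then $d\pi(\bar\xi_2)$, then back. Its horizontal lift picks up holonomy $\exp(s^2\,\Omega(\bar\xi_1,\bar\xi_2)+O(s^3))\neq I$. Such loops are contractible because they lie in a coordinate ball.

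I expect the main obstacle to be this last step, because the vector fields $Y\mapsto M_iY$ are not everywhere horizontal. Horizontality holds only where $Y^\top M_iY$ is symmetric, which is automatic since $M_i$ is symmetric. So these fields are globally horizontal, and the curvature-as-bracket formula $\Omega=-\omega([\cdot,\cdot])$ applies without modification. The expected difficulty therefore reduces to bookkeeping, and the essential analytic content is the Taylor estimate above.
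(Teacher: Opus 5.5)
Your proposal is correct and follows the paper's proof. You Taylor-expand $L(t_*+h)$ to get $X(t_*)^\top[L(t_*),L(t_*+h)]X(t_*)=hC_*+O(h^2)$, apply \cref{prop:curvature-criterion} with $M_1=-L(t_*)$, $M_2=-L(t_*+h)$, and conclude via Ambrose--Singer. Your explicit $\epsilon_0$, and your observation that a nonzero tensorial, alternating curvature 2-form forces $M_1X_*$ and $M_2X_*$ to be linearly independent (so O'Neill's denominator is nonzero), are useful refinements the paper leaves implicit; your ``concrete'' commutator loop would also need an $O(s^2)$ closing segment, but the Ambrose--Singer argument does not rely on it.
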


\begin{proof}
Since $L$ is differentiable in $t$, Taylor expansion gives
$L(t_* + h) = L(t_*) + h \dot{L}(t_*) + O(h^2)$.
Therefore
$[L(t_*), L(t_* + h)] = h [L(t_*), \dot{L}(t_*)] + O(h^2)$.
Left and right multiplication by the fixed matrix $X(t_*)^\top$ and $X(t_*)$ yields
\begin{equation}
X(t_*)^\top [L(t_*), L(t_* + h)] X(t_*) = h C_* + O(h^2).
\end{equation}
Because $C_* \neq 0$, continuity implies there exists $\epsilon_0 > 0$ such that this expression is nonzero for all $0 < |h| < \epsilon_0$.

Apply \cref{prop:curvature-criterion} with $M_1 = -L(t_*)$ and $M_2 = -L(t_* + h)$ (both symmetric): nonzero projected commutator implies strictly positive sectional curvature for the corresponding two-time span.
The final holonomy claim follows from Ambrose--Singer: nonzero curvature implies nontrivial restricted holonomy, hence existence of contractible loops with non-identity holonomy.
\end{proof}

The criterion above is checkable pointwise along a trajectory. It is also generic at initialization.

\begin{proposition}[Genericity of the Local Commutator Condition at Initialization]\label{prop:laplacian-generic-init}
Fix $n \ge 3$ and $2 \le d \le n$, and define
\begin{equation}
\Psi(X) := X^\top [L(X), \dot{L}(X)] X
\end{equation}
with $P = XX^\top$, $D = \diag(P\mathbf{1})$, $L = D - P$, $\dot{P} = 2P^2 - DP - PD$, and $\dot{L} = \diag(\dot{P}\mathbf{1}) - \dot{P}$.
Then
\begin{equation}
\mathcal{Z} := \{X \in \RR_*^{n \times d} : \Psi(X) = 0\}
\end{equation}
is a proper real-algebraic subset of $\RR_*^{n \times d}$.
Consequently, $\mathcal{Z}$ has empty interior and Lebesgue measure zero; equivalently, for generic initial conditions, the hypothesis of \cref{prop:laplacian-holonomy} holds at $t_* = 0$.
Restricting to valid RDPG states, the same conclusion holds after intersecting with $\mathcal{E}$: this remains a proper exceptional set in the interior model domain.
\end{proposition}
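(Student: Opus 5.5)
The plan mirrors the proof of \cref{prop:generic-projected-skew}: first show that $\mathcal{Z}$ is algebraic, then exhibit a single full-rank witness $X_0$ with $\Psi(X_0) \neq 0$.

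\textbf{Algebraicity.} Every ingredient is polynomial in the entries of $X$. The matrix $P = XX^\top$ is quadratic. $D = \diag(P\mathbf{1})$ is linear in $P$, so $L$ is quadratic in $X$. $\dot{P} = 2P^2 - DP - PD$ is quartic, so $\dot{L}$ is quartic. Hence each entry of $\Psi(X)$ is a polynomial of degree $12$ in the entries of $X$. Therefore $\mathcal{Z}$ is the intersection of the open set $\RR_*^{n \times d}$ with a real-algebraic set. If $\mathcal{Z}$ is proper, then the polynomial entry that is nonzero at $X_0$ is a nonzero polynomial. Its zero set then has empty interior and measure zero, and the same holds for $\mathcal{Z}$.

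\textbf{Properness and reduction.} The whole task is to produce $X_0$ of rank $d$ with $\Psi(X_0) \neq 0$. I would first reduce the dimension. For any $X$ whose first two columns form $Y \in \RR^{n\times 2}$ and whose remaining columns are $\epsilon W$, the map $P$ and everything built from it depend continuously on $\epsilon$. The $2\times 2$ leading block of $\Psi(X)$ equals $Y^\top [L,\dot L] Y$, where $L$ is evaluated at $P = YY^\top + \epsilon^2 WW^\top$. If $Y^\top [L(Y),\dot L(Y)] Y \neq 0$ at $\epsilon=0$, then by continuity the block stays nonzero for small $\epsilon>0$. Choosing $W$ so that $X$ has full rank, which is possible since $d \le n$, finishes the reduction. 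So it suffices to treat $d=2$, and even $n=3$: extend a $3\times 2$ witness by zero rows, or better by small generic rows handled with the same continuity argument. One caution is that zero rows change $D$ only by zero entries, so padding with exact zeros keeps $\Psi$ unchanged while full rank is preserved. This makes padding in $n$ clean.

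\textbf{Main obstacle: the explicit witness for $n=3$, $d=2$.} The commutator $[L,\dot L]$ is skew, so $Y^\top[L,\dot L]Y$ is a $2\times 2$ skew matrix. It is determined by the single scalar $y_1^\top [L,\dot L] y_2$, where $y_1, y_2$ are the columns of $Y$.

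I would try a simple configuration with heterogeneous degrees, since regular configurations give $\Psi = 0$. A first candidate is $Y$ with rows $(1,0)$, $(0,1)$, $(1,1)$, or a scaled version if validity matters. Validity does not matter here, because the claim is about $\RR_*^{n\times d}$. I would then compute $P$, $D$, $L$, $\dot P$, and $\dot L$ as explicit $3\times 3$ integer matrices, form the commutator, and evaluate the scalar.

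If this symmetric-looking example happens to vanish, I would break the symmetry, for instance by changing the third row to $(1,2)$, and recompute. A symbolic check over a one-parameter family would suffice, since any nonzero value closes the argument. One real risk is the permutation symmetry that swaps rows 1 and 2 together with the columns: it may force the scalar to zero. That is exactly why I would prefer an asymmetric choice from the start.

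\textbf{Extension to valid states.} Take $X_0$ from the interior of $\mathcal{E}$, which is a nonempty open set. The polynomial is not identically zero, so it cannot vanish on this open set. Hence $\mathcal{Z}\cap\mathcal{E}$ is a proper, measure-zero subset of the interior.
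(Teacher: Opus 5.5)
Your route is the paper's route: polynomial entries, one explicit $3\times 2$ witness, zero-row padding (which, as you note, leaves $\Psi$ unchanged), a small perturbation $\epsilon W$ in columns $3,\dots,d$ handled by continuity, and the nonzero-polynomial argument on the nonempty open set $\mathrm{int}(\mathcal{E})$. The gap is that the one step with real content, the witness, is never produced, and your first candidate fails outright. For rows $(1,0),(0,1),(1,1)$ one gets $D=\diag(2,2,4)$ and
\[
L=\begin{pmatrix}1&0&-1\\0&1&-1\\-1&-1&2\end{pmatrix},\qquad
\dot L=\begin{pmatrix}2&-2&0\\-2&2&0\\0&0&0\end{pmatrix}.
\]
Here $L\dot L=\dot L$ is symmetric, so $[L,\dot L]=0$.

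This failure is structural, not bad luck. For $n=3$, both $L$ and $\dot L$ annihilate $\mathbf 1$. So the skew matrix $[L,\dot L]$ must equal $cK$ with $K=\begin{pmatrix}0&1&-1\\-1&0&1\\1&-1&0\end{pmatrix}$, and then $\Psi_{12}=y_1^\top[L,\dot L]y_2=c\,\det[\mathbf 1,y_1,y_2]$. A node transposition that fixes $P$ fixes $[L,\dot L]$ but conjugates $K$ to $-K$, which forces $c=0$. Your worry about the $1\leftrightarrow 2$ symmetry was exactly right.

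Your fallback does close the argument. For rows $(1,0),(0,1),(1,2)$ the row sums of $P$ are $(2,3,8)$, and
\[
L=\begin{pmatrix}1&0&-1\\0&2&-2\\-1&-2&3\end{pmatrix},\qquad
\dot L=\begin{pmatrix}6&-4&-2\\-4&6&-2\\-2&-2&4\end{pmatrix},\qquad
[L,\dot L]=2K.
\]
With $y_1=(1,0,1)^\top$ and $y_2=(0,1,2)^\top$ one gets $\det[\mathbf 1,y_1,y_2]=-2$, hence $\Psi_{12}=-4\neq 0$. With this computation written in, your proof is complete and coincides with the paper's.

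The paper instead uses $X_0=\frac13\begin{pmatrix}1&1\\2&1\\2&2\end{pmatrix}$. For $3X_0$ one has $[L,\dot L]=6K$ and $\det[\mathbf 1,y_1,y_2]=1$, so $\Psi_{12}=6$ and $\Psi(X_0)_{12}=6\cdot 3^{-8}=2/3^7$. That rescaling uses homogeneity of degree $8$, which corrects a small slip in your algebraicity paragraph. $L$ has degree $2$, $\dot L$ has degree $4$, and the outer factors $X^\top$ and $X$ add $2$, so the entries of $\Psi$ have degree $8$, not $12$. This is immaterial to your argument.
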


\begin{proof}
Each entry of $\Psi(X)$ is a polynomial in the entries of $X$ (all operations are sums and products), so $\mathcal{Z}$ is algebraic.
To show it is proper, it suffices to exhibit one full-rank $X$ with $\Psi(X) \neq 0$.

For $n = 3$, $d = 2$, take
\begin{equation}
X_0 = \frac{1}{3}\begin{pmatrix} 1 & 1 \\ 2 & 1 \\ 2 & 2 \end{pmatrix}.
\end{equation}
A direct substitution into the formulas above gives
$(\Psi(X_0))_{12} = 2/3^7 \neq 0$,
hence $\Psi(X_0) \neq 0$.

For general $n \ge 3$ and $2 \le d \le n$, embed this $2$-dimensional witness into the first two columns and first three rows, set remaining entries to zero, and then add an arbitrarily small perturbation in columns $3, \ldots, d$ making the matrix full rank.
Since $\Psi$ is continuous and nonzero at the embedded witness, for sufficiently small perturbation the value remains nonzero.
Therefore a full-rank witness exists for every $n \ge 3$ and $2 \le d \le n$, so $\mathcal{Z}$ is a proper algebraic subset.

A proper real-algebraic subset of Euclidean space has empty interior and Lebesgue measure zero.
\end{proof}

For $d = 2$, the result is particularly sharp.

\begin{corollary}[Dimension 2: Full Restricted Holonomy from One Nonzero Curvature Value]\label{cor:d2-holonomy}
Let $d = 2$ and assume the hypothesis of \cref{prop:laplacian-holonomy} holds at some time $t_*$.
Then the restricted holonomy group of the bundle is $\mathrm{SO}(2)$.
Equivalently, for each angle $\phi \in [0, 2\pi)$ there exists a contractible loop in $\mathcal{B}$ whose horizontal transport has holonomy rotation by $\phi$.
\end{corollary}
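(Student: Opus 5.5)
The argument rests on three facts: the holonomy algebra is a subalgebra of $\so(2)$, Ambrose--Singer, and connectedness of $\mathrm{SO}(2)$. First, restricted holonomy lies in the identity component of the structure group. Each loop is contractible, so the holonomy map can be deformed continuously from the constant loop, whose holonomy is $I$. Hence the restricted holonomy group $\Hol^0$ is a connected subgroup of $O(d)$ and therefore lies in $\mathrm{SO}(d)$. For $d=2$ this says $\Hol^0 \subseteq \mathrm{SO}(2)$.

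Second, by Ambrose--Singer, $\hol^0 = \Lie(\Hol^0)$ is the subspace of $\so(2)$ spanned by curvature values $\Omega(H_1,H_2)$ at points reachable by horizontal transport from the base point. The hypothesis of \cref{prop:laplacian-holonomy} at $t_*$ gives, via \cref{prop:curvature-criterion}, a nonzero vertical bracket $[\bar\xi_1,\bar\xi_2]^{\mathcal V}(X(t_*)) = X\Omega^*$. Since $G$ is positive definite and $C_*\neq 0$, the Lyapunov solution satisfies $\Omega^*\neq 0$. So the curvature 2-form takes a nonzero value at $X(t_*)$.

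Two technical points need care here.
\begin{itemize}
\item The curvature $-\omega([H_1,H_2])$ only depends on the horizontal vectors at the point. We should use horizontal projections of the constant-coefficient extensions $M_iY$. These fields are already horizontal only when $M_i$ is symmetric, which holds for $M_i=-L$, so this is fine.
\item We should take the base point to be $X(t_*)$ itself, or any point of the same horizontal leaf. The Laplacian trajectory is itself horizontal, so $X(t_*)$ is reachable from $X(0)$. Conjugation by $\Ad$ is trivial in the abelian group $\mathrm{SO}(2)$, so the choice of base point on the fiber is immaterial.
\end{itemize}

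Third, $\so(2)$ is one-dimensional. A single nonzero curvature value therefore spans it, giving $\hol^0 = \so(2)$. The group $\Hol^0$ is then a connected Lie subgroup of $\mathrm{SO}(2)$ of full dimension, so $\Hol^0 = \mathrm{SO}(2)$. It is closed here anyway, since it is a connected subgroup of a compact one-dimensional group with nonzero Lie algebra.

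The equivalent formulation follows directly. Every rotation by $\phi$ lies in $\Hol^0$. By definition of $\Hol^0$, each element is the holonomy of some contractible loop at $P(t_*)$.

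The main obstacle I expect is justifying the Ambrose--Singer step precisely. The bundle restricted to a connected component of the interior of $\mathcal{E}$ must be a genuine smooth principal $O(d)$-bundle. The paper implicitly claims this in the principal bundle definition. A subtle point: the action of $O(d)$ may not preserve the constraints $0<x_i^\top x_j<1$ componentwise in $X$, but it does, because $P$ is invariant. We also need connectedness of the base component, so that Ambrose--Singer applies with curvature sampled at $X(t_*)$. Once these are granted, the rest is the dimension count above.
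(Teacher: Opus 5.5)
Your proposal is correct and follows the same route as the paper: \cref{prop:laplacian-holonomy}, via \cref{prop:curvature-criterion}, supplies a nonzero curvature value. Ambrose--Singer together with $\dim \so(2) = 1$ then forces the holonomy algebra to be all of $\so(2)$, so the connected group $\Hol^0$ equals $\mathrm{SO}(2)$. Your added bookkeeping makes explicit what the paper's short proof leaves implicit: that $\Hol^0 \subseteq \mathrm{SO}(2)$ by connectedness, that the curvature is sampled at a point reached along the horizontal Laplacian trajectory, and that the constant-coefficient fields are horizontal because $L$ and $\dot L$ are symmetric.
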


\begin{proof}
The Lie algebra $\so(2)$ is one-dimensional, spanned by the single generator $J = \bigl(\begin{smallmatrix} 0 & -1 \\ 1 & 0 \end{smallmatrix}\bigr)$.
By \cref{prop:laplacian-holonomy}, the curvature 2-form produces a nonzero element of $\so(2)$.
A single nonzero element spans the one-dimensional Lie algebra, so by the Ambrose--Singer theorem, the holonomy algebra is all of $\so(2)$, and the restricted holonomy group (holonomy of contractible loops) is $\mathrm{SO}(2)$.
\end{proof}

This means that for rank-2 latent spaces under Laplacian dynamics, the obstruction is not merely a discrete sign flip (as in the simplest cases) but a continuous rotation by an arbitrary angle: you can get \emph{any} amount of gauge drift, depending on the loop. From an alignment point of view, this is about as bad as it gets.

\begin{remark}
\textbf{Heuristic quantitative estimate (dimension 2).}
For $d = 2$, the holonomy around a loop $\gamma$ in $\mathcal{B}$ is often approximated by an area integral of sectional curvature:
$\phi(\gamma) \approx \iint_\Sigma K \, dA$
where $\Sigma$ is a surface bounded by $\gamma$, $K$ is the sectional curvature, and $dA$ is the area element.
This is an intuition aid, not a theorem used elsewhere in the manuscript.
Combined with~\citet{massart2019curvature}, it suggests that trajectories passing near rank-deficient states (small trailing eigenvalues) may accumulate larger holonomy.
\end{remark}

\textbf{Higher dimensions ($d \ge 2$).}
For general $d$, the Lie algebra $\so(d)$ has dimension $d(d-1)/2$.
The curvature 2-form produces holonomy-algebra elements $\Omega^* \in \so(d)$ with
$\Omega^*_{\iota\gamma} = 2(X^\top [M_1, M_2] X)_{\iota\gamma} / (\lambda_\iota + \lambda_\gamma)$.
As the trajectory visits different states, the \emph{direction} of these elements in $\so(d)$ can change, and the span can grow.

\begin{theorem}[Full Restricted Holonomy from a Curvature-Span Condition]\label{thm:full-holonomy-conditional}
Let $d \ge 2$ and fix a base point $p \in \mathcal{B}$.
Suppose the Lie algebra generated by curvature values reachable from $p$ spans all of $\so(d)$, i.e.
\begin{equation}
\spn\{ \Ad_{\tau_{q \to p}} (\Omega_q(u, v)) : q \text{ horizontally reachable from } p,~u,v \in \mathcal{H}_q \} = \so(d).
\end{equation}
Then the restricted holonomy group at $p$ is:
$\Hol_p^0 = \mathrm{SO}(d)$.
\end{theorem}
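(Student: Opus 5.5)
The plan is to apply the Ambrose--Singer theorem directly and then identify a connected Lie subgroup from its Lie algebra. First I will fix the setting so that Ambrose--Singer applies verbatim. The restriction of the RDPG principal bundle to a connected component of the interior of $\mathcal{E}$ is a smooth principal $O(d)$-bundle. The metric connection $\omega$ is a genuine principal connection: it is smooth, satisfies $\omega_X(X\Omega)=\Omega$, and is $\Ad$-equivariant. Equivariance comes from $\omega_{XQ}(ZQ)=Q^\top\omega_X(Z)Q$, which I will check by substituting into the Lyapunov defining equation using $(XQ)^\top(XQ)=Q^\top GQ$. The curvature $\Omega$ is then the usual $\so(d)$-valued tensorial 2-form.

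Ambrose--Singer states that the Lie algebra $\hol_X$ of the holonomy group $\Hol_X$ at a point $X\in\pi^{-1}(p)$ is the subspace of $\so(d)$ spanned by the values $\Omega_Y(u,v)$. Here $Y$ ranges over points reachable from $X$ by horizontal curves, and $u,v\in\mathcal{H}_Y$. Equivalently, these values can be transported back to $X$ by $\Ad$ of the group element relating $Y$ to the fiber over $p$. I will match this description with the expression in the hypothesis, reading $\Ad_{\tau_{q\to p}}$ as exactly this transport. The hypothesis then gives $\hol_X=\so(d)$. The theorem also asserts that this span is already a Lie subalgebra, so the distinction between "span" and "Lie algebra generated" disappears.

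Next I will pass from the algebra to the group. The restricted holonomy group $\Hol^0_X$ is the identity component of $\Hol_X$, so it is a connected Lie subgroup of $O(d)$ with Lie algebra $\hol_X=\so(d)$. Connected Lie subgroups are determined by their Lie algebras. Since $\mathrm{SO}(d)$ is the connected subgroup of $O(d)$ with Lie algebra $\so(d)$, this gives $\Hol^0_X=\mathrm{SO}(d)$. Changing the base point within the fiber conjugates the holonomy group by an element of $O(d)$, and $\mathrm{SO}(d)$ is normal in $O(d)$. The conclusion is therefore independent of the lift, and is a statement at $p$.

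The main obstacle is bookkeeping rather than depth. Ambrose--Singer requires a connected base manifold and a smooth principal connection, and the hypothesis must be phrased exactly as in the theorem, with the right convention for $\tau_{q\to p}$ and for horizontal reachability in the total space. I would first verify the equivariance of $\omega$ carefully, since everything else is a citation of \citet{kobayashi1963foundations}. The last step is to note that working on a single connected component of the interior of $\mathcal{E}$ makes the reachable set well defined.
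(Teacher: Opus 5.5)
Your proposal is correct and follows the paper's proof: Ambrose--Singer identifies the holonomy Lie algebra with the curvature span, so the hypothesis gives $\so(d)$, and the restricted holonomy group, being connected with Lie algebra $\so(d)$, must be $\mathrm{SO}(d)$. Your extra checks (the $\Ad$-equivariance $\omega_{XQ}(ZQ)=Q^\top\omega_X(Z)Q$, and independence of the chosen lift because $\mathrm{SO}(d)$ is normal in $O(d)$) are bookkeeping that the paper leaves implicit.
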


\begin{proof}
By the Ambrose--Singer theorem, the Lie algebra of $\Hol_p^0$ is exactly the span above.
Under the hypothesis, $\Lie(\Hol_p^0) = \so(d)$.
The group $\Hol_p^0$ is connected and lies in $O(d)$; hence it lies in the identity component $\mathrm{SO}(d)$.
A connected Lie subgroup of $\mathrm{SO}(d)$ with Lie algebra $\so(d)$ must be $\mathrm{SO}(d)$ itself.
\end{proof}

The theorem isolates the exact missing ingredient for $d \ge 3$: a rank condition on curvature-generated directions.

\begin{proposition}[Finite-Time Rank Criterion at a Fixed Base Point]\label{prop:finite-time-rank}
Let $X(t)$ solve $\dot{X} = -L(P) X$ and fix $t_0$ such that $P(t_0)$ has rank $d$.
Suppose there exist times $s_1, \ldots, s_m$ with $m = d(d-1)/2$ such that the skew matrices
\begin{equation}
B_a := X(t_0)^\top [L(t_0), L(s_a)] X(t_0) \in \so(d), \quad a = 1, \ldots, m,
\end{equation}
are linearly independent.
Then, at the base point $p = [X(t_0)]$, the restricted holonomy group is $\mathrm{SO}(d)$.
\end{proposition}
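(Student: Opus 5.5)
The plan is to reduce the statement to \cref{thm:full-holonomy-conditional}: I will show that the span of $\Ad$-transported curvature values reachable from $p = [X(t_0)]$ contains the $m = d(d-1)/2$ linearly independent skew matrices $B_a$, up to an invertible linear map. Since $\dim \so(d) = m$, this forces the span to be all of $\so(d)$.

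\textbf{Step 1: curvature values at the base point.} Fix $X_0 = X(t_0)$ and $G = X_0^\top X_0 \succ 0$. For each $a$, take $M_1 = -L(t_0)$ and $M_2 = -L(s_a)$, both symmetric. Use their constant-coefficient horizontal extensions $\bar\xi_1(Y) = M_1 Y$ and $\bar\xi_2(Y) = M_2 Y$, which are horizontal at every $Y$ by \cref{prop:horizontal}, as in \cref{prop:curvature-criterion}. The Lie bracket is $[\bar\xi_1,\bar\xi_2](X_0) = -[M_1,M_2]X_0$, and $[M_1,M_2] = [L(t_0), L(s_a)]$. By the vertical-projection computation preceding \cref{prop:curvature-criterion}, the curvature value is
\[
\Omega_{X_0}(\bar\xi_1,\bar\xi_2) = -\omega\bigl([\bar\xi_1,\bar\xi_2]\bigr) = \Omega^*_a, \qquad G\Omega^*_a + \Omega^*_a G = \pm 2 B_a .
\]
These are curvature values at the point $p$ itself, so no parallel transport is needed: the $\Ad$ factor in \cref{thm:full-holonomy-conditional} is the identity for $q = p$. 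The vectors $\bar\xi_i(X_0)$ are genuine horizontal tangent vectors at $X_0$. Because curvature is tensorial, $\Omega_{X_0}$ depends only on their values at $X_0$, so using these particular extensions is legitimate.

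\textbf{Step 2: linear independence transfers.} The Lyapunov operator $\mathcal{T}_G : \Omega \mapsto G\Omega + \Omega G$ restricted to $\so(d)$ is invertible. In the eigenbasis of $G$ it acts diagonally with eigenvalues $\lambda_\iota + \lambda_\gamma > 0$. Hence $\Omega^*_a = \pm 2\,\mathcal{T}_G^{-1}(B_a)$ for $a = 1,\dots,m$, and these are linearly independent because the $B_a$ are. This gives $m = \dim \so(d)$ independent elements of the Ambrose--Singer span, so the span equals $\so(d)$. \Cref{thm:full-holonomy-conditional} then yields $\Hol_p^0 = \mathrm{SO}(d)$. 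The hypothesis that $P(t_0)$ has rank $d$ is exactly what makes $G$ positive definite and places $X_0$ in the smooth part of the bundle (assumption (S2)).

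\textbf{Main obstacle.} The delicate point is Step 1: checking that the curvature identity $\Omega(H_1,H_2) = -\omega([H_1,H_2])$ applies with these global linear extensions. It requires the fields to be horizontal everywhere, not just at $X_0$, which holds because $Y^\top M Y$ is symmetric for every $Y$ when $M$ is symmetric. It also requires the identification of $\omega$ applied to $SX_0$ with $\mathcal{T}_G^{-1}\bigl(2X_0^\top S X_0\bigr)$, where $S = -[M_1,M_2]$, including keeping track of a factor of $2$ and a sign. Neither the factor nor the sign affects linear independence. I would verify the sign and normalisation against \cref{app:vertical-projection} but otherwise treat them as immaterial. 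The times $s_a$ enter only through the matrices $L(s_a)$, which are used as fixed symmetric generators. So the argument does not need the trajectory to pass back through $p$, nor any horizontal transport to the states at times $s_a$.
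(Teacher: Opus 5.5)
Your proof is correct and follows the same route as the paper. You realize each $B_a$, via the invertible Lyapunov operator $\Omega \mapsto G\Omega + \Omega G$ on $\so(d)$, as a curvature value at $p$ itself; you transfer linear independence through that operator; and you invoke \cref{thm:full-holonomy-conditional} with $q = p$. Your extra checks, that the linear extensions $Y \mapsto MY$ are horizontal everywhere and that curvature is tensorial, make explicit a step the paper's proof leaves implicit.
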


\begin{proof}
By \cref{prop:curvature-criterion} and \cref{app:vertical-projection}, each $B_a$ determines a curvature value at the same base point $p$ through the Lyapunov map
$B_a \mapsto \Omega_a^*$,
with componentwise scaling by positive factors $1/(\lambda_\iota + \lambda_\gamma)$ (evaluated at $t_0$).
This linear map on $\so(d)$ is invertible, so linear independence of $(B_a)$ implies linear independence of $(\Omega_a^*)$.
Since $m = \dim(\so(d))$, these curvature values span $\so(d)$ at $p$.
Therefore the curvature-span hypothesis of \cref{thm:full-holonomy-conditional} holds (with $q = p$, so no transport issue), and $\Hol_p^0 = \mathrm{SO}(d)$.
\end{proof}

\begin{remark}
\textbf{Heuristic genericity note.}
For Laplacian dynamics with $n \gg d$, one expects the degree matrices at different times to make the finite-time rank criterion hold on most trajectories.
Turning this into a theorem for all $d \ge 3$ requires a transversality argument controlling symmetry strata.
We therefore keep the fully generic statement as a conjecture.
\end{remark}

\begin{conjecture}[Full Holonomy in Higher Dimensions]\label{conj:full-holonomy}
For $d \ge 3$, $n > d$, and initial conditions with $D(P(0))$ not scalar (non-regular graph), the restricted holonomy group of Laplacian dynamics $\dot{X} = -L(P) X$ is $\mathrm{SO}(d)$.
\end{conjecture}

A proof of \cref{conj:full-holonomy} would amount to proving the rank criterion above generically.
Equivalently, one must show that the projected commutators
$X^\top [L(P(t_1)), L(P(t_2))] X$
span $\so(d)$ for generic trajectories.
The main obstruction is structural symmetry: if the degree operator $P \mapsto D(P)$ preserves a proper invariant subspace in $\so(d)$, the generated holonomy algebra can collapse to a proper subalgebra.
Regular/equitable families are canonical examples of such degeneracy.

The same argument applies to message-passing dynamics $\dot{x}_i = \sum_j P_{ij} g(x_i, x_j)$ with symmetric generators that are not polynomials in $P$.

\textbf{Summary:} the holonomy obstruction creates a clear contrast among horizontal dynamics families.
Polynomial dynamics, which operate on the spectral structure of $P$ through commuting generators, preserve eigenvectors and have trivial holonomy: global gauge consistency is achievable without topological obstruction.
Laplacian and message-passing dynamics, which mix spectral and spatial structure through non-commuting generators, can produce nontrivial holonomy whenever the projected-commutator criterion of \cref{prop:curvature-criterion} is triggered: gauge drift then accumulates even along horizontal trajectories, and no local alignment procedure can achieve global consistency over cycles.

This distinction has practical implications: for polynomial dynamics, the constructive alignment problem (\cref{sec:constructive}) is purely a statistical challenge (overcoming Bernoulli noise); for Laplacian dynamics, it is simultaneously a statistical \emph{and} topological challenge.

\begin{remark}
\textbf{Ambrose--Singer perspective (a quick intuition pump).}
The relationship between local curvature and global holonomy is formalized by the Ambrose--Singer theorem~\citep{kobayashi1963foundations}: the Lie algebra $\hol_p$ of the restricted holonomy group at $p$ is generated by curvature endomorphisms $\Omega_q(u, v)$ evaluated at points $q$ reachable from $p$ by horizontal curves, after parallel-transporting them back to $p$.

For $d = 2$, $\so(2)$ has no proper nonzero subalgebras, so a single nonzero curvature element forces the full restricted holonomy $\Hol_p^0 = \mathrm{SO}(2)$ (as used above).

For $d \ge 3$, one curvature value gives only one direction inside a much larger algebra.
Full holonomy requires a bracket-generating phenomenon: curvature directions encountered along the trajectory must not all live inside a commuting subalgebra.
Since the degree matrix $D(X)$ depends nonlinearly on the latent positions, the orientation of the induced curvature elements typically rotates as the trajectory evolves, which is exactly the mixing mechanism behind \cref{conj:full-holonomy}.

Finally, the spectral gap appears again: as shown in \cref{prop:curvature-criterion}, the curvature magnitude is weighted by $1/(\lambda_\iota + \lambda_\gamma)$.
So a small spectral gap not only amplifies statistical noise (via ASE error bounds) but also makes holonomy effects larger, because the curvature-generated gauge rotations blow up in the ill-conditioned directions.
\end{remark}



\section{Information-theoretic limits}\label{sec:info-theoretic}

The geometric obstructions of \cref{sec:obstructions} (gauge freedom, curvature, holonomy), the trajectory recovery difficulties of \cref{sec:trajectory-problem}, and the dynamics analysis of \cref{sec:dynamics} constrain what is learnable in principle.
We complement these with \emph{statistical} obstructions: given $T$ noisy adjacency matrices, how accurately can we estimate the parameters governing the dynamics, regardless of the estimation method?

The answer reveals a duality: the same spectral gap $\lambda_d$ that controls geometric difficulty also controls statistical difficulty, so the two obstructions reinforce each other.

\subsection{Fisher information for dynamics parameters}

Consider a parametric dynamics family with parameter $\theta \in \RR^k$ generating a trajectory $P(t; \theta)$.
We observe adjacency matrices $\{A^{(t)}\}_{t=0}^T$ with $A^{(t)} \sim \text{Bernoulli}(P(t; \theta))$, and all entries conditionally independent given $P(t; \theta)$.
The log-likelihood factorizes:
\begin{equation}
\ell(\theta) = \sum_{t=0}^T \sum_{i < j} \bigl[A_{ij}^{(t)} \log P_{ij}(t; \theta) + (1 - A_{ij}^{(t)}) \log(1 - P_{ij}(t; \theta))\bigr]
\end{equation}

The Fisher information matrix $\mathcal{I}(\theta) \in \RR^{k \times k}$ has entries:
\begin{equation}\label{eq:fisher}
\mathcal{I}(\theta)_{ab} = \sum_{t=0}^T \sum_{i < j} \frac{\partial P_{ij}}{\partial \theta_a} \frac{\partial P_{ij}}{\partial \theta_b} \cdot \frac{1}{P_{ij}(1 - P_{ij})}
\end{equation}
where $P_{ij} = P_{ij}(t; \theta)$ and the derivatives $\partial P_{ij} / \partial \theta_a$ capture how perturbations in the dynamics parameters propagate to the observations.

\subsection{Sensitivity propagation through the Lyapunov equation}

The key quantity in \cref{eq:fisher} is the sensitivity $\partial P_{ij} / \partial \theta_a$.
For polynomial dynamics $N(\theta) = \sum_{k=0}^K \theta_k P^k$, the $P$-dynamics are:
$\dot{P} = 2 \sum_{k=0}^K \theta_k P^{k+1}$.
A crucial simplification comes from \cref{prop:poly-trivial-holonomy}: the eigenvectors of $P$ are stationary under polynomial dynamics.
Writing $P = U \Lambda U^\top$ with $\Lambda = \diag(\lambda_1, \ldots, \lambda_d)$ and $U \in \RR^{n \times d}$ fixed, the sensitivity $\partial P / \partial \theta_a = U (\partial \Lambda / \partial \theta_a) U^\top$ is diagonal in the eigenbasis.

Each eigenvalue satisfies $\dot{\lambda}_\iota = 2 \sum_{k=0}^K \theta_k \lambda_\iota^{k+1}$, so:
\begin{equation}
\frac{\partial}{\partial t} \frac{\partial \lambda_\iota}{\partial \theta_a} = 2 \lambda_\iota^{a+1} + \frac{\partial \dot{\lambda}_\iota}{\partial \lambda_\iota} \cdot \frac{\partial \lambda_\iota}{\partial \theta_a}
\end{equation}
where $\partial \dot{\lambda}_\iota / \partial \lambda_\iota = 2 \sum_{k=0}^K \theta_k (k+1) \lambda_\iota^k$ is the linearized eigenvalue dynamics.
This is a scalar linear ODE for each pair $(\iota, a)$, with forcing $2 \lambda_\iota^{a+1}$ and feedback through the linearized dynamics.

The resulting sensitivity of the observed probabilities is:
\begin{equation}\label{eq:poly-sensitivity}
\frac{\partial P_{ij}}{\partial \theta_a} = \sum_{\iota=1}^d U_{i\iota} U_{j\iota} \frac{\partial \lambda_\iota}{\partial \theta_a}
\end{equation}
Each node pair $(i,j)$ receives signal from all $d$ eigenvalue sensitivities, weighted by the eigenvector loadings $U_{i\iota} U_{j\iota}$.

\textbf{Example: linear dynamics ($K = 0$).}
For $\dot{X} = \alpha_0 X$ (single parameter), the eigenvalue dynamics are $\dot{\lambda}_\iota = 2 \alpha_0 \lambda_\iota$, giving $\lambda_\iota(t) = \lambda_\iota(0) e^{2 \alpha_0 t}$.
The sensitivity is $\partial \lambda_\iota / \partial \alpha_0 = 2 t \lambda_\iota(t)$, and therefore:
\begin{equation}
\frac{\partial P_{ij}}{\partial \alpha_0} = 2 t \sum_\iota U_{i\iota} \lambda_\iota(t) U_{j\iota} = 2 t P_{ij}(t)
\end{equation}
The Fisher information at a single snapshot $t$ is:
\begin{equation}
\mathcal{I}_t(\alpha_0) = 4 t^2 \sum_{i < j} \frac{P_{ij}(t)^2}{P_{ij}(t)(1 - P_{ij}(t))} = 4 t^2 \sum_{i < j} \frac{P_{ij}(t)}{1 - P_{ij}(t)}
\end{equation}
Summing over $T$ equally-spaced snapshots gives $\mathcal{I}(\alpha_0) = 4 \sum_{t=0}^T t^2 \sum_{i<j} P_{ij}(t) / (1 - P_{ij}(t))$, which scales as $O(T^3 \cdot n^2)$ when the edge probabilities are bounded away from 0 and 1.

\textbf{Higher-degree terms and the spectral gap.}
For $a \ge 1$, the forcing $2 \lambda_\iota^{a+1}$ is small for eigenvalues near the spectral gap: $2 \delta^{a+1}$ versus $2 \lambda_1^{a+1}$.
This means the sensitivity $\partial \lambda_d / \partial \theta_a$ is small relative to $\partial \lambda_1 / \partial \theta_a$, so the Fisher information for higher-degree parameters $\theta_a$ receives proportionally less signal from the small-eigenvalue directions.
Concretely, the contribution of the $\lambda_d$ direction to $\mathcal{I}(\theta)_{ab}$ scales as $\delta^{a+b+2}$, while the $\lambda_1$ direction contributes $\lambda_1^{a+b+2}$.

\subsection{General dynamics and the Lyapunov amplification}

The polynomial case is clean because eigenvectors are fixed, confining all sensitivity to the diagonal of the eigenbasis.
For general symmetric dynamics $\dot{X} = M(P) X$ with $M$ symmetric but not a polynomial in $P$, eigenvectors rotate and the sensitivity $\partial P / \partial \theta$ has both diagonal and off-diagonal components in the eigenbasis.

In this setting, the Lyapunov equation $\dot{P} = MP + PM$ must be inverted to recover $M$ from $\dot{P}$.
In the eigenbasis of $P$, the inversion acts componentwise: $M_{\iota\gamma} = \dot{P}_{\iota\gamma} / (\lambda_\iota + \lambda_\gamma)$.
The factor $1 / (\lambda_\iota + \lambda_\gamma)$ amplifies noise in the off-diagonal components, and is the same factor that appears in the connection 1-form (\cref{sec:fiber-bundle}) and governs curvature.
This yields a direct correspondence: directions in parameter space that are hard geometrically (small $\lambda_\iota + \lambda_\gamma$ produces large curvature and large connection coefficients) are also hard statistically (small $\lambda_\iota + \lambda_\gamma$ amplifies estimation noise).

For Laplacian dynamics, the situation is compounded by holonomy.
The eigenvector rotation contributes additional degrees of freedom to the sensitivity, but this information is entangled with gauge degrees of freedom that may fail to be locally resolvable when the nonzero-curvature criterion of \cref{prop:laplacian-holonomy} is met.

\subsection{The statistical-geometric duality}

Combining the geometric and statistical pictures yields a unified obstruction.
We first state the result for polynomial dynamics, then explain the broader duality.

\begin{proposition}[Cram\'er--Rao Bound for Polynomial Dynamics]\label{prop:cramer-rao}
For polynomial dynamics of degree $K$ with parameter $\theta = (\theta_0, \ldots, \theta_K) \in \RR^{k}$ ($k = K + 1$), assume:
\begin{enumerate}
\item $\theta$ belongs to an open parameter set $\Theta \subset \RR^k$, and the true parameter is an interior point.
\item For each $t, i, j$, the map $\theta \mapsto P_{ij}(t; \theta)$ is $C^1$, with $0 < P_{ij}(t; \theta) < 1$ on a neighborhood of the true parameter.
\item The score is integrable and differentiation can be interchanged with expectation (standard Fisher regularity), so the information identity is valid.
\item $\mathcal{I}(\theta)$ is positive definite on the identifiable parameter subspace.
\end{enumerate}

Then any unbiased estimator $\hat{\theta}$ satisfies:
\begin{equation}
\EE[\|\hat{\theta} - \theta\|^2] \ge \tr(\mathcal{I}(\theta)^{-1})
\end{equation}
The Fisher information matrix $\mathcal{I}(\theta)$ has entries:
\begin{equation}
\mathcal{I}(\theta)_{ab} = \sum_{t=0}^T \sum_{\iota, \gamma = 1}^d \frac{\partial \lambda_\iota}{\partial \theta_a} \frac{\partial \lambda_\gamma}{\partial \theta_b} \cdot C_{\iota\gamma}(t)
\end{equation}
where $C_{\iota\gamma}(t) = \sum_{i < j} U_{i\iota} U_{j\iota} U_{i\gamma} U_{j\gamma} / [P_{ij}(t)(1 - P_{ij}(t))]$.
The diagonal entries $C_{\iota\iota}(t) = \mathcal{W}_\iota(t)$ measure the information content of eigenvalue direction $\iota$ at time $t$; the off-diagonal entries $C_{\iota\gamma}(t)$ for $\iota \neq \gamma$ capture cross-eigenvalue correlations.
A baseline lower bound on each parameter's variance follows from the matrix inequality $(\mathcal{I}^{-1})_{aa} \ge 1 / \mathcal{I}_{aa}$:
\begin{equation}
\mathrm{Var}(\hat{\theta}_a) \ge \frac{1}{\mathcal{I}_{aa}} = \frac{1}{\sum_t \sum_\iota \bigl(\frac{\partial \lambda_\iota}{\partial \theta_a}\bigr)^2 \mathcal{W}_\iota(t) + \text{cross terms}}
\end{equation}
Non-zero cross-eigenvalue terms arising from structured graphs (where Fisher weights correlate with eigenvector components) can only \emph{increase} the true Cram\'er--Rao bound above this baseline.
\end{proposition}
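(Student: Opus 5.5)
The proof has three parts: the Cram\'er--Rao inequality itself, the eigenbasis form of $\mathcal{I}(\theta)$, and the scalar baseline together with the claim about cross terms.

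\textbf{Cram\'er--Rao inequality.} Assumptions 1--3 are exactly the standard regularity conditions. The log-likelihood $\ell(\theta)$ is a finite sum of Bernoulli log-densities, each $C^1$ in $\theta$ near the true parameter. So the score is $\nabla_\theta \ell = \sum_{t,i<j} (A^{(t)}_{ij} - P_{ij})\,\nabla_\theta P_{ij}/[P_{ij}(1-P_{ij})]$, and it has mean zero. For an unbiased $\hat\theta$, differentiating $\EE_\theta[\hat\theta]=\theta$ under the integral (assumption 3) gives $\mathrm{Cov}(\hat\theta,\nabla\ell)=I_k$. The matrix Cauchy--Schwarz inequality then yields $\mathrm{Cov}(\hat\theta)\succeq \mathcal{I}(\theta)^{-1}$, using the invertibility from assumption 4. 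Taking traces and using unbiasedness, $\EE\|\hat\theta-\theta\|^2 = \tr\mathrm{Cov}(\hat\theta)\ge\tr(\mathcal{I}^{-1})$. Independence of the entries makes the covariance of the score additive, which recovers \cref{eq:fisher}.

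\textbf{Eigenbasis form.} Substitute \cref{eq:poly-sensitivity} into \cref{eq:fisher}. This step relies on \cref{prop:poly-trivial-holonomy}: with simple initial spectrum the eigenvectors $U$ are stationary, and they are also independent of $\theta$, because every $P(t;\theta)$ shares the eigenvectors of $P(0)$. Hence $\partial_{\theta_a}P=U\,\partial_{\theta_a}\Lambda\,U^\top$ exactly, and not merely to first order. The product of sensitivities is then
\[
\sum_{\iota,\gamma}\partial_a\lambda_\iota\,\partial_b\lambda_\gamma\,U_{i\iota}U_{j\iota}U_{i\gamma}U_{j\gamma}.
\]
Exchanging the finite sums over $(i<j)$ and $(\iota,\gamma)$ produces the stated $C_{\iota\gamma}(t)$. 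Setting $\iota=\gamma$ identifies $\mathcal{W}_\iota(t)=C_{\iota\iota}(t)$.

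\textbf{Baseline and cross terms.} The baseline follows from the standard fact that for positive definite $\mathcal{I}$, $(\mathcal{I}^{-1})_{aa}\ge 1/\mathcal{I}_{aa}$. To prove it, apply Cauchy--Schwarz in the $\mathcal{I}$-inner product: $1=(e_a^\top e_a)^2\le (e_a^\top\mathcal{I}^{-1}e_a)(e_a^\top\mathcal{I}e_a)$. Combined with $\mathrm{Var}(\hat\theta_a)\ge(\mathcal{I}^{-1})_{aa}$, this gives the displayed bound. Writing out $\mathcal{I}_{aa}$ splits it into the diagonal sum $\sum_t\sum_\iota(\partial_a\lambda_\iota)^2\mathcal{W}_\iota$ plus the $\iota\ne\gamma$ cross terms.

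The final sentence, that cross-eigenvalue terms ``can only increase'' the bound, is the main obstacle. As stated it is not literally true termwise, since cross terms can have either sign and enter $\mathcal{I}_{aa}$ in the denominator. The defensible version I would prove is this: off-diagonal couplings in $\mathcal{I}$ (between different parameters $a\ne b$) can only increase $(\mathcal{I}^{-1})_{aa}$ relative to $1/\mathcal{I}_{aa}$. That follows from the Schur complement, $(\mathcal{I}^{-1})_{aa}=1/(\mathcal{I}_{aa}-\mathcal{I}_{a,-a}\mathcal{I}_{-a,-a}^{-1}\mathcal{I}_{-a,a})$ with a nonnegative subtracted term, so equality holds exactly when the coupling vanishes. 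I would interpret ``cross terms'' in this sense, and note that the $\iota\ne\gamma$ contributions inside $\mathcal{I}_{aa}$ are merely absorbed into the exact value of $\mathcal{I}_{aa}$.
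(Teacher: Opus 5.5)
Your proposal is correct and follows the paper's proof. Both substitute \cref{eq:poly-sensitivity} into \cref{eq:fisher}, exchange the finite sums to isolate $C_{\iota\gamma}(t)$, and get the baseline from the same Cauchy--Schwarz step $1=(e_a^\top e_a)^2\le \mathcal{I}_{aa}(\mathcal{I}^{-1})_{aa}$. Your score-covariance derivation of the Cram\'er--Rao inequality, and your observation that $U$ does not depend on $\theta$ (so $\partial_{\theta_a}P=U\,\partial_{\theta_a}\Lambda\,U^\top$ exactly), make explicit what the paper takes for granted.

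Your handling of the final sentence is also the right one. The paper justifies it only by saying the baseline is tight when $\mathcal{I}$ is diagonal and loose when its off-diagonal terms are large. That is really a statement about the parameter couplings $\mathcal{I}_{ab}$, $a\neq b$, which are present even when every $C(t)$ is diagonal, and your Schur-complement identity is its exact form.

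The eigen-index cross terms themselves have no fixed sign. Take a two-block SBM with equal blocks and within- and between-block probabilities $p>q>0$. Then $U_{\cdot 1}=\mathbf{1}/\sqrt{n}$ and $U_{\cdot 2}=s/\sqrt{n}$ with $s\in\{\pm1\}^n$ the block indicator, so $C_{12}(t)=n^{-2}\sum_{i<j}s_is_j/[P_{ij}(1-P_{ij})]$. For large blocks its sign is that of $1/[p(1-p)]-1/[q(1-q)]$. For $K=0$ we have $\partial_{\alpha_0}\lambda_\iota=2t\lambda_\iota>0$, and the true bound coincides with $1/\mathcal{I}_{aa}$. So in that case the cross terms can lower the bound as easily as raise it.
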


\begin{proof}
By \cref{eq:poly-sensitivity}, the sensitivity of $P_{ij}$ decomposes as $\partial P_{ij} / \partial \theta_a = \sum_\iota U_{i\iota} U_{j\iota} (\partial \lambda_\iota / \partial \theta_a)$.
Substituting into \cref{eq:fisher}:
\begin{equation}
\mathcal{I}(\theta)_{ab} = \sum_t \sum_{i < j} \biggl[\sum_\iota U_{i\iota} U_{j\iota} \frac{\partial \lambda_\iota}{\partial \theta_a}\biggr] \biggl[\sum_\gamma U_{i\gamma} U_{j\gamma} \frac{\partial \lambda_\gamma}{\partial \theta_b}\biggr] \cdot \frac{1}{P_{ij}(1 - P_{ij})}
\end{equation}
Expanding the product over $\iota, \gamma$ yields:
\begin{equation}
\mathcal{I}(\theta)_{ab} = \sum_t \sum_{\iota, \gamma} \frac{\partial \lambda_\iota}{\partial \theta_a} \frac{\partial \lambda_\gamma}{\partial \theta_b} \underbrace{\sum_{i<j} \frac{U_{i\iota} U_{j\iota} U_{i\gamma} U_{j\gamma}}{P_{ij}(1 - P_{ij})}}_{C_{\iota\gamma}(t)}
\end{equation}

\textbf{On the cross-eigenvalue terms.}
Column orthogonality of $U$ gives $\sum_i U_{i\iota} U_{i\gamma} = \delta_{\iota\gamma}$, but the Fisher weights $1/[P_{ij}(1-P_{ij})]$ are not uniform.
In structured RDPGs (e.g., stochastic block models with equal-sized blocks), these weights are constant within blocks and the blocks are defined by the eigenvectors themselves, so the weights correlate with eigenvector components and cross terms $C_{\iota\gamma}$ for $\iota \neq \gamma$ can be nonzero (and are nonzero in explicit blockmodel configurations).
The Fisher information matrix is therefore \emph{not} diagonal in the eigenvalue basis for structured graphs.

However, the spectral gap scaling is unaffected: the eigenvalue sensitivity $\partial \lambda_\iota / \partial \theta_a$ is determined by the scalar ODE $\dot{\lambda}_\iota = 2 \sum_k \theta_k \lambda_\iota^{k+1}$.
The forcing term $2 \lambda_\iota^{a+1}$ scales as $\delta^{a+1}$ for $\iota = d$, so the $\lambda_d$ direction contributes $O(\delta^{a+b+2})$ to $\mathcal{I}_{ab}$ regardless of whether this contribution arises from diagonal or cross terms.
When $\delta \to 0$, the Fisher information matrix becomes ill-conditioned: the rows and columns involving high powers of $\delta$ shrink, making $\tr(\mathcal{I}^{-1})$ diverge.

The baseline bound $(\mathcal{I}^{-1})_{aa} \ge 1/\mathcal{I}_{aa}$ is immediate from Cauchy--Schwarz: $1 = (e_a^\top e_a)^2 \le (e_a^\top \mathcal{I} e_a)(e_a^\top \mathcal{I}^{-1} e_a) = \mathcal{I}_{aa} (\mathcal{I}^{-1})_{aa}$.
This bound is tight when $\mathcal{I}$ is diagonal and loose when off-diagonal terms are large (i.e., when parameter correlations from structured graphs make estimation strictly harder than the diagonal analysis suggests).
\end{proof}

If $\mathcal{I}(\theta)$ is singular, the same derivation applies on the identifiable subspace using the Moore--Penrose pseudoinverse $\mathcal{I}^+$.

The bound reveals three scaling regimes.
Each snapshot provides $O(n^2)$ conditionally independent Bernoulli observations, so the Fisher information grows as $n^2$.
Information accumulates linearly in $T$, assuming the dynamics produce sufficient variation in $P$ across time.
The spectral gap $\delta$ controls estimation difficulty through the eigenvalue sensitivities: the parameter $\theta_a$ receives signal proportional to $\delta^{a+1}$ from the smallest eigenvalue direction.
For the constant term $\theta_0$ (linear dynamics), the $\delta$-direction contributes $O(\delta^2)$ to the Fisher information; for the highest-degree term $\theta_K$, the contribution degrades to $O(\delta^{2K+2})$.
These scalings hold for both the diagonal entries and the full matrix; the cross-eigenvalue terms share the same $\delta$-dependence.

For linear dynamics (single scalar parameter $\alpha_0$), the Fisher information is exact with no cross-term ambiguity:

\begin{corollary}[Linear Dynamics: Explicit Fisher Information]\label{cor:linear-fisher}
For linear dynamics $\dot{X} = \alpha_0 X$ with snapshots indexed $t=0, \ldots, T$, the Fisher information is:
\begin{equation}
\mathcal{I}(\alpha_0) = 4 \sum_{t=0}^T t^2 \sum_{i < j} \frac{P_{ij}(t)}{1 - P_{ij}(t)}
\end{equation}
This scales as $\Theta(T^3 \cdot n^2)$ when edge probabilities remain bounded away from 0 and 1 throughout the trajectory, giving a Cram\'er--Rao lower bound of $\Omega(1/(T^3 n^2))$.
\end{corollary}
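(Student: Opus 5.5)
The plan is to specialize the general Fisher formula \cref{eq:fisher} to the one-parameter family and then bound the resulting double sum above and below. First I solve the dynamics explicitly. For $\dot{X} = \alpha_0 X$ we have $X(t) = e^{\alpha_0 t} X(0)$, so $P(t) = e^{2\alpha_0 t} P(0)$. Differentiating with respect to $\alpha_0$ gives $\partial P_{ij}(t)/\partial \alpha_0 = 2t\,P_{ij}(t)$. This agrees with the eigenvalue-sensitivity computation of the worked example: there the $\iota$-sum collapses to $P_{ij}(t)$ because $U$ is stationary, so the cross-eigenvalue terms $C_{\iota\gamma}$ of \cref{prop:cramer-rao} never appear separately. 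Substituting into \cref{eq:fisher} with $k=1$ gives
\[
\mathcal{I}(\alpha_0) = \sum_{t=0}^T \sum_{i<j} \frac{4t^2 P_{ij}(t)^2}{P_{ij}(t)(1-P_{ij}(t))},
\]
which simplifies to the stated formula. Interchanging differentiation and expectation is covered by the regularity assumptions listed in \cref{prop:cramer-rao}, and these hold here because $P$ stays in the open interval $(0,1)$.

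For the scaling, I assume that $c \le P_{ij}(t) \le 1-c$ for all $i,j$ and all $t \in \{0,\dots,T\}$, with a fixed $c>0$. Then each summand $P_{ij}/(1-P_{ij})$ lies in $[c/(1-c),\,(1-c)/c]$, so the inner sum is $\Theta(n^2)$, since there are $\binom{n}{2}$ pairs. The time sum gives $\sum_{t=0}^T t^2 = T(T+1)(2T+1)/6 = \Theta(T^3)$. Multiplying the two yields $\mathcal{I}(\alpha_0) = \Theta(T^3 n^2)$, with constants depending only on $c$. The Cram\'er--Rao bound of \cref{prop:cramer-rao}, in the scalar case $k=1$, then gives $\mathrm{Var}(\hat{\alpha}_0) \ge 1/\mathcal{I}(\alpha_0) = \Omega(1/(T^3 n^2))$ for any unbiased estimator. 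Positive definiteness holds automatically, because $\mathcal{I}>0$ as soon as $T\ge1$ and some $P_{ij}>0$.

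The main subtlety is the boundedness hypothesis rather than the algebra. Under $P(t) = e^{2\alpha_0 t}P(0)$, keeping $P$ bounded away from $0$ and $1$ for all $t \le T$ forces $|\alpha_0| T$ to be $O(1)$ when $\alpha_0 \ne 0$, or else requires the time grid to be rescaled. I will therefore state the $\Theta(T^3 n^2)$ claim explicitly under that hypothesis, reading it as time indices $t$ in units where the trajectory stays in the interior. I will not claim it uniformly in $\alpha_0$. A secondary point is that the sum is over $i<j$ only, which matches the likelihood written in the paper; including diagonal terms would change only constants.
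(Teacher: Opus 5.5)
Your proposal is correct and matches the paper's proof essentially step for step: the explicit solution $P(t)=e^{2\alpha_0 t}P(0)$, the sensitivity $2tP_{ij}(t)$, substitution into the Bernoulli Fisher formula, and the $\Theta(n^2)\cdot\Theta(T^3)$ scaling via $\sum_{t=0}^T t^2 = T(T+1)(2T+1)/6$. Your caveat that the boundedness hypothesis forces $|\alpha_0|T = O(1)$ when $\alpha_0 \neq 0$ agrees with the paper's own remark after the corollary that the $T^3$ scaling holds only in a short-time regime with $T_{\mathrm{eff}} \sim 1/|\alpha_0|$.
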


\begin{proof}
See \cref{app:deferred-proofs}.
\end{proof}

The cubic dependence on $T$ (rather than linear) reflects the fact that the sensitivity $\partial P_{ij} / \partial \alpha_0 = 2 t P_{ij}$ grows with time: later snapshots are more informative because the perturbation has had longer to propagate.
This is a general feature of dynamics estimation, in contrast to i.i.d.\ settings where information accumulates linearly.

The scaling $\Theta(T^3 \cdot n^2)$ holds only while the trajectory remains in the interior of the probability space.
Linear dynamics produce exponential growth or decay of eigenvalues: $\lambda_\iota(t) = \lambda_\iota(0) e^{2 \alpha_0 t}$.
For $\alpha_0 > 0$, the probabilities $P_{ij}$ grow toward 1, and the Fisher weights $P_{ij}/(1 - P_{ij})$ diverge; but the model becomes invalid as probabilities saturate, so the growth of $\mathcal{I}$ eventually halts.
For $\alpha_0 < 0$, probabilities decay toward 0, the sensitivities $\partial P_{ij} / \partial \alpha_0 = 2 t P_{ij}(t)$ shrink exponentially, and the Fisher information per snapshot saturates.
In both cases, the $T^3$ scaling describes a \emph{short-time regime} before boundary effects dominate; the effective time horizon is $T_{\mathrm{eff}} \sim 1/|\alpha_0|$.

\textbf{The duality.}
The statistical-geometric correspondence is sharpest for general (non-polynomial) symmetric dynamics, where the full Lyapunov structure appears.
The connection 1-form involves factors $1 / (\lambda_\iota + \lambda_\gamma)$ controlling gauge sensitivity (\cref{sec:fiber-bundle}); the vertical bracket norm in sectional curvature contains the same denominators (weighted as in \cref{prop:curvature-criterion}), so weak eigendirections become geometrically delicate when commutator energy lies in those directions and $\lambda_{d-1}, \lambda_d$ are small; and the Fisher information for estimating $M_{\iota\gamma}$ from the Lyapunov equation $\dot{P}_{\iota\gamma} = (\lambda_\iota + \lambda_\gamma) M_{\iota\gamma}$ involves the same factor $(\lambda_\iota + \lambda_\gamma)^2$.

An important qualification: this duality applies to \emph{full operator recovery} (estimating the action of $M$ on all eigenvector directions, including the weak subspace near $\lambda_d$).
For \emph{parsimonious parameterizations} (e.g., a single scalar parameter $\alpha_0$ shared across all eigenvalue directions), estimation can proceed primarily from the dominant modes: the contribution of $\lambda_1$ alone may suffice to identify $\alpha_0$, bypassing the ill-conditioned $\lambda_d$ direction entirely.
The geometric-statistical coupling bites when the parameter of interest specifically governs the weak subspace, or when the full operator $M$ must be recovered.

With this qualification, the correspondence remains tight: for any component of the dynamics that depends on the spectral gap, the same $\delta$ controls both curvature and Fisher information.
Networks near rank-deficiency ($\delta \to 0$, with $\lambda_{d-1}$ also small) are simultaneously harder to align (curvature $\sim 1/(\lambda_{d-1} + \delta)$), harder to interpolate (injectivity radius $\sim \sqrt{\delta}$), and harder to estimate statistically (Fisher information for the weakest direction $\sim \delta^2$).

For non-polynomial dynamics (e.g., Laplacian), sensitivity propagation is more complex because the eigenvectors of $P$ also evolve.
The additional eigenvector sensitivity contributes positively to the Fisher information (there are more directions in which observations carry signal), but it also introduces holonomy: the extra information can be entangled with gauge degrees of freedom in regions where \cref{prop:laplacian-holonomy} applies.
A complete minimax theory for dynamics estimation in the presence of holonomy remains an important open problem.

\begin{remark}
\textbf{Relationship to ASE estimation theory.}
\Cref{prop:cramer-rao} is derived from the Bernoulli likelihood of the observed adjacency matrices and applies to \emph{any} estimator of $\theta$, not to spectral methods specifically.
The bound is therefore conceptually distinct from the CLT for adjacency spectral embedding~\citep{athreya2016limit,athreya2017statistical}, the two-to-infinity perturbation theory~\citep{cape2019two}, and the efficiency results for spectral estimators of static network parameters~\citep{tang2022efficient,xie2021efficient}.
Those results characterize the accuracy of estimating \emph{latent positions} $X$ or \emph{block probability matrices} $B$ from a \emph{single} graph (or a fixed collection of graphs sharing the same $P$).
ASE is asymptotically unbiased with per-vertex error $O(1/\sqrt{n})$~\citep{athreya2016limit}, and the spectral estimator of SBM block probabilities achieves asymptotic efficiency~\citep{tang2022efficient}; for individual latent positions, a one-step procedure achieves local efficiency (matching the oracle MLE covariance, up to orthogonal transformation)~\citep{xie2021efficient}.
The minimax rate for latent position estimation under Frobenius loss is $\Theta(d/n)$ per vertex~\citep{xie2020optimal}, with two-to-infinity rates depending on the spectral gap~\citep{agterberg2023minimax}.

By contrast, \cref{prop:cramer-rao} concerns estimation of the \emph{dynamics parameters} $\theta$ from a \emph{time series} of graphs, a setting for which no prior efficiency theory exists.
The estimation target is qualitatively different: not the $O(nd)$ latent position coordinates, but the $O(1)$ parameters governing their temporal evolution.
The eigenvalue-direction decomposition of the Fisher information in \cref{prop:cramer-rao}, with weights $\mathcal{W}_\iota(t)$ involving eigenvector loadings, has no counterpart in the static theory.
Extending the existing minimax framework to parametric temporal models is an open problem that our Fisher information structure could help resolve.

The ``any unbiased estimator'' qualification deserves comment.
The CRB is a bound on the Bernoulli likelihood, which is well-defined for any $n$, so the bound itself does not require asymptotic arguments.
However, whether useful estimators of $\theta$ are unbiased at finite $n$ is a separate question.
Any estimator that first recovers $P$ spectrally and then fits $\theta$ inherits the finite-sample bias of eigendecomposition: the nonlinearity of the spectral map introduces bias, but this is $o(1/\sqrt{n})$~\citep{cape2019two} and does not affect the asymptotic bound.
Near rank-deficiency ($\delta \to 0$), the situation is more severe: the ASE collapses the weak eigenvalue direction, and the resulting bias in $\hat{\lambda}_d$ can dominate the signal $\lambda_d$ itself.
In this regime, the \emph{mean squared error} for parameters that depend on the $\lambda_d$ direction is likely dominated by bias rather than variance, and the CRB (which bounds variance alone) understates the true difficulty.
For finite-sample inference, the biased Cram\'er--Rao variant $\mathrm{Var}(\hat{\theta}) \ge (1 + b'(\theta))^2 / \mathcal{I}(\theta)$ provides the appropriate generalization, but characterizing the bias $b(\theta)$ as a function of $\delta$ remains open.
\end{remark}


\section{The Constructive Problem: Identifiability and Its Limits}\label{sec:constructive}

The obstructions in the previous sections characterize what is and isn't observable in principle.
We address the constructive question: given that observable dynamics exist, can we \emph{recover} them from spectral embeddings?
We establish a theoretical identifiability result showing that dynamics structure can, in principle, resolve gauge ambiguity.
We then discuss why this identifiability does not straightforwardly translate into a practical algorithm.

\subsection{Structure-constrained alignment}

The alignment problem is underdetermined: without additional information, many gauge choices produce plausible trajectories.
A natural idea is to use \emph{inductive bias about dynamics structure} to constrain which trajectories are admissible.

Suppose dynamics belong to a family $\mathcal{F}$ from \cref{sec:dynamics-families}. For instance, polynomial dynamics $\dot{X} = N(P) X$ with $N(P) = \sum_k \alpha_k P^k$.
These families have two properties that suggest they could regularize alignment:
they are \emph{horizontal} (symmetric $N$ ensures no gauge drift), and they have \emph{restricted structure} that random gauge errors would violate.

If the true dynamics belong to $\mathcal{F}$, then correct gauges produce trajectories fittable by some $f \in \mathcal{F}$, while wrong gauges produce trajectories requiring dynamics outside $\mathcal{F}$.
The dynamics family acts as regularization on the gauge choice.

We formulate this as a joint optimization problem:

\begin{definition}[Joint Alignment and Learning Problem]
Given ASE embeddings $\{\hat{X}^{(t)}\}_{t=0}^T$ and a dynamics family $\mathcal{F}$, find gauge corrections $\{Q_t \in O(d)\}$ and $f \in \mathcal{F}$ minimizing:
\begin{equation}
\mathcal{L}(\{Q_t\}, f) = \sum_{t=0}^{T-1} \|\hat{X}^{(t+1)} Q_{t+1} - \hat{X}^{(t)} Q_t - \delta t \, f(\hat{X}^{(t)} Q_t)\|_F^2
\end{equation}
\end{definition}

The objective measures how well the aligned discrete trajectory $\{\hat{X}^{(t)} Q_t\}$ is explained by the dynamics $f$: if the gauges are correct and $f$ captures the true dynamics, each step's displacement should match the predicted velocity.

\subsection{The identifiability principle}

The theoretical case for structure-constrained alignment rests on a clean separation between gauge artifacts and horizontal dynamics.

\begin{theorem}[Gauge Velocity Contamination]\label{thm:gauge-contamination}
Let $X(t)$ follow true dynamics $\dot{X} = NX$ with $N = N^\top$.
Let $S(t) \in O(d)$ be a time-varying (mis)gauge and define the gauged trajectory $\tilde{X}(t) = X(t) S(t)$.
Then
\begin{equation}
\dot{\tilde{X}} = N \tilde{X} + \tilde{X} \Omega
\end{equation}
where $\Omega(t) = S(t)^\top \dot{S}(t) \in \so(d)$ is skew-symmetric.

Moreover, if $\tilde{X}(t)$ has full column rank for all $t$ in an interval, then the following are equivalent on that interval:
\begin{enumerate}
\item there exists a symmetric matrix function $\tilde{N}(t) = \tilde{N}(t)^\top$ such that $\dot{\tilde{X}} = \tilde{N}(t) \tilde{X}$,
\item $\Omega(t) = 0$ (i.e., $S(t)$ is constant in time).
\end{enumerate}
\end{theorem}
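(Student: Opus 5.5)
The plan is to first verify the velocity identity by direct differentiation, and then prove the equivalence by reducing it to the uniqueness of solutions of the Lyapunov equation $G\Omega + \Omega G = 0$ with $G$ positive definite. This is the same mechanism used in \cref{prop:horizontal} and in the definition of the connection 1-form.

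\textbf{Velocity identity.} By the product rule, $\dot{\tilde X} = \dot X S + X\dot S = NXS + XSS^\top \dot S = N\tilde X + \tilde X\Omega$, using $SS^\top = I$. To see that $\Omega = S^\top\dot S$ is skew, differentiate $S^\top S = I$ to get $\dot S^\top S + S^\top\dot S = 0$, which says exactly $\Omega^\top = -\Omega$.

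\textbf{(2) $\Rightarrow$ (1).} If $\Omega \equiv 0$, take $\tilde N(t) = N$, which is symmetric.

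\textbf{(1) $\Rightarrow$ (2).} This is the substantive direction. Suppose $\dot{\tilde X} = \tilde N(t)\tilde X$ with $\tilde N$ symmetric. Comparing with the identity gives $(\tilde N - N)\tilde X = \tilde X\Omega$.

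The difficulty is that $\tilde N$ is an arbitrary symmetric matrix, possibly time-dependent and unrelated to $N$. We therefore cannot simply cancel it. Instead we compress onto the column space by left-multiplying by $\tilde X^\top$, which gives
\[
\tilde X^\top(\tilde N - N)\tilde X = G\Omega, \qquad G = \tilde X^\top\tilde X .
\]
The left side is symmetric because $\tilde N - N$ is symmetric. Hence $G\Omega = (G\Omega)^\top = \Omega^\top G = -\Omega G$, that is, $G\Omega + \Omega G = 0$.

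Full column rank makes $G$ positive definite. In the eigenbasis of $G$ the Lyapunov operator scales entry $(\iota,\gamma)$ by $\lambda_\iota + \lambda_\gamma > 0$, as noted after the definition of the connection 1-form. So this operator is invertible and $\Omega(t) = 0$ for every $t$ in the interval.

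Finally, $\Omega = 0$ means $S^\top\dot S = 0$, so $\dot S = S\Omega = 0$. Thus $S$ is constant on the (connected) interval.

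The only step requiring care is the compression by $\tilde X^\top$, which turns the unknown $\tilde N$ into a symmetric quantity. Everything after that is the Lyapunov uniqueness argument already established in the paper. Pointwise time dependence of $\tilde N$ causes no difficulty, because the argument is applied separately at each $t$.
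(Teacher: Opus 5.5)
Your proposal is correct and follows essentially the same route as the paper: you get the velocity identity from the product rule, with $\Omega = S^\top\dot S$ skew because $S^\top S = I$, and for the nontrivial direction you compress with $\tilde X^\top$, so that $G\Omega$ (with $G = \tilde X^\top\tilde X$) must be symmetric, whence $G\Omega + \Omega G = 0$ and $\Omega = 0$ since the Lyapunov operator is invertible for positive definite $G$. Your extra step $\dot S = S\Omega = 0$ on a connected interval just makes explicit what the paper leaves implicit.
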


\begin{proof}
By the product rule,
$\dot{\tilde{X}} = \dot{X} S + X \dot{S} = NXS + X\dot{S}$.
Using $X = \tilde{X} S^\top$ (since $S \in O(d)$), this becomes
$\dot{\tilde{X}} = N \tilde{X} + \tilde{X} (S^\top \dot{S}) = N \tilde{X} + \tilde{X} \Omega$,
where $\Omega = S^\top \dot{S} \in \so(d)$ because differentiating $S^\top S = I$ gives $\dot{S}^\top S + S^\top \dot{S} = 0$.

Now suppose $\dot{\tilde{X}} = \tilde{N} \tilde{X}$ for some symmetric (possibly time-varying) $\tilde{N}$ and that $\tilde{X}$ has full column rank, so $G = \tilde{X}^\top \tilde{X}$ is positive definite.
Left-multiplying by $\tilde{X}^\top$ gives
$\tilde{X}^\top \dot{\tilde{X}} = \tilde{X}^\top \tilde{N} \tilde{X}$,
and the right-hand side is symmetric because $\tilde{N}$ is symmetric.

On the other hand, from the decomposition above,
$\tilde{X}^\top \dot{\tilde{X}} = \tilde{X}^\top N \tilde{X} + G \Omega$.
The term $\tilde{X}^\top N \tilde{X}$ is symmetric, hence $G \Omega$ must be symmetric as well.
But if $G$ is symmetric positive definite and $\Omega$ is skew-symmetric, then $G \Omega$ is symmetric iff $\Omega = 0$ (equivalently, $G \Omega + \Omega G = 0$ implies $\Omega = 0$).
Therefore $\Omega = 0$, which means $S$ is constant in time.
The converse direction is immediate: if $\Omega = 0$ then $\dot{\tilde{X}} = N \tilde{X}$ is already of the required form with symmetric generator $N$.
\end{proof}

The mechanism is elegant: random ASE gauge errors $R^{(t)}$ introduce skew-symmetric contamination that \emph{cannot be absorbed} by symmetric dynamics.
Requiring symmetric dynamics implicitly selects gauges where the contamination vanishes.

The reader may notice an affinity with \cref{prop:horizontal}, which states that $\dot{X}$ is horizontal if and only if $X^\top \dot{X}$ is symmetric.
The two results use the same algebraic fact (a positive definite matrix times a nonzero skew-symmetric matrix is never symmetric) but answer different questions.
\Cref{prop:horizontal} characterizes directions at a single point: given a velocity $\dot{X}$, does it have a gauge component?
\Cref{thm:gauge-contamination} characterizes trajectories over time: given an entire trajectory $\tilde{X}(t)$ observed in the wrong gauge, can any symmetric dynamics explain it?
The former is a pointwise decomposition; the latter is an identifiability statement about the dynamics-gauge coupling along a path.

For specific families, this yields concrete algorithms.
For linear symmetric dynamics, one obtains an alternating scheme: fix gauges and solve a least-squares problem with symmetry constraint for the dynamics matrix $M$; fix $M$ and solve orthogonal Procrustes for each gauge.
For polynomial dynamics $\dot{X} = (\sum_k \alpha_k P^k) X$, the gauge invariance of $P = \hat{X} \hat{X}^\top$ simplifies the dynamics step to linear regression in the $\alpha_k$ coefficients.
Both have closed-form updates per step.

\subsection{The downstream pipeline: from trajectories to equations}

Suppose, optimistically, that the trajectory recovery problem were solved, that is, some method (structure-constrained alignment, a future $P$-level estimator, or domain-specific prior information) produced gauge-consistent estimates $\tilde{X}^{(t)}$ of the latent positions, up to noise.

\begin{remark}
\textbf{Sampling-averaging heuristic.}
When multiple independent network samples $A_1^{(t)}, \ldots, A_m^{(t)}$ are available at each time $t$, averaging
$\bar{A}^{(t)} = \frac{1}{m} \sum_i A_i^{(t)}$
reduces per-entry variance by a factor of $m$.
In standard dense perturbative regimes, one therefore expects roughly a $\sqrt{m}$ improvement in per-vertex embedding error.
We use this as a practical scaling heuristic, not as a proved rate in this manuscript.
\end{remark}

Given such a recovered trajectory, the remaining steps are well-established.
\emph{Universal Differential Equations} (UDEs)~\citep{rackauckas2020universal} provide a framework for learning dynamics that combine known mechanistic structure with neural network flexibility:
\begin{equation}
\dot{X} = g(f_{\text{known}}(X, \phi), f_{\text{NN}}(X, \theta))
\end{equation}
For RDPG dynamics, the known structure comes from the families in \cref{sec:dynamics-families}. For example, the polynomial architecture $\dot{X} = (\sum_k \alpha_k(\theta, X) P^k) X$ ensures horizontality by construction while allowing the coefficients to be learned.
Gradients flow through the ODE solver via adjoint sensitivity methods, and the gauge-consistent architecture constrains the hypothesis space.

Once a UDE is trained, \emph{symbolic regression}~\citep{symbolicregression} can extract interpretable closed-form equations.
By sampling state-velocity pairs $(X, f_\theta(X))$ from the trained model and searching for symbolic expressions of the form $N(P) X$, one can recover explicit dynamics equations completing the path from observed adjacency matrices to differential equations.
A subtlety: symbolic regression on latent variables $X$ is gauge-dependent.
Unless the UDE output is pre-aligned to a canonical frame (e.g., the eigenbasis of $P$), the regression may fail to find sparse coefficients because $X$ in the UDE gauge is a rotated version of the ``natural'' coordinates.
Working with \emph{gauge-invariant features} (eigenvalues of $P$, or the $P$-dynamics $\dot{P} = NP + PN$ directly) avoids this issue, and is natural for polynomial families where the dynamics separate by eigenvalue.

The bottleneck is obtaining $\tilde{X}^{(t)}$.
The UDE and symbolic regression machinery is mature; the trajectory recovery problem is not.

\subsection{Why the constructive problem remains open}

Despite the clean identifiability theory, translating \cref{thm:gauge-contamination} into a reliable recovery method faces three interacting difficulties.

\textbf{Finite-sample bias.}
Each spectral embedding $\hat{X}^{(t)}$ estimates the true positions with error $O_p(1/\sqrt{n})$.
When the true dynamics are slow (small $\|X^{(t+1)} - X^{(t)}\|$), the signal-to-noise ratio for alignment degrades: we are trying to distinguish small true displacements from estimation noise of comparable magnitude.
The alternating optimization inherits this: the dynamics step fits a combination of true signal and noise, while the gauge step aligns to a noisy target.
The two sources of error reinforce rather than cancel.

\textbf{Expressiveness of dynamics families.}
The identifiability argument needs $\mathcal{F}$ to be restrictive enough that wrong gauges force you outside the family.
But in finite samples and discrete time, surprisingly flexible behavior can sneak in.

One mechanism is numerical rather than philosophical: even for polynomial families, the regression problem for $(\alpha_0, \ldots, \alpha_K)$ can become ill-conditioned.
The matrices $I, P, P^2, \ldots$ increasingly align with the top eigenspace of $P$ (power-iteration style), so the design becomes Vandermonde-like in the eigenvalues and can be badly conditioned for $K \ge 2$.
In that regime, modest gauge-induced artifacts (or plain ASE noise) can be fitted by large, canceling coefficients in higher-order terms, making it hard to distinguish ``true dynamics'' from ``dynamics that explain the mis-gauged data''.

A second mechanism is geometric: for small $d$, the gauge degrees of freedom are low-dimensional (e.g., $\so(2)$ is one-dimensional), so a slowly varying mis-gauge can produce a coherent-looking drift that is not obviously nonsense at the discrete-time resolution.
The moral is the same: making $\mathcal{F}$ more restrictive risks excluding the truth; making it more expressive risks letting gauge artifacts (and noise) slip through.

\textbf{Holonomy and global consistency.}
Even if local alignment succeeds (each consecutive pair of frames is well-aligned), holonomy (\cref{sec:fiber-bundle}) means gauge drift can accumulate over long trajectories.
For dynamics whose \emph{base-space} path $P(t)$ forms (or nearly forms) loops in $\mathcal{B}$, a globally consistent choice of gauge along the entire loop may be impossible: the horizontal lift of a closed curve need not close.

This is not something you fix with more data; it is a geometric obstruction of the model class.
What more data \emph{can} do is reduce the statistical wobble on top of it. That is important, but it is a different issue.

\textbf{The fundamental tradeoff.}
The core difficulty is a tension between two requirements.
The dynamics family must be restrictive enough to reject gauge-contaminated trajectories (identifiability), but expressive enough to capture the true dynamics (model adequacy).
For the problem to be well-posed, the ``gap'' between what $\mathcal{F}$ can fit and what gauge contamination produces must exceed the noise level.
Characterizing when this gap is sufficient (as a function of $n$, $d$, $T$, the spectral gap of $P$, and the dynamics family) is the central open problem.

\begin{remark}
The difficulty is not that the identifiability principle fails theoretically.
\Cref{thm:gauge-contamination} is sharp: in the continuous-time, infinite-data limit, symmetric dynamics \emph{do} identify gauges.
The difficulty is that the finite-sample, discrete-time setting introduces errors that interact with the gauge-dynamics coupling in ways that the asymptotic theory does not capture.
Progress likely requires either stronger structural assumptions (e.g., working directly on $P$-dynamics to bypass the gauge problem entirely) or fundamentally different estimation strategies (e.g., information-theoretic approaches that characterize the minimax rate for dynamics recovery).
\end{remark}

\subsection{Anchor-based alignment: a tractable special case}\label{sec:anchor-alignment}

The difficulties above are genuine in the general case.
However, there is a natural special case in which the gauge problem admits a clean solution: when a subset of nodes is known (or assumed) to be stationary.

\textbf{The anchor principle.}
Suppose a subset $S \subset \{1, \ldots, n\}$ of nodes has $\dot{x}_i = 0$ for $i \in S$. These are ``anchor'' points whose latent positions do not move.
At each time $t$, ASE produces $\hat{X}^{(t)} = X^{(t)} R^{(t)} + E^{(t)}$ with a gauge factor $R^{(t)} \in O(d)$ (coming from the eigendecomposition choice) and estimation noise $E^{(t)}$.
For anchor nodes $i \in S$, we have $x_i^{(t)} = x_i^{(0)}$ for all $t$, so the anchor rows satisfy
$\hat{X}_S^{(t)} = X_S R^{(t)} + E_S^{(t)}$
where $X_S \in \RR^{|S| \times d}$ is constant.

Align the anchor rows to a reference frame (say $t = 0$) via Procrustes:
\begin{equation}
\hat{Q}^{(t)} = \arg \min_{Q \in O(d)} \|\hat{X}_S^{(t)} Q - \hat{X}_S^{(0)}\|_F.
\end{equation}
In the noiseless limit this recovers the relative gauge $(R^{(t)})^{-1} R^{(0)}$ directly, without using any dynamics model and without ever propagating a gauge sequentially.
Applying $\hat{Q}^{(t)}$ to the full embedding $\hat{X}^{(t)}$ then aligns all nodes to the $t=0$ gauge.

\textbf{Why this works.}
The anchor nodes provide a fixed reference frame that ``pins'' the gauge at each time step.
The alignment is \emph{global} (all frames aligned to $t = 0$, not sequentially) so there is no error accumulation.
The method requires no knowledge of the dynamics family $\mathcal{F}$: it works for polynomial, Laplacian, or any other dynamics, as long as the anchors are truly stationary.
Holonomy is irrelevant because we never attempt to propagate a gauge along the trajectory; each frame is independently aligned to the fixed reference.

\textbf{Conditions and limitations.}
The anchor principle requires:
\begin{enumerate}
\item \emph{Well-conditioned anchors:} the anchor position matrix $X_S \in \RR^{|S| \times d}$ must have full column rank for the Procrustes problem to be determined, and should be well-conditioned ($\sigma_d(X_S)$ not too small) for robustness to noise. Counting anchors alone is insufficient: $|S| \gg d$ anchors that are nearly collinear or clustered around a single point leave the rotation poorly constrained in orthogonal directions. This is the same stability story as in the classical orthogonal Procrustes problem: the rotation estimate becomes sensitive when the reference configuration is nearly rank-deficient. (We do not reproduce a full perturbation bound here; the key point for our purposes is the conditioning dependence. See, e.g., standard treatments of Procrustes perturbation theory, or interpret it directly through the SVD formula for Procrustes and Weyl/Wedin-type eigen/singular vector perturbation bounds.)
\item \emph{Known anchor identity:} we must know which nodes are stationary. In practice this could come from domain knowledge (e.g., established species in an ecological network, institutional nodes in a social network) or from a preliminary analysis identifying nodes with low temporal variance.
\item \emph{Approximately stationary anchors:} if anchor nodes drift slowly (with velocity $\|\dot{x}_i\| = \epsilon$ for $i \in S$), the alignment incurs a systematic bias that grows with the time horizon. A crude but useful rule of thumb is that anchor drift should remain small compared to the embedding noise accumulated over the window of interest; otherwise, the ``reference frame'' itself becomes moving.
\end{enumerate}

\begin{remark}
The third condition explains a phenomenon we observed in preliminary experiments: networks with a large block of slowly-moving nodes could be aligned successfully by naive Procrustes, while networks where all nodes moved at comparable rates could not.
The slow nodes were acting as \emph{de facto} anchors, stabilizing the gauge without our explicitly recognizing it.
\end{remark}

\textbf{When anchors are realistic.}
Several application domains naturally feature nodes with heterogeneous dynamics rates.
In ecological food webs, basal species (primary producers) often have stable trophic positions while higher-level consumers undergo rapid changes.
In social networks, institutional actors (organizations, permanent positions) may persist while individuals fluctuate.
In neural connectomes, structural hub regions may be stable on timescales over which peripheral connections rewire.
More generally, any system with a separation of timescales is a natural candidate: it has a slowly evolving ``backbone'' and a rapidly evolving periphery.

The anchor approach does not solve the general gauge problem.
It replaces a hard geometric-statistical problem with a domain knowledge requirement: identifying stationary nodes.
But when such knowledge is available, it provides a clean and computationally trivial path to gauge-consistent trajectories, enabling the downstream UDE pipeline of \cref{sec:constructive}.

\subsection{Numerical illustration}\label{sec:numerics}

We illustrate the theory using two controlled experiments on synthetic RDPG data.
The first demonstrates anchor-based alignment on gauge-equivariant polynomial dynamics, where alignment quality can be assessed independently of dynamics recovery.
The second demonstrates the full downstream pipeline: UDE training and symbolic regression; to do this more visibly, we move to a non-gauge-equivariant dynamics, where alignment quality directly impacts dynamics recovery.

\subsubsection{Experiment 1: Anchor-based alignment}\label{sec:numerics-anchor}

\textbf{Setup.}
We generate an RDPG with $n = 200$ nodes in $d = 2$ latent dimensions.
The initial positions $X(0)$ are drawn uniformly from $B_+^2 = \{x \in \RR^2 : x_1, x_2 \ge 0, \|x\| \le 1\}$, with a designated anchor set $S$ of size $n_a$.
The non-anchor nodes evolve under polynomial dynamics $\dot{X} = (\alpha_0 I + \alpha_1 P) X$ with $(\alpha_0, \alpha_1) = (-0.3, 0.003)$, while anchor nodes remain fixed: $\dot{x}_i = 0$ for $i \in S$.
We integrate the ODE to produce a trajectory $X(t)$ at $T$ equally-spaced times, generate $m = 3$ independent adjacency matrices per time step $A_k^{(t)} \sim \text{Bernoulli}(X(t) X(t)^\top)$ (and average them to reduce noise), compute ASE embeddings $\hat{X}^{(t)}$, and compare anchor-based alignment to sequential Procrustes.

Unless otherwise stated, we use $T = 50$ steps with $\delta t = 0.05$ (total time $2.45$), and report means and standard deviations over 20 Monte Carlo repetitions.

\textbf{Metrics.}
We measure alignment quality by the mean Procrustes error:
\begin{equation}
\mathrm{err}(t) = \frac{1}{n} \|\hat{X}^{(t)} \hat{Q}^{(t)} - X^{(t)} Q^*\|_F
\end{equation}
where $Q^*$ is the best global alignment to the true trajectory (accounting for the residual gauge at $t = 0$).
We sweep over four experimental conditions: anchor count $n_a$, trajectory length $T$, anchor drift rate $\epsilon$, and initial embedding norm scale (controlling signal-to-noise ratio).

\begin{figure}[ht]
\centering
\includegraphics[width=0.95\textwidth]{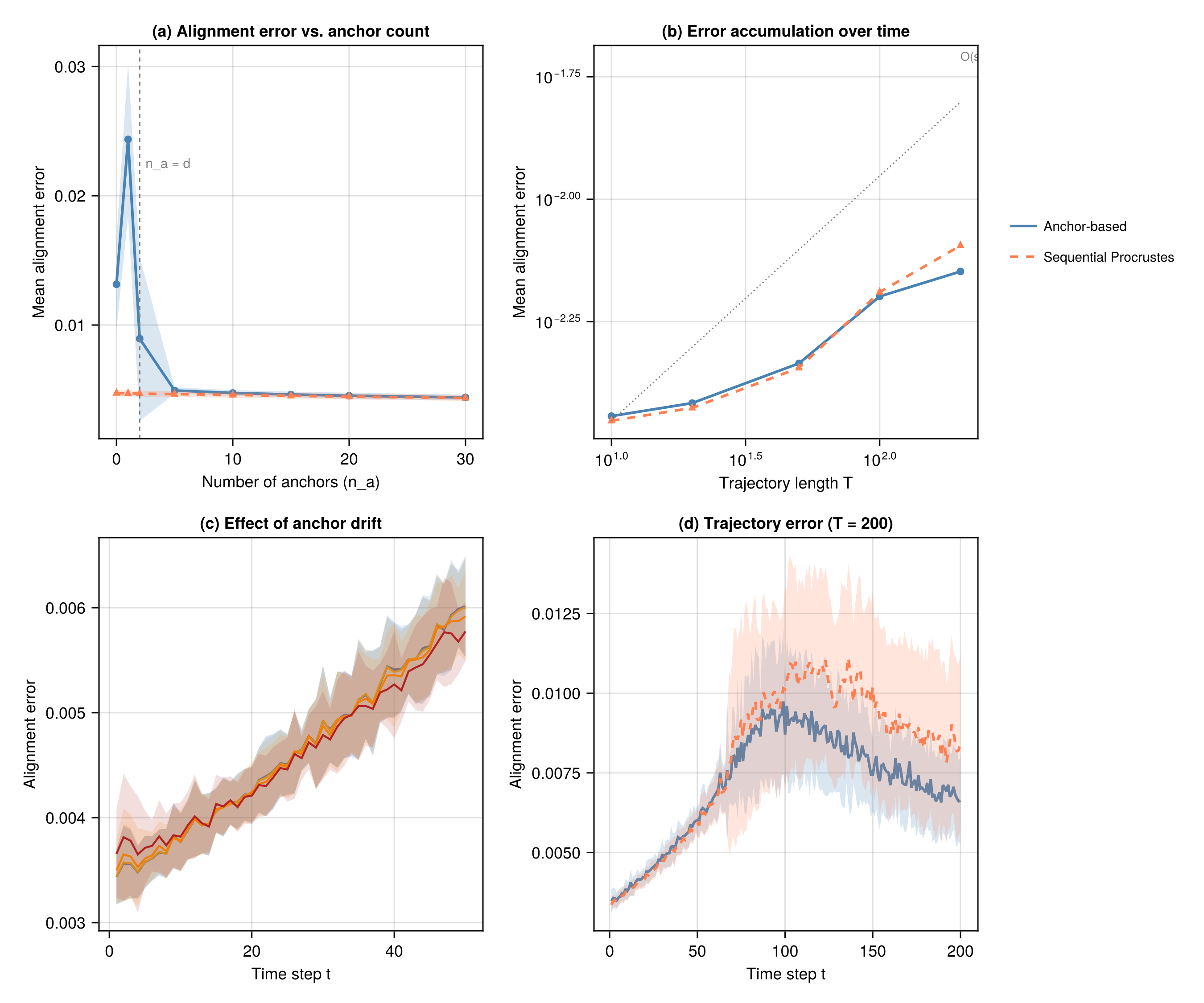}
\caption{
Anchor-based alignment experiment ($n = 200$, $d = 2$, polynomial dynamics $\dot{X} = (\alpha_0 I + \alpha_1 P) X$ with $m = 3$ Bernoulli samples per time step).
\textbf{(a)} Alignment error vs.\ number of anchor nodes: below $n_a = d = 2$ (dashed vertical line) Procrustes is underdetermined and alignment fails; with sufficient anchors, error stabilizes at the ASE noise floor.
\textbf{(b)} Error accumulation over trajectory length: anchor-based alignment (blue) remains bounded while sequential Procrustes (coral, dashed) grows with $T$, consistent with $O(\sqrt{T})$ drift accumulation.
\textbf{(c)} Effect of anchor drift rate $\epsilon$: with larger $\epsilon$, systematic bias from drifting anchors becomes visible.
\textbf{(d)} Per-timestep alignment error for the $T = 200$ trajectory: the anchor-based error (blue) remains approximately flat, while sequential Procrustes (coral) accumulates drift, with the gap widening toward later time steps.
Shaded bands show $\pm 1$ standard deviation across 20 Monte Carlo repetitions.
}
\label{fig:anchor-main}
\end{figure}

\begin{figure}[ht]
\centering
\includegraphics[width=0.95\textwidth]{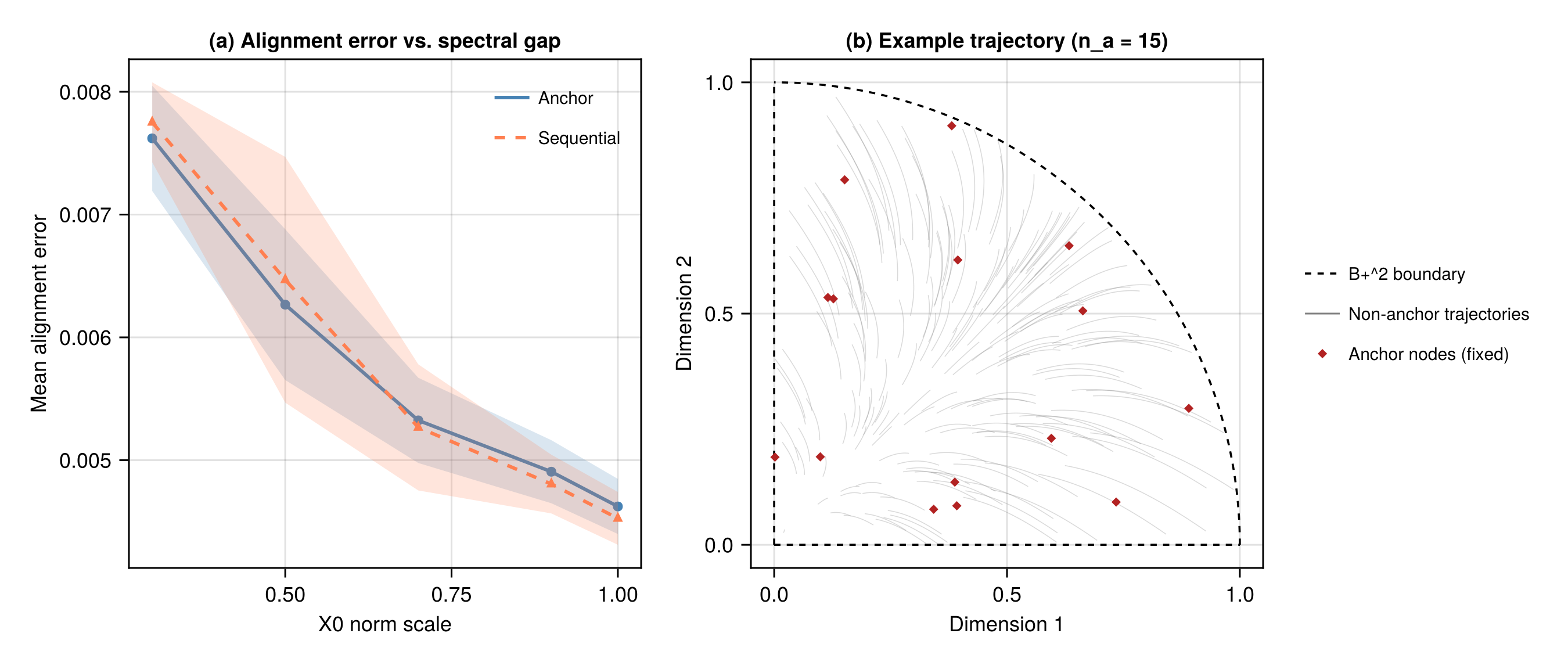}
\caption{
\textbf{(a)} Alignment error as a function of the initial embedding norm scale (proxy for signal strength): smaller norms yield weaker signals and less informative Bernoulli observations, increasing ASE noise and degrading alignment for both methods. Note that uniform scaling preserves the condition number $\sigma_1 / \sigma_2$ (constant at $\approx 2.2$); this experiment varies signal \emph{magnitude} rather than the spectral gap ratio.
\textbf{(b)} Phase portrait with $n_a = 15$ anchor nodes (red diamonds, stationary) and non-anchor nodes (gray trajectories evolving under polynomial dynamics) in $B_+^2$.
}
\label{fig:anchor-spectral}
\end{figure}

\textbf{Results.}
The experiment matches the basic theoretical story, and also shows where the ``easy'' intuitions break.

\begin{itemize}
\item \emph{Anchor count.} With too few anchors the Procrustes problem is underdetermined: for $d=2$, $n_a=1$ is a worst case and alignment fails, while $n_a=2$ is a knife-edge case with high variance (sometimes it works, sometimes it doesn't, depending on anchor geometry). Once $n_a$ is moderately larger (empirically $n_a \ge 5$ here), the anchor-based error stabilizes near the ASE noise floor (about $0.005$ in our setting), consistent with the idea that the anchors now robustly span the latent plane.

\item \emph{Trajectory length and drift accumulation.} For short trajectories ($T \le 50$), sequential Procrustes is marginally better (about $2\%$ in our sweep), because it uses all $n=200$ nodes at each step while anchor-based Procrustes uses only the $n_a = 15$ anchors. But sequential alignment accumulates drift: the crossover occurs around $T \approx 100$, and by $T = 200$ sequential Procrustes is about $13\%$ worse (ratio $\approx 1.13$), consistent with the expected $O(\sqrt{T})$ growth of accumulated rotation error.

\item \emph{Drifting anchors.} Over $T=50$ steps (total time $2.45$), modest anchor drift rates have little effect on the mean alignment error (because the metric averages over all 200 nodes and only 15 are anchors). The effect is more visible in downstream variability: at $\epsilon = 0.1$ the variability of recovered parameters increases noticeably, consistent with systematic bias entering through a slowly moving reference frame.
\end{itemize}

\textbf{Dynamics recovery is (mostly) gauge-free here.}
A useful sanity check is that, for polynomial dynamics, the coefficients $(\alpha_0, \alpha_1)$ are estimable directly from the gauge-invariant trajectory $P(t) = X(t) X(t)^\top$.
In our sweep, the standard errors of $\hat{\alpha}_0$ are essentially unchanged across anchor counts (around $0.008$), confirming that this particular parameter recovery problem does not fundamentally depend on aligning $X$.

However, the \emph{point estimates} do shift with the anchor count (e.g., $\hat{\alpha}_0$ moves from about $-0.282$ at $n_a=0$ to about $-0.250$ at $n_a=15$), because freezing a large subset of nodes changes the actual $P(t)$ trajectory. This is a data-generating change, not a gauge artifact.
This motivates the next question: when the dynamics depend on $X$-space coordinates (i.e., are not gauge-equivariant), does alignment quality directly control dynamics recovery?
Experiment 2 is designed to make that dependence unavoidable.

\subsubsection{Experiment 2: UDE pipeline with non-gauge-equivariant dynamics}\label{sec:numerics-ude}

\textbf{Motivation.}
The polynomial dynamics in Experiment 1 are gauge-equivariant: $P = XX^\top$ is rotation-invariant, so the dynamics parameters live in $P$-space and can be recovered without alignment.
To exercise the full trajectory-recovery $\to$ UDE pipeline, we now design dynamics that depend on $X$-space coordinates directly, so that alignment quality becomes a genuine bottleneck.

\textbf{Setup.}
We generate an RDPG with $n = 200$ nodes in $d = 3$ latent dimensions, with $n_a = 100$ anchor nodes organized into $K_c = 3$ communities near the vertices of $B_+^3$ (at positions $(0.7, 0.2, 0.2)$, $(0.2, 0.7, 0.2)$, $(0.2, 0.2, 0.7)$ plus Gaussian noise).
The 100 non-anchor nodes evolve under damped spiral dynamics around their community centroids $\mu_k$:
\begin{equation}
\dot{x}_i = (-\gamma + \beta \|x_i - \mu_k\|^2)(x_i - \mu_k) + \omega J(x_i - \mu_k)
\end{equation}
where $J = \frac{1}{\sqrt{3}} \bigl(\begin{smallmatrix} 0 & -1 & 1 \\ 1 & 0 & -1 \\ -1 & 1 & 0 \end{smallmatrix}\bigr)$ is the rotation generator around the $(1,1,1)/\sqrt{3}$ axis, and $(\gamma, \beta, \omega) = (0.3, -0.5, 1.0)$.
These dynamics are \emph{not} gauge-equivariant: the centroids $\mu_k$ and rotation axis live in $X$-space coordinates, so a misaligned trajectory $\tilde{X}(t)$ produces different offsets $\tilde{x}_i - \mu_k \neq x_i - \mu_k$, corrupting the dynamics structure.

We observe $m = 10$ independent adjacency matrices per time step over $T = 50$ steps at $\delta t = 0.1$, compute ASE embeddings, and apply three alignment conditions: anchor-based Procrustes, sequential Procrustes, and no alignment (raw ASE with random per-frame rotations).

\textbf{UDE architecture.}
Following the framework of \cref{sec:constructive}, we decompose the dynamics into known and unknown components:
\begin{equation}
\dot{x}_i = \underbrace{-\hat{\gamma} (x_i - \mu_k)}_{f_{\text{known}}} + \underbrace{f_\theta(x_i - \mu_k)}_{f_{\text{NN}}}
\end{equation}
where $\hat{\gamma}$ is a learnable scalar parameter and $f_\theta: \RR^3 \to \RR^3$ is a small neural network (architecture: $3 \to 16 \to 16 \to 3$ with $\tanh$ activations, $\approx 390$ parameters total).
The known part encodes the structural assumption that non-anchor nodes are attracted toward their community centroids; the unknown part captures whatever additional dynamics exist.
L2 regularization on the network weights ($\lambda = 10^{-3}$) discourages the network from absorbing the linear damping term, improving the identifiability of the known-unknown decomposition.
The UDE is trained by solving the neural ODE forward from the initial condition and minimizing the trajectory MSE via adjoint sensitivity methods.

\textbf{Primary metric: total dynamics MSE (the main story).}
Because the NN can absorb part of the linear damping term, the additive decomposition
$f_{\text{learned}}(\delta) = -\hat{\gamma} \delta + f_\theta(\delta)$
has an inherent identifiability caveat: different $(\hat{\gamma}, f_\theta)$ pairs can yield essentially the same total vector field.
So we evaluate what we actually care about, namely the \emph{total learned dynamics}, against the true dynamics
$f_{\text{true}}(\delta) = (-\gamma + \beta \|\delta\|^2) \delta + \omega J \delta$
on a held-out cloud of random test inputs $\delta \in [-0.3, 0.3]^3$ (2000 samples in our implementation).
This metric is intentionally invariant to how the model splits responsibility between $\hat{\gamma}$ and the NN.

\begin{figure}[ht]
\centering
\includegraphics[width=0.95\textwidth]{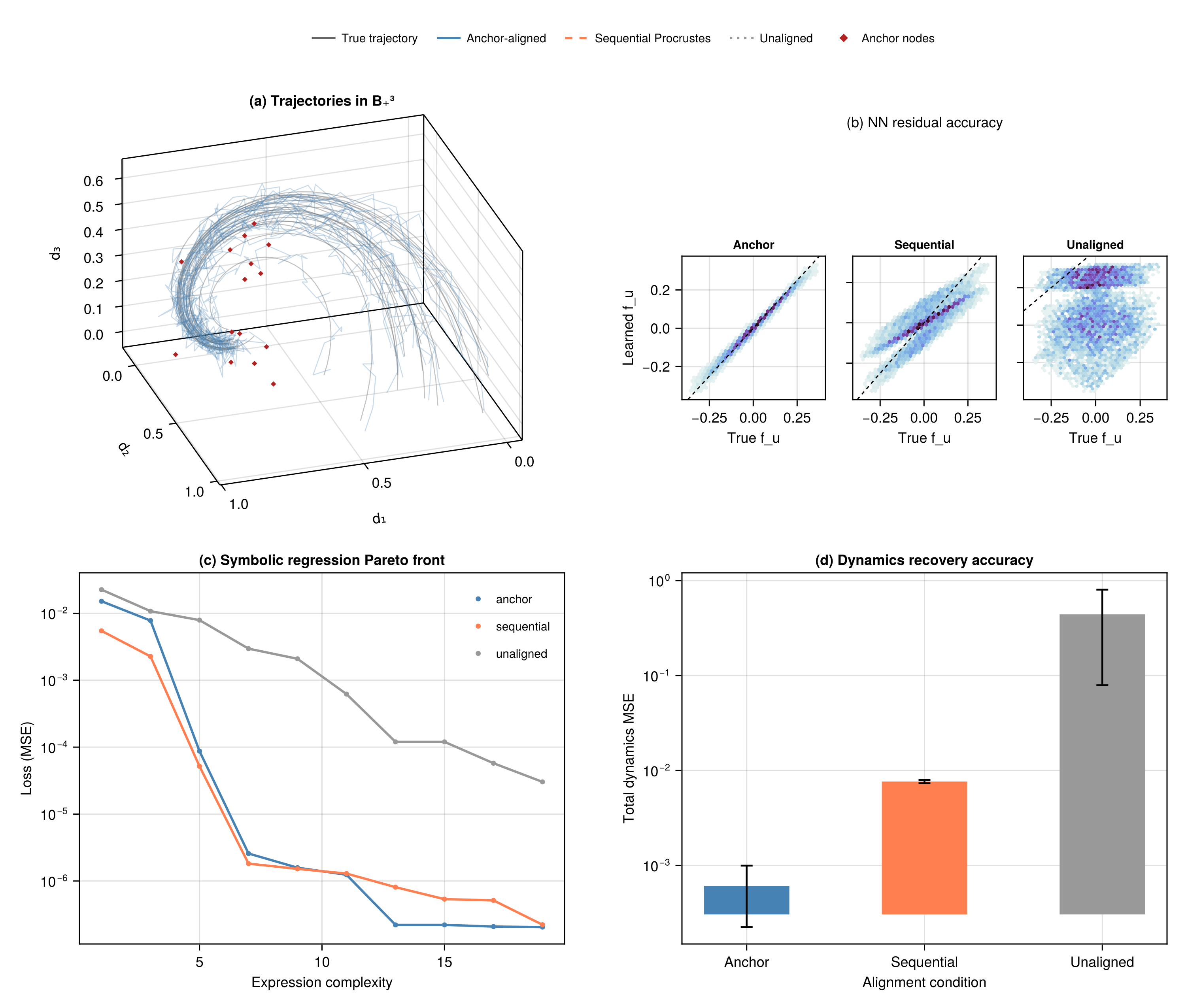}
\caption{
UDE pipeline experiment ($n = 200$, $d = 3$, damped spiral dynamics with rotation around $(1,1,1)/\sqrt{3}$, $m = 10$ Bernoulli samples per frame).
\textbf{(a)} True trajectories (gray) and anchor-aligned ASE (blue) in $B_+^3$; anchor nodes shown as red diamonds.
\textbf{(b)} Learned NN residual $f_\theta(\delta)$ vs.\ true residual $f_u(\delta)$: anchor-aligned data (left) produces tight agreement along the diagonal; sequential Procrustes (center) and unaligned data (right) degrade progressively.
\textbf{(c)} Symbolic regression Pareto fronts (loss vs.\ expression complexity): anchor-aligned data achieves $1$--$2$ orders of magnitude lower loss at each complexity level.
\textbf{(d)} Total dynamics MSE (log scale) by alignment condition: anchor alignment achieves MSE $\approx 6 \times 10^{-4}$, sequential Procrustes $\approx 8 \times 10^{-3}$ ($13\times$ worse), and unaligned $\approx 0.44$ ($\approx 700\times$ worse).
Error bars show $\pm 1$ standard deviation across 5 repetitions.
}
\label{fig:ude-pipeline}
\end{figure}

\textbf{Results.}
The total dynamics MSE separates the three alignment conditions extremely clearly (\cref{fig:ude-pipeline}d), across 5 Monte Carlo repetitions:

\begin{itemize}
\item \emph{Anchor-aligned:} mean MSE $\approx 6 \times 10^{-4}$ with standard deviation $\approx 4 \times 10^{-4}$, i.e., excellent recovery of the full vector field.
\item \emph{Sequential Procrustes:} mean MSE $\approx 7.6 \times 10^{-3}$ with standard deviation $\approx 3 \times 10^{-4}$, about $13\times$ worse.
\item \emph{Unaligned:} mean MSE $\approx 0.44$ with standard deviation $\approx 0.35$, about $700\times$ worse and highly variable.
\end{itemize}

The NN-residual MSE shows the same ordering even more starkly: anchor alignment yields NN residual MSE around $9 \times 10^{-4}$, sequential around $7.3 \times 10^{-3}$, and unaligned around $0.44$.
This is the mechanism in plain sight: when frames are coherently aligned, the NN learns the intended nonlinear+rotational residual; when frames are incoherently rotated, it mostly learns noise.

\textbf{Symbolic regression.}
Symbolic regression on the trained network's input-output pairs (\cref{fig:ude-pipeline}c) achieves Pareto-optimal expressions with losses $1$--$2$ orders of magnitude lower under anchor alignment than under the other conditions at each complexity level.
The anchor-aligned expressions contain the expected structural terms (products of coordinates and cross-terms consistent with the rotation generator $J$).
As noted in \cref{sec:constructive}, symbolic regression requires working with gauge-invariant features or a canonically aligned frame; the anchor alignment provides the latter.

\textbf{The identifiability caveat (and why it doesn't undercut the result).}
The learned damping parameter $\hat{\gamma}$ is not particularly stable across repetitions (anchor-aligned mean $\approx 0.255$ with std $\approx 0.081$ versus true $\gamma=0.3$), because the NN can still absorb part of the linear term even with L2 regularization.
This is a real limitation of the additive UDE decomposition, not an alignment issue.
The reason it does not undermine the main claim is that the \emph{total dynamics} are nevertheless recovered accurately under anchor alignment, and that is exactly what the total-dynamics MSE measures.

\textbf{Takeaway.}
When dynamics are gauge-equivariant ($\dot{X} = N(P) X$), alignment quality is irrelevant for parameter recovery (Experiment 1).
When dynamics depend on $X$-space coordinates, alignment quality directly controls the fidelity of the learned dynamics (Experiment 2).
Anchor-based alignment provides the gauge-consistent trajectories that the downstream UDE pipeline requires.


\section{Discussion and conclusion}\label{sec:discussion}

We hope to have convinced the reader that treating a Random Dot Product Graph as a dynamical system is appealing, even if there are geometric and statistical reasons that make it genuinely difficult.
The main gain we get by framing the temporal RDPG as a dynamical system is a rigorous theoretical understanding, together with a set of concrete failure modes, that clarifies where the difficulty originates and how they manifest.

We identified three core obstructions.
First, gauge freedom makes an entire subspace of latent motion invisible (\cref{thm:invisible}) and forces any latent-space method to confront alignment.
Second, realizability constrains which probability-matrix dynamics are even possible at fixed latent dimension.
Third, the trajectory-recovery problem shows that in the standard RDPG pipeline---where latent positions are estimated via ASE---naive time differencing predominantly measures gauge jitter rather than dynamics.

The differential geometric setting we adopted and developed (\cref{sec:fiber-bundle}) formalizes gauge freedom as a principal bundle and makes the connection/curvature/holonomy machinery explicit in the RDPG setting.
The connection 1-form formalizes ``gauge velocity,'' and curvature/holonomy explain why local alignment may fail to globalize.
The analysis of concrete dynamics allowed us to identify a robust holonomy contrast (\cref{sec:holonomy-dynamics}): polynomial dynamics act through commuting generators (trivial holonomy along trajectories), whereas Laplacian dynamics satisfy a proved local nontrivial-holonomy criterion and, in $d=2$, a full restricted holonomy $\mathrm{SO}(2)$ consequence; for $d \ge 3$, we give a conditional full-holonomy criterion and keep the generic full-holonomy statement conjectural.
The information-theoretic analysis (\cref{sec:info-theoretic}) highlights a deep statistics-geometry duality: the same spectral quantities that control curvature and injectivity also control Fisher information and ill-conditioning.

The difficulties we exposed do not leave us without hopes.
For example, the identifiability principle (\cref{thm:gauge-contamination}) shows that structure can resolve gauge ambiguity as symmetric dynamics cannot absorb skew-symmetric gauge contamination.
In some scenarios, the alignment problem \emph{can} be solved, and we showed it in a tractable special case with anchor nodes (\cref{sec:anchor-alignment}). When this is possible, the downstream UDE pipeline becomes practically feasible and attractive (as the numerical results illustrate).

Yet, the discrete, finite-sample regime is where the theory meets its limits.
Finite-sample spectral bias and noise interact with alignment; expressive dynamics families can fit artifacts; and holonomy can make global gauge consistency impossible over loops, even with perfect local alignment.
Bridging this gap is an open problem. In particular, it would be valuable to characterize when structure-constrained alignment is stable and when it is fundamentally unstable.

Several avenues seem promising for narrowing the constructive gap, both by bypassing gauge when possible and by quantifying the geometric effects when it cannot be avoided.
\begin{enumerate}
\item \emph{Work in $P$-space when possible.} The $P$-dynamics viewpoint (\cref{sec:p-dynamics}) bypasses gauge in principle. The Euclidean mirror framework of \citet{athreya2024euclidean} demonstrates that a gauge-invariant scalar summary of the $P$-trajectory (the spectral-norm Procrustes distance between network states) can be consistently estimated and is practically useful for change-point detection. However, learning dynamics requires the full matrix $\dot{P}$, not only the scalar inter-time distance. At present, we lack an estimation theory for recovering $\dot{P}$ and inverting the Lyapunov structure from Bernoulli samples. The $1/(\lambda_\iota + \lambda_\gamma)$ amplification suggests that minimax rates will depend sharply on the spectral gap; pinning this down would clarify when $P$-level dynamics inference is practically viable and whether the mirror's estimation guarantees can be extended to the richer differential structure required here.
\item \emph{Quantify holonomy, not only its existence.} Beyond generic nontriviality, one would like trajectory-dependent ``holonomy budgets'': how much gauge drift should be expected over a cycle as a function of curvature and of proximity to rank deficiency?
\item \emph{Achievability of the CRB.} For polynomial dynamics (trivial holonomy, explicit eigenvalue ODEs), a matching upper bound via likelihood-based estimators on the $P$-trajectory appears plausible. For Laplacian dynamics, holonomy suggests that achieving the CRB may require estimators that explicitly account for gauge transport.
\item \emph{Sparsity changes everything.} In the sparse regime $P = \rho_n XX^\top$ with $\rho_n \to 0$, injectivity shrinks, and both curvature and ASE error deteriorate. The combined effect on dynamics recovery remains largely unexplored.
\end{enumerate}

Our results have direct interest in applications. Learning interpretable dynamics from network data has the potential to be useful across neuroscience, ecology, and social systems. Characterizing which dynamics are possible \emph{within} the framework helps inform modelling decisions.
As for any model, the RDPG assumptions are structurally strong, and learned dynamics should be treated as hypotheses to be stress-tested against domain knowledge and model misspecification.


\section*{Acknowledgments}

We are extremely grateful to our student Connor Stirling James Smith, who pioneered this research in his Thesis work and had to deal with the challenges of working with non-gauge-equivariant dynamics before the mathematical settings was developed.

\section*{Code Availability}

Julia code for reproducing the results and data are available at \url{https://github.com/gvdr/RDPG_in_differential_geometry}.


\bibliographystyle{plainnat}
\bibliography{bibliography}


\clearpage
\appendix

\section{Deferred Proofs}\label{app:deferred-proofs}

\textbf{Proof of \cref{prop:measure-zero} (Measure Zero).}
Assume $\mathcal{F}$ is a finite-dimensional family with $p$ parameters (e.g., polynomial dynamics with $p = K+1$ coefficients), and restrict attention to a parameter domain on which solutions exist on $[0,T]$ and depend smoothly on parameters and initial conditions (standard under local Lipschitz conditions in $X$ and smoothness in parameters; see Lee, \emph{Introduction to Smooth Manifolds}, 2nd ed., 2013, book-level reference).
For each $(f, X_0) \in \mathcal{F} \times \RR^{n \times d}$, the Picard--Lindel\"of theorem yields a unique solution $X(\cdot; f, X_0) \in C^k([0,T], \RR^{n \times d})$ (same reference).
The solution map $\Phi: (f, X_0) \mapsto X(\cdot; f, X_0)$ is smooth, so $\mathcal{M}_\mathcal{F} = \im(\Phi)$ is the image of a smooth map from an $m$-dimensional manifold (with $m = p + nd$) into the path space $\mathcal{X} = C^k([0,T], \RR^{n \times d})$.

Choose $m + 1$ continuous linear functionals $\ell_1, \ldots, \ell_{m+1}$ on $\mathcal{X}$ (for example, evaluations of specific coordinates at distinct times) and define $L = (\ell_1, \ldots, \ell_{m+1}): \mathcal{X} \to \RR^{m+1}$.
Choose these functionals so that the pushforward $L_* \mu$ is non-degenerate (equivalently, has full-rank covariance), hence absolutely continuous with respect to Lebesgue measure on $\RR^{m+1}$.
Then $L \circ \Phi: \RR^m \to \RR^{m+1}$ is a smooth map between finite-dimensional spaces with $m < m+1$.
A standard consequence of Sard's theorem (equivalently, the area formula) is that the image of such a map has Lebesgue measure zero in $\RR^{m+1}$ (book-level reference: Milnor, \emph{Topology from the Differentiable Viewpoint}, 1997 ed., Sard theorem section).
Thus $L(\mathcal{M}_\mathcal{F}) \subseteq \im(L \circ \Phi)$ also has Lebesgue measure zero, and therefore
$\mu(\mathcal{M}_\mathcal{F}) \le (L_* \mu)(L(\mathcal{M}_\mathcal{F})) = 0$. \qed

\medskip

\textbf{Proof of \cref{cor:linear-fisher} (Linear Dynamics: Explicit Fisher Information).}
For linear dynamics $\dot{X} = \alpha_0 X$, the solution is $X(t) = e^{\alpha_0 t} X(0)$, giving $P(t) = e^{2 \alpha_0 t} P(0)$.
The sensitivity is $\partial P_{ij}(t) / \partial \alpha_0 = 2t e^{2 \alpha_0 t} P_{ij}(0) = 2t P_{ij}(t)$.
Since there is a single scalar parameter, the Fisher information is:
\begin{equation}
\mathcal{I}(\alpha_0) = \sum_{t=0}^T \sum_{i < j} \biggl(\frac{\partial P_{ij}(t)}{\partial \alpha_0}\biggr)^2 \cdot \frac{1}{P_{ij}(t)(1 - P_{ij}(t))} = 4 \sum_{t=0}^T t^2 \sum_{i < j} \frac{P_{ij}(t)}{1 - P_{ij}(t)}
\end{equation}
For the scaling claim: when $P_{ij}(t) \in [\epsilon, 1 - \epsilon]$ for some $\epsilon > 0$, the inner sum $\sum_{i < j} P_{ij} / (1 - P_{ij})$ is $\Theta(n^2)$ (there are $\binom{n}{2}$ terms, each bounded away from zero and infinity).
The outer sum $\sum_{t=0}^T t^2 = T(T+1)(2T+1)/6 = \Theta(T^3)$.
Therefore $\mathcal{I}(\alpha_0) = \Theta(T^3 \cdot n^2)$, giving a Cram\'er--Rao lower bound $\mathrm{Var}(\hat{\alpha}_0) \ge \mathcal{I}^{-1} = \Omega(1/(T^3 n^2))$. \qed

\section{Vertical Projection onto the Fiber}\label{app:vertical-projection}

We derive the formula for projecting a tangent vector $Z \in T_X \RR_*^{n \times d}$ onto the vertical subspace $\mathcal{V}_X = \{X\Omega : \Omega \in \so(d)\}$.

The vertical component $Z^{\mathcal{V}} = X\Omega^*$ is defined by the orthogonality condition $Z - X\Omega^* \in \mathcal{H}_X$ (the remainder is horizontal).
By \cref{prop:horizontal}, a vector $W$ at $X$ is horizontal if and only if $X^\top W$ is symmetric.
Applying this to $W = Z - X\Omega^*$:
\begin{equation}
X^\top (Z - X\Omega^*) = X^\top Z - G\Omega^* \quad \text{must be symmetric}
\end{equation}
where $G = X^\top X$ is positive definite.
Decomposing $X^\top Z$ into symmetric and skew-symmetric parts, $X^\top Z = \sym(X^\top Z) + \skewop(X^\top Z)$, the condition becomes:
$\skewop(G\Omega^*) = \skewop(X^\top Z)$.

Since $\Omega^*$ is skew-symmetric and $G$ is symmetric, the skew-symmetric part of $G\Omega^*$ is:
\begin{equation}
\skewop(G\Omega^*) = \frac{1}{2} (G\Omega^* - (G\Omega^*)^\top) = \frac{1}{2} (G\Omega^* + \Omega^* G)
\end{equation}
where the last step uses $\Omega^{*\top} = -\Omega^*$ and $G^\top = G$.
Equating:
\begin{equation}
G\Omega^* + \Omega^* G = 2 \skewop(X^\top Z)
\end{equation}

This is a Lyapunov equation in $\Omega^*$.
Since $G$ is positive definite, all sums $\lambda_\iota + \lambda_\gamma > 0$, so the equation has a unique solution.
In the eigenbasis of $G = \diag(\lambda_1, \ldots, \lambda_d)$, the solution is elementwise:
\begin{equation}
\Omega^*_{\iota\gamma} = \frac{2 \skewop(X^\top Z)_{\iota\gamma}}{\lambda_\iota + \lambda_\gamma}
\end{equation}

For the application in \cref{prop:curvature-criterion}, $Z = SX$ with $S$ skew-symmetric.
Then $X^\top Z = X^\top S X$, which is already skew-symmetric (since $(X^\top S X)^\top = X^\top S^\top X = -X^\top S X$), so $\skewop(X^\top S X) = X^\top S X$ and the Lyapunov equation becomes $G\Omega^* + \Omega^* G = 2 X^\top S X$.

\section{Norm of the Vertical Bracket Component}\label{app:vertical-norm}

We derive the explicit formula for $\|[\bar{\xi}_1, \bar{\xi}_2]^{\mathcal{V}}\|^2$ stated in \cref{prop:curvature-criterion}.

The vertical component is $X\Omega^*$ with $\Omega^*_{\iota\gamma} = 2(X^\top S X)_{\iota\gamma} / (\lambda_\iota + \lambda_\gamma)$ where $S = -[M_1, M_2]$ (\cref{app:vertical-projection}).
Its squared norm in the ambient Euclidean metric is:
\begin{equation}
\|X\Omega^*\|^2 = \tr((X\Omega^*)^\top X\Omega^*) = \tr(\Omega^{*\top} G \Omega^*)
\end{equation}

Working in the eigenbasis of $G = \diag(\lambda_1, \ldots, \lambda_d)$:
\begin{equation}
\tr(\Omega^{*\top} G \Omega^*) = \sum_{\iota, \gamma} \lambda_\iota (\Omega^*_{\iota\gamma})^2
\end{equation}
(using $\tr(A^\top B A) = \sum_a b_a \sum_b A_{ab}^2$ when $B = \diag(b_1, \ldots, b_d)$; diagonal terms vanish since $\Omega^*_{\iota\iota} = 0$).
Collecting the pair $(\iota, \gamma)$ and $(\gamma, \iota)$ for $\iota < \gamma$ and using skew-symmetry $(\Omega^*_{\gamma\iota})^2 = (\Omega^*_{\iota\gamma})^2$:
\begin{equation}
= \sum_{\iota < \gamma} (\lambda_\iota + \lambda_\gamma)(\Omega^*_{\iota\gamma})^2
\end{equation}

Substituting $\Omega^*_{\iota\gamma} = 2(X^\top S X)_{\iota\gamma} / (\lambda_\iota + \lambda_\gamma)$:
\begin{equation}
= \sum_{\iota < \gamma} (\lambda_\iota + \lambda_\gamma) \cdot \frac{4(X^\top S X)_{\iota\gamma}^2}{(\lambda_\iota + \lambda_\gamma)^2} = 4 \sum_{\iota < \gamma} \frac{(X^\top S X)_{\iota\gamma}^2}{\lambda_\iota + \lambda_\gamma}
\end{equation}

It remains to express $(X^\top S X)_{\iota\gamma}$ in terms of $(U^\top [M_1, M_2] U)_{\iota\gamma}$.
Writing $X = U \Lambda^{1/2}$ in the canonical gauge (where $U$ has orthonormal columns spanning $\col(X)$ and $\Lambda = \diag(\lambda_1, \ldots, \lambda_d)$):
$X^\top S X = \Lambda^{1/2} U^\top S U \Lambda^{1/2}$
so $(X^\top S X)_{\iota\gamma} = \sqrt{\lambda_\iota} (U^\top S U)_{\iota\gamma} \sqrt{\lambda_\gamma}$.
Since $S = -[M_1, M_2]$:
$(X^\top S X)_{\iota\gamma}^2 = \lambda_\iota \lambda_\gamma (U^\top [M_1, M_2] U)_{\iota\gamma}^2$.

Substituting:
\begin{equation}
\|X\Omega^*\|^2 = 4 \sum_{\iota < \gamma} \frac{\lambda_\iota \lambda_\gamma}{\lambda_\iota + \lambda_\gamma} [(U^\top [M_1, M_2] U)_{\iota\gamma}]^2
\end{equation}

This is the formula stated in \cref{prop:curvature-criterion}.
The $1 / (\lambda_\iota + \lambda_\gamma)$ weighting shows that curvature is amplified when both eigenvalues in a pair are small, matching the connection coefficient structure of the fiber bundle.

\section{Extension to Directed Graphs}\label{app:directed}

This appendix is a sketch intended to indicate how the framework adapts to directed graphs. Details are omitted.

For directed graphs, each node has source position $g_i$ and target position $r_i$.
The probability matrix is $P = GR^\top$ (not symmetric).

The gauge group becomes $(G, R) \sim (GQ, RQ)$ for $Q \in O(d)$, and the invisible dynamics are $\dot{G} = GA$, $\dot{R} = RA$ with $A \in \so(d)$.

The gauge-consistent architecture generalizes to:
\begin{equation}
\dot{G} = N_G(P) G, \quad \dot{R} = N_R(P) R
\end{equation}
with appropriate symmetry constraints on $(N_G, N_R)$ ensuring that the induced dynamics on $P = GR^\top$ are well-defined.
The fiber bundle framework carries over: the total space is pairs $(G, R) \in \RR_*^{n \times d} \times \RR_*^{n \times d}$, the structure group is $O(d)$ acting diagonally, and the connection, curvature, and holonomy theory applies with the same Lyapunov-type structures, though the asymmetry of $P$ introduces additional subtleties in the spectral decomposition.

\end{document}